\date{}
\begin{document}
    \newtheorem{theorem}{Theorem}[section]
    \newtheorem{lemma}[theorem]{Lemma}
    \newtheorem{proposition}[theorem]{Proposition}
    \newtheorem{corollary}[theorem]{Corollary}
    \newtheorem{sublemma}[theorem]{Sublemma}
    \newtheorem{definition}[theorem]{Definition}

\title{The minimum distance of classical and quantum turbo-codes}

\author{Mamdouh Abbara\thanks{INRIA, Equipe Secret, Domaine de Voluceau BP 105, F-78153 Le Chesnay cedex, France.}
\and 
Jean-Pierre Tillich\raisebox{0.5ex}{\scriptsize *}}

\maketitle

\begin{abstract}
We present a theory of quantum stabilizer turbo-encoders with unbounded minimum distance. This theory is presented under a framework common to both classical and quantum turbo-encoding theory. The main conditions to have an unbounded minimum distance are that the inner seed encoder has to be recursive, and either systematic or with a totally recursive truncated decoder. This last condition has been introduced in order to obtain a theory viable in the quantum stabilizer case, since it was known that in this case the inner seed encoder could not be recursive and systematic in the same time.
\end{abstract}

\section{Introduction}

Turbo-codes are a class of codes with very good properties. Provided that the outer encoder and the inner encoder are chosen according to certain requirements, namely, that the minimum distance $d_c$ of the outer encoder is greater than or equal to $3$, and the inner encoder is recursive and catastrophic, then the random turbo-encoders built by linking these encoders by the means of an intermediate random interleaver have a minimum distance which behaves asymptotically as $N^{\frac{d_c-2}{d_c}}$ where $N$ is the size of the input; if the minimum distance of the outer encoder is equal to $2$, then the minimum distance attains $\log N$ \textit{at best} for a particular choice of the interleaver, whereas the minimum distance is bounded for a randomly chosen interleaver. On the other hand, if the inner encoder is moreover non-catastrophic, these turbo-codes show good decoding performances for the symmetric channel as well as for the erasure channel, under a regime near to the Shannon limit, and with an iterative decoding algorithm with linear complexity in $N$. The property of having an unbounded minimum distance is lost if the inner encoder is either non recursive or non systematic, and the linear iterative decoding algorithm becomes useless if the inner encoder is catastrophic, since the decoder can not extract any useful information about the local probability distribution on the state of each bit.

\subsection{The classical argument and where it fails in the quantum case}

Attempts \cite{PTO09a} were made to design quantum stabilizer turbo-codes with unbounded minimum distance and good decoding performance. However, it was found in \cite{PTO09a} that quantum inner convolutional encoders can not be recursive and non catastrophic in the same time, nor can they be, as a consequence, recursive and systematic. In this paper, we overcome the difficulty and introduce a different set of conditions under which quantum stabilizer turbo-encoders have an unbounded minimum distance. Our result is partially inspired by \cite{KU98a}. The counting argument we adapt in this paper comes by \cite{KU98a} and has the following outline. First, upper bound the number $a_{out}(w)$ of possible inputs of the outer encoder for which the output has weight $w$, and the number $a_{in}(w,\le d)$ of possible inputs of the inner encoder for which the output has weight $\le d$. Using the randomness of the interleaver, this implies an upper bound on the number of possible inputs of the turbo-encoder such that the weight of the intermediate output (after applying the outer encoder) is $w$ and the weight of the final output is $\le d$. By summing this upper bound over all possible values of $w$ one gets an upper bound on the number of possible inputs for which the output weight is less than $d$. Kahale and Urbanke show that if $d$ is $O(N^{\frac{d_c-2}{d_c}})$, such a sum tends to $0$ as $N$ tends to infinity. A very crucial hypothesis in the counting argument which can not be used in the quantum case is that one can suppose that $w \le d$ because the inner encoder is systematic. This is where a new hypothesis has to be made in the quantum case, in order to have an acceptable upper bound for the sum in the new regime where $w > d$. A possible new hypothesis we investigate in this paper is that the truncated decoder, ie the decoder where the ancillary positions are thrown, corresponding to the inner encoder is \textit{totally recursive}.

\subsection{Results obtained under the common framework}

In order to highlight the similarities and the differences between the classical and the quantum stabilizer settings, we define a formal framework to which both settings correspond as a particular case. The main idea behind this framework is to consider encoders as operations which act on a group of errors ${\cal{P}}$ ($\mathbb{F}_2$ or the factor Pauli group depending on the settings) rather on the real input states. Under this formalism, a turbo-encoder is the concatenation of three operations: an outer encoder $C_{out}^{\otimes N_{out}}$ equal to the blockwise repetition of a seed encoder $C_{out}$, a random interleaver, and an inner convolutional encoder ${C_{in}}_{N_{in}}$. We generalize the concept of \textit{recursive} and \textit{systematic} encoders, and introduce another concept of \textit{total recursiveness}. We also write the definition of the \textit{distance} $d_c$ of an encoder, and introduce the \textit{degenerate distance} $d_q$. This distance is not different from $d_c$ in the classical setting, but it has a very important role in the quantum case since $d_q$ comes instead of $d_c$ in the expression of the lower bound on the minimum distance of the turbo-encoder. Some concepts like the \textit{detours} of a convolutional encoder are more delicate and need to be redefined to comply with the quantum case. We then show the following results. The first result is only useful in the quantum setting since this is where $|{\cal{P}}| > 2$. The second result is only useful in the classical setting, since in the quantum setting the conditions on the inner encoder are impossible to satisfy (see  \cite{PTO09a}). Actually, the second result corresponds to the classical result stated by Kahale and Urbanke in \cite{KU98a} and the proof hereby follows the unpublished proof of Kahale and Urbanke.

\begin{theorem}\label{Th1}
Consider a turbo-encoder $T_N$ of size $N$, where the interleaver is randomly chosen according to a uniform distribution. Let $d_c$ and $d_q$ be respectively the distance and the degenerate distance of the outer seed encoder. If all the following conditions are realized:
\begin{itemize}
 \item $|{\cal{P}}| > 2$
 \item $d_q > 2$ (case $1$) or $d_c > d_q = 2$ (case $2$)
 \item the inner seed encoder is recursive, and the associated truncated decoder is totally recursive
\end{itemize}
then with probability going to $1$ as $N \rightarrow \infty$, the distance of the turbo-encoder is greater than:
\begin{itemize}
 \item case $1$: $N^{\alpha}$ for all $\alpha < \frac{d_q-2}{d_q}$
 \item case $2$: $\alpha \frac{\log N}{\log \log N}$ for all $\alpha < d_c - 2$
\end{itemize}
\end{theorem}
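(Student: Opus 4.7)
The plan is to run a counting / union-bound argument à la Kahale--Urbanke in the common framework of the paper: bound the expected number $\mathbb{E}[Z_{\le d}]$ of non-trivial inputs of the turbo-encoder whose image has weight at most $d$, and then invoke Markov to conclude that with probability $1-o(1)$ no such input exists as soon as $d$ lies below the stated threshold. Stratifying according to the weight $w$ of the intermediate word (the output of the outer encoder, equivalently the input to the inner encoder after the random interleaver) and using that the interleaver is a uniformly random permutation, one gets
\[
\mathbb{E}[Z_{\le d}] \;\le\; \sum_{w \ge 1} a_{out}(w)\, a_{in}(w,\le d)\, \frac{1}{\binom{N}{w}(|\mathcal{P}|-1)^{w}}.
\]
The whole argument then reduces to showing that this sum is $o(1)$ in each of the two target regimes.

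Next I would estimate the two weight enumerators. For $a_{out}(w)$ I would exploit the block structure $C_{out}^{\otimes N_{out}}$ together with the fact that each nonzero block contributes weight at least $d_q$ to the output: the point is that the \emph{degenerate} distance, not the ordinary one, controls typical counting in the quantum setting because codewords equivalent modulo the stabilizer contribute identically to the weight enumerator; this gives the usual bound of the form $a_{out}(w) \le \mathrm{poly}(w)\binom{N_{out}}{\lceil w/d_q\rceil}$, sharpened in Case~2 (where $d_q=2$) by a polynomial factor depending on $d_c$. For $a_{in}(w,\le d)$ I would decompose each inner input into detours of the convolutional encoder: recursiveness of the inner seed encoder provides the familiar lower bound on the \emph{output} weight carried by each detour, while the new hypothesis that the truncated decoder is totally recursive provides the symmetric lower bound on the \emph{input} weight of each detour. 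This latter bound is precisely what permits a nontrivial estimate on $a_{in}(w,\le d)$ in the regime $w>d$, which in the classical systematic setting of Kahale--Urbanke was empty since $w\le d$ there was automatic.

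Plugging the two bounds into the sum and splitting it into the regimes $w\le d$ (handled as in the classical proof) and $w>d$ (handled via the new detour lower bound on input weight), I would verify by a routine term-by-term comparison that the sum tends to $0$ whenever $d\le N^{\alpha}$ with $\alpha<(d_q-2)/d_q$ in Case~1, and whenever $d\le \alpha(\log N)/\log\log N$ with $\alpha<d_c-2$ in Case~2; the logarithmic threshold in the second case is forced by the fact that at $d_q=2$ the factor $\binom{N}{w}^{-1}$ only beats $a_{out}(w)$ by a subpolynomial margin, and the extra gain $d_c-2$ comes from the refined outer count. The main obstacle, and the genuinely new ingredient relative to Kahale--Urbanke, is the $w>d$ regime: in the absence of systematicity one must extract from total recursiveness of the truncated decoder a quantitatively sharp lower bound on the input weight per detour and then count detour patterns tightly enough for the small factor $\binom{N}{w}^{-1}(|\mathcal{P}|-1)^{-w}$ to dominate $a_{in}(w,\le d)$ uniformly in $w$.
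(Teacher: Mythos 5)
Your high-level plan matches the paper's strategy: a union bound over input sequences stratified by the intermediate weight $w$, outer and inner weight enumerators controlled via $d_q$, detours, recursiveness (output-side), and total recursiveness of the truncated decoder (input-side). Your explanation of what the two recursiveness hypotheses buy, and why the $w>d$ regime is the new difficulty, is on the mark.

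However, there is a genuine gap in the way you propose to close the $w>d$ regime. You treat all of $w>d$ as one block and hope that the inner-encoder bound obtained from total recursiveness, combined with the prefactor $\binom{N}{w}^{-1}(|\mathcal{P}|-1)^{-w}$, uniformly kills the sum. This does not work once $w$ is a constant fraction of $N$. The outer count $a^{\otimes N}(w)$ is only well separated from $(|\mathcal{P}|-1)^{w}\binom{N n_{out}}{w}$ by a polynomial-type margin in the bound coming from the degenerate distance, and that margin vanishes when $w/N$ is bounded away from $0$; the exponential $O(1)^w$ factors then overwhelm it. The paper therefore splits $w>d$ into two regimes, $d<w<xN$ (sublinear) and $w\ge xN$ (linear). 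The sublinear regime is handled exactly as you describe. But the linear regime needs an entirely different estimate on the outer encoder: the paper proves (Bound 2E) that there is a constant $c\in\,]0,1[$ with $a^{\otimes N}(w)\le c^{w}(|\mathcal{P}|-1)^{w}\binom{Nn_{out}}{w}$, and this is exactly where the hypothesis $|\mathcal{P}|>2$ enters (together with $d_q\ge 2$, via the argument that $C(\mathcal{P}^k * \mathcal{Z}^{n-k})$ cannot contain all weight-$i$ sequences). The exponentially small $c^{w}$ is what beats the remaining inner-encoder count when $w\ge xN$. Your sketch never invokes $|\mathcal{P}|>2$, which confirms this ingredient is missing; as written, the $w>d$ half of your argument would fail for large $w$.
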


\begin{theorem}\label{Th2}
If all the following conditions are realized:
\begin{itemize}
 \item $d_q > 2$
 \item the inner seed encoder is recursive and systematic
\end{itemize}
then with probability going to $1$ as $N \rightarrow \infty$, the distance of the turbo-encoder is greater than
$N^{\alpha}$ for all $\alpha < \frac{d_q-2}{d_q}$
\end{theorem}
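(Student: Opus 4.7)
The strategy is to follow the classical Kahale--Urbanke counting argument, transported into the common framework established earlier in the paper, and to conclude via a first-moment bound over the random interleaver.

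Write the turbo-encoder as $T_N = {C_{in}}_{N_{in}} \circ \pi \circ C_{out}^{\otimes N_{out}}$ with $\pi$ the uniform random interleaver, and parametrize each nonzero input $x$ with $|T_N(x)| \le d$ by the weight $w$ of the intermediate word $z = C_{out}^{\otimes N_{out}}(x)$. A union bound gives
\[
E\bigl[\#\{x \ne 0 : |T_N(x)| \le d\}\bigr] \;\le\; \sum_{w} a_{out}(w) \cdot \frac{a_{in}(w, \le d)}{\binom{N}{w}},
\]
where $a_{out}(w)$ is the number of outer-encoder inputs producing intermediate weight exactly $w$, $a_{in}(w, \le d)$ is the number of inner-encoder inputs of weight $w$ producing output of weight at most $d$, and the denominator $\binom{N}{w}$ arises as the probability that the uniform interleaver maps a given weight-$w$ word to another given weight-$w$ word. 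The systematic hypothesis on the inner encoder plays the crucial role here: since the intermediate word appears unchanged inside the systematic coordinates of the output, $w \le |T_N(x)| \le d$, so the sum runs only over $d_q \le w \le d$, the lower bound being the degenerate distance of the outer seed (each of the $N_{out}$ blocks of $C_{out}$ contributing either $0$ or at least $d_q$).

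The next step is to plug in two standard counting bounds, recovered in the common framework. The blockwise structure of $C_{out}^{\otimes N_{out}}$, together with the degenerate distance $d_q$, yields $a_{out}(w) \lesssim (C_1 N / w)^{w/d_q}$ up to polynomial factors (choose the $\le w/d_q$ active blocks and sum over the contributions). The detour decomposition of the inner recursive convolutional encoder, as redefined for the common framework, forces each detour to have intermediate weight at least $2$ by recursion; together with the output-weight constraint, this gives a bound of the form $a_{in}(w, \le d) \lesssim C_2^{\,w}\,(\text{polynomial in } N, d)^{w/2}$. Combined with $\binom{N}{w} \ge (N/w)^w$, after balancing the three exponents of $N$, each summand reduces to $N^{-w \, \delta(\alpha)}$ for some $\delta(\alpha) > 0$ whenever $\alpha < (d_q - 2)/d_q$. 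The sum is then $o(1)$, and since the random variable being bounded is a nonnegative integer, Markov's inequality gives that $T_N$ has distance $> N^\alpha$ with probability tending to $1$.

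The main obstacle I anticipate is transporting the inner-encoder detour-counting bound into the abstract group-of-errors framework. The classical proof uses the trellis interpretation of a convolutional encoder, together with the structural fact that a recursive shift register cannot close a nonzero detour with a single weight-$1$ input; in the present setting this must be re-expressed using only the abstract notion of recursion defined earlier, acting on ${\cal P}$-valued sequences. Once the counting lemma \emph{every detour has intermediate weight at least $2$} is available in this generality, the remainder of the proof is the routine binomial bookkeeping that appears in the unpublished manuscript of Kahale and Urbanke, and which this paper explicitly reproduces.
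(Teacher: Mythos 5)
Your proposal follows essentially the same approach as the paper: use systematicity to force $w\le d$, plug in the outer bound $a^{\otimes N}(w)\lesssim (N/w)^{w/d_q}$ coming from the blockwise degenerate-distance structure of $C_{out}^{\otimes N}$, the inner bound $a_{N_{in}}(w,\le d)\lesssim C^{w}\bigl(N(w+d)/w^2\bigr)^{w/2}$ coming from the detour decomposition for the recursive inner encoder, lower-bound the interleaver denominator by $(N/w)^w$, and balance exponents to get a geometric series that vanishes when $\alpha<(d_q-2)/d_q$. This is precisely what the paper does, packaged as its ``first partial sum'' lemma together with the remark that systematicity kills the second and third partial sums; your only slight imprecision is writing the interleaver denominator as $\binom{N}{w}$ rather than $(|{\cal P}|-1)^w\binom{N n_{out}}{w}$, which washes out since the paper's own lower bound discards the $(|{\cal P}|-1)^w$ factor.
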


In the counting argument needed to prove these results, we will separate the sum into three partial sums. The first partial sum corresponds to the case where $w \le d$ as in the classical case. The second and the third partial sums correspond to the regime where $w > d$, and are separated into the case where $w$ is sublinear and  the case where $w$ is linear with respect to the input size. These three partial sums will be upper bounded by an expression tending to $0$ when the input size tends to infinity, under a set of hypothesis proper to each of them. The article is organized as follows. In Section $2$, we present the common framework. In Section $3$, we focus on the definition and some properties of the inner encoder: we introduce the formalism of convolutional morphisms, define the notions of \textit{recursiveness}, \textit{systematicity} and \textit{total recursiveness}, and define the \textit{speed} $\eta$ of a convolutional morphism. In Section $4$, we establish two upper bounds on the weight distribution $a_{in}(w,\le d)$ (and actually $a_{in}(w,d)$) of the inner convolutional encoder. The proof is delayed to Appendix $A$. In Section $5$, we establish two upper bounds on the weight distribution $a_{out}(w)$ of the outer encoder. In Section $6$ we state the upper bounds on the three partial sums needed to proof each of Theorem \ref{Th1} and Theorem \ref{Th2}, and we assemble these bounds to prove Theorems \ref{Th1} and \ref{Th2}. The proof of the partial sums is delayed to Appendix $B$ and relies on the bounds established in Sections $4$ and $5$. 

\section{The common framework}

Focusing on errors propagation under an encoding process, rather than on the code or the encoding process itself, enables to put linear classical codes and quantum stabilizer codes under one common framework, and to grasp distance properties of the underlying code. For an introduction to the theory of stabilizer codes, the reader can refer to [ref codes stabilisateurs]. Errors to consider in order to know the minimum distance of a code are bit flip errors for linear classical codes and Pauli errors for quantum stabilizer codes. In the first case this is the direct consequence of the definition of the minimum distance of a classical code; in the second case, this comes from the fact that a subspace is a quantum error correcting code for a set of errors if and only if it is a quantum error correcting code for a vector space basis of these errors. 
We will call classical and quantum setting, respectively, the model of encoding and error propagation for linear classical codes and for quantum stabilizer codes.

\subsection{Encoding protocol}

In the classical setting, a $(n,k)$ code is a linear subset of $2^k$ elements of the set of $n$ bit states $\mathbb{F}_2^n$. Such a code is the image of an encoding protocol from $\mathbb{F}_2^k$ to $\mathbb{F}_2^n$, which first appends $n-k$ ancillary bits set to $0$ and then applies a $\mathbb{F}_2$-group automorphism ${\cal{V}}$ of $\mathbb{F}_2^n$. It is thus the set of all ${\cal{V}}(\psi, 0^{n-k})$ where $\psi$ describes the set of all $k$-bits states.

In the quantum setting, a $(n,k)$ code is a $2^k$-dimensional $\mathbb{C}$-subspace of the space of $n$-qubits states ${\cal{H}}^{\otimes n}$, where ${\cal{H}} = \{\alpha | 0 \rangle + \beta | 1 \rangle, (\alpha,\beta) \in \mathbb{C}^2\}$ designates the one qubit space. It is the image of an encoding protocol from ${\cal{H}}^{\otimes k}$ to ${\cal{H}}^{\otimes n}$, consisting in the addition of $n-k$ ancillary qubits set to $ | 0 \rangle$ followed by a $\mathbb{C}$-vector space automorphism ${\cal{V}}$ of ${\cal{H}}^{\otimes n}$. Again it is the set of all ${\cal{V}}(\psi \otimes | 0 \rangle ^{n-k})$ where $\psi$ describes the set of all $k$-qubits states. ${\cal{V}}$ is more precisely a Clifford transformation, so that it stabilizes the group of Pauli transformations by conjugation.

Putting the encoding in the form of this protocol has the advantage to identify $n-k$ positions which will carry all the information about the error given by the decoding process ${\cal{V}}^{-1}$.

\subsection{Error propagation: the encoder}

Suppose that in the classical setting, an $n$-bits state $(\psi,0^{n-k})$ is subject to an error $E = E_1...E_n \in \mathbb{F}_2^n$, where $E_i = 1$ if and only if a bit flip happens at position $i$. After applying ${\cal{V}}$, we get the state ${\cal{V}}(E+\psi) = {\cal{V}}(E) + {\cal{V}}(\psi)$, which means that the effect of ${\cal{V}}$ on the set of errors $\mathbb{F}_2$ is the group automorphism ${\cal{V}}$ itself.

In the quantum setting, suppose that an $n$-qubits state $(\psi \otimes | 0 \rangle ^{n-k})$ is subject to a Pauli error $E \in G_n$. After applying ${\cal{V}}$, we get the state ${\cal{V}} E \psi = ({\cal{V}} E {\cal{V}}^{-1}){\cal{V}} \psi$, which means that the effect of ${\cal{V}}$ is to map each element $E$ of $G_n$ into ${\cal{V}} E {\cal{V}}^{-1}$, which also belongs to $G_n$ since ${\cal{V}}$ is a Clifford transformation. The operation $E \rightarrow {\cal{V}} E {\cal{V}}^{-1}$ is a group automorphism of $G_n$. Elements of $G_n$ contain a global phase in $\{1,i,-1,-i\}$ which can be omitted for $E$ as well as for ${\cal{V}} E {\cal{V}}^{-1}$ without any loss of information; this corresponds to considering the quotient Pauli group $G_n / Z(G_n)$, equal to the $n$-fold cartesian product of the quotient Pauli group of $4$ elements $G_1 / Z(G_1) = \{I,X,Y,Z\}$. The induced action of the operation $E \rightarrow {\cal{V}} E {\cal{V}}^{-1}$ on the quotient Pauli group is, again, a group automorphism.

Thus it is possible to unify both settings by writing that $\mathbb{F}_2$ and $\{I,X,Y,Z\}$ constitute a \textit{group of errors} ${\cal{P}}$ with neutral element $I$. The group of errors also needs to verify a property with respect to the decoding step and stated in the upcoming definition of the common framework. We then say that the automorphism of ${\cal{P}}^n$ engendered by the encoding process, together with the knowledge of $k$ to separate the information carriers positions from the ancillary positions, constitute an $[\![n,k]\!]$ \textit{encoder} $C$. Let us now push the comparison between the two settings a little further, by underlining the concepts of \textit{undetected} errors and \textit{harmless} errors.

\subsection{Types of errors and the distances $d_c$ and $d_q$}

Suppose now that a non trivial error affects the state obtained after the encoding operation. Since the $[\![n,k]\!]$ encoder $C$ is an automorphism, we can write this error in the form $C(E)$ with $E$ a non trivial error. When the decoding operation ${\cal{V}}^{-1}$ is performed, the state left is $E(\psi,0^{n-k})$ or $E(\psi \otimes | 0 \rangle ^{n-k})$ depending on the setting. In both settings, one sees that if the last $n-k$ positions are left unchanged by $E$, the error $C(E)$ is \textit{undetected}. This happens in the classical setting if the last $n-k$ coordinates of $E$ are equal to $0$, and in the quantum setting if the last $n-k$ coordinates of $E$ are in the set $\{I,Z\}$ of errors acting trivially on the qubit $|0 \rangle$. In the case of an undetected error, we are left with a state in the form $(\psi',0^{n-k})$ or $(\psi' \otimes | 0 \rangle ^{n-k})$ depending on the setting; if $\psi' = \psi$, the error $C(E)$ is \textit{harmless}, otherwise it is \textit{harmful}. In both settings 
, the minimum distance of the code is the minimum Hamming weight of a harmful error. We simply call this value \textit{distance} and write $d_c$, whereas the minimum Hamming weight $d_q \le d_c$ of an undetected error will be called \textit{degenerate distance}.

\subsection{Definition of the framework}

\begin{definition}
\begin{itemize}
 \item A group of errors ${\cal{P}}$ is a finite group with composition law *, neutral element $I$, which contains a strict subgroup ${\cal{Z}}$ called the group of undetected syndromes. An element of ${\cal{P}}$ is called a letter. An element of ${\cal{P}}^n$ where $n \in \mathbb{N}^*$ is called an error. The weight of an error $E \in {\cal{P}}^n$ is $\#\{i \in [\![1,n]\!], E_i \ne I\}$, where $E_i$ is the $i$-th coordinate of $E$.
 \item An $[\![n,k]\!]$ encoder, where $n \ge k$, is an isomorphism from ${\cal{P}}^k \times {\cal{P}}^{n-k}$ to ${\cal{P}}^n$.
 \item Let $E \in {\cal{P}}^n$, $E \ne I^n$, and $C$ an $[\![n,k]\!]$ encoder. For $1 \le i \le n$, let $C^{-1}(E)_i$ be the $i$-th coordinate of $C^{-1}(E)$. Then $E$ is undetected for $C$ if:
\[\forall i \in [\![k+1,n]\!], C^{-1}(E)_i \in {\cal{Z}}\]
$E$ is harmless for $C$ if it is undetected for $C$ and:
\[\forall i \in [\![1,k]\!], C^{-1}(E)_i = I\]
$E$ is harmful for $C$ if it is undetected for $C$ and:
\[\exists i \in [\![1,k]\!]: C^{-1}(E)_i \ne I\]

 \item The distance $d_c$ of $C$ is the minimum weight of a harmful error for $C$
 \item The degenerate distance $d_q$ of $C$ is the minimum weight of an undetected error for $C$
\end{itemize}
\end{definition}

Let us also introduce a few writing conventions. As presented, an error $E$ is a sequence of elements of ${\cal{P}}$. The $i$-th element of this sequence is written $E_i$ and we write $E = E_1.E_2.\,...\,.E_N$ where $N$ is the size of $E$. $E$ will also often be seen as a concatenation of errors of smaller size, each playing a particular role with respect to the encoding protocol. We have already seen the standard encoding protocol, in which the first $k$ positions of the input carry the information, the last $n-k$ positions are an ancilla, and the $n$ positions of the output carry the encoded information. Restrictions of an error to these respective positions are called information, stabilizer, and physical errors, and written $L$, $S$ and $P$, such that in a standard encoding protocol, the error affecting the input can be written $E = (L,S)$, and the output can be written $C(E) = P$. As will be presented with the convolutional encoder, some positions play the role of a memory, in which case the restriction of an error to these positions is written $M$. The weight of the information, ancilla, physical and memory parts of $E$ are written respectively $|E|_L$, $|E|_S$, $|E|_P$ and $|E|_M$. Moreover, the number of elements of $E$ in the ancilla part which belong to ${\cal{P}} \backslash {\cal{Z}}$ is called the weight of the detected syndromes and written $|E|_X$.\\

\section{The inner encoder: definitions and properties}

\subsection{Convolutional encoders, truncated convolutional decoders}

Let us now define convolutional encoders and truncated convolutional decoders. Instead of considering simply the inverse transformation of a convolutional encoder, we will remove the stabilizer part of the output, and this is why we use the word truncated. Likewise, a similarity will be drawn between these two transformations, from which we can generalize simply what happens in a recursive encoder to what happens in a totally recursive decoder. A convolutional encoder is built by using a seed $[\![n,k,m]\!]$ encoder $C$. The seed transformation related to a truncated convolutional decoder is a truncated decoder:
\begin{definition}
An $[\![n,k,m]\!]$ encoder of memory size $m$, information size $k$, and stabilizer size $n-k$, is an isomorphism from ${\cal{P}}^{m} \times {\cal{P}}^k \times {\cal{P}}^{n-k}$ to ${\cal{P}}^{n} \times {\cal{P}}^{m}$. The truncated decoder of an $[\![n,k,m]\!]$ encoder $C$, is the application $\bar{C}$ from ${\cal{P}}^{m} \times {\cal{P}}^{n}$ to ${\cal{P}}^{k} \times {\cal{P}}^{m}$ obtained by truncating the output of $C^{-1}$ to its information and memory parts, more precisely, for all $(M',P) \in {\cal{P}}^{m} \times {\cal{P}}^{n}$, if:
\[C^{-1}(P,M') = (M,L,S)\]
then:
\[\bar{C}(M',P) = (L,M)\]
\end{definition}

One very important thing to notice here is that an encoder and a truncated decoder are both particular cases of a morphism from ${\cal{P}}^{m} \times {\cal{P}}^k \times {\cal{P}}^{s}$ to ${\cal{P}}^{n} \times {\cal{P}}^{m}$. For an encoder, we have $n \ge k$ and $s = n-k$, whereas for a truncated decoder, we have $s=0$ and the roles of $k$ and $n$ are switched. Let us thus say that a morphism from ${\cal{P}}^{m} \times {\cal{P}}^k \times {\cal{P}}^{s}$ to ${\cal{P}}^{n} \times {\cal{P}}^{m}$ is an $[\![n,k,s,m]\!]$ \textit{morphism}. In this case, an $[\![n,k,m]\!]$ encoder is an $[\![n,k,n-k,m]\!]$ morphism, and the truncated decoder associated to an $[\![n,k,m]\!]$ encoder is an $[\![k,n,0,m]\!]$ morphism.\\

A convolutional encoder $C_N$ of size $N$ and parameters $(m,k,n)$ is an isomorphism from ${\cal{P}}^m \times {\left({\cal{P}}^k \times{\cal{P}}^{n-k}\right)}^{\otimes N}$ to ${({\cal{P}}^n)}^{\otimes N} \times {\cal{P}}^m$. An input error is the concatenation of a memory error with $N$ alternations of one information and one stabilizer error, and an output error is the concatenation of $N$ physical errors ended by a memory error. The convolutional encoding is done by repeating $N$ times a given $[\![n,k,m]\!]$ encoder $C$, which acts each time on one memory error, one information error and one stabilizer error, and produces one physical error and one memory error. Thus at the $i$-th step, the intermediate error is the concatenation of $i-1$ physical errors, one memory error, and $N-i$ couples of information and stabilizer errors. Thus formally, we define a convolutional encoder by introducing $N$ intermediate transformations acting on a total of $N+1$ intermediate groups. A truncated convolutional decoder $\bar{C}_N$ is a morphism from ${\cal{P}}^m \times {({\cal{P}}^n)}^{\otimes N}$ to ${({\cal{P}}^k)}^{\otimes N} \times {\cal{P}}^m$, obtained by applying successively a truncated decoder $\bar{C}$. Its input is a memory error concatenated with $N$ physical errors, and its output is the concatenation of $N$ information errors ended by a memory error. Let us define a convolutional encoder and a truncated convolutional decoder, after defining, for each of them, the $N+1$ intermediate groups and the $N$ intermediate transformations.

\begin{definition}
Let $C$ be an $[\![n,k,m]\!]$ encoder and let $N \in \mathbb{N^*}$. For $1 \le i \le N+1$, let:
\[{\cal{P}}_{i,N} = {({\cal{P}}^n)}^{\otimes i-1} \times {\cal{P}}^m \times {({\cal{P}}^k \times{\cal{P}}^{n-k})}^{\otimes N-i+1}\]
and let:
\[{\cal{P}}_{-i,N} = {({\cal{P}}^k)}^{\otimes N-i+1} \times {\cal{P}}^m \times {({\cal{P}}^n)}^{\otimes i-1}\]
For $1 \le i \le N$, let $C^i_N$ be the isomorphism from ${\cal{P}}_{i,N}$ to ${\cal{P}}_{i+1,N}$ defined by:
\[C^i_N (P_1,..., P_{i-1},M_{i-1},L_i,S_i,...,L_N,S_N) = (P_1,..., P_i,M_i,L_{i+1},S_{i+1},...,L_N,S_N)\]
where
\[(P_i,M_i) = C(M_{i-1},L_i,S_i)\]
Let also $\bar{C}^i_N$ be the morphism from ${\cal{P}}_{-i-1,N}$ to ${\cal{P}}_{-i,N}$ defined by:
\[\bar{C}^i_N (L_N,..., L_{i+1},M_i,P_i,...,P_1) = (L_N,...,L_i,M_{i-1},P_{i-1},...,P_1)\]
where
\[(L_i,M_{i-1}) = \bar{C}(M_i,P_i)\]

The convolutional encoder $C_N$ is the isomorphism from ${\cal{P}}_{1,N}$ to ${\cal{P}}_{N+1,N}$ defined by:
\[C_N = C^N_N \circ C^{N-1}_N \circ ... \circ C^1_N\]

The truncated convolutional decoder $\bar{C}_N$ is the morphism from ${\cal{P}}_{-N-1,N}$ to ${\cal{P}}_{-1,N}$ defined by:
\[\bar{C}_N = \bar{C}^1_N \circ ... \circ \bar{C}^{N-1}_N \circ \bar{C}^N_N\]
\end{definition}

Notice that if one runs the truncated convolutional decoder $\bar{C}_N$ on the output of the convolutional encoder $C_N$, all the intermediate errors $M_i$ during the truncated convolutional decoding stage correspond to the $M_i$ of the convolutional encoding stage, and the information errors obtained at the end of the truncated decoding are equal to the initial information errors before the encoding. The memory errors $M_i$ for $1 \le i \le N$ will play an important role later; let us call $M_i$ the $i$th intermediate memory error. Also, let us call $i$th intermediate sequence the sequence $(P_1,..., P_i,M_i,L_{i+1},S_{i+1},...,L_N,S_N)$ obtained at the step $i$ of the convolutional encoding. A convolutional encoder and a truncated convolutional decoder are both a case of a \textit{convolutional morphism} defined as follows.

\begin{definition}
Let ${\cal{C}}$ be an $[\![n,k,s,m]\!]$ morphism and let $N \in \mathbb{N^*}$. For $1 \le i \le N+1$, let:
\[{\cal{P}}_{i,N} = {({\cal{P}}^n)}^{\otimes i-1} \times {\cal{P}}^m \times {({\cal{P}}^k \times{\cal{P}}^s)}^{\otimes N-i+1}\]
For $1 \le i \le N$, let ${\cal{C}}^i_N$ be the isomorphism from ${\cal{P}}_{i,N}$ to ${\cal{P}}_{i+1,N}$ defined by:
\[{\cal{C}}^i_N (P_1,..., P_{i-1},M_{i-1},L_i,S_i,...,L_N,S_N) = (P_1,..., P_i,M_i,L_{i+1},S_{i+1},...,L_N,S_N)\]
where
\[(P_i,M_i) = {\cal{C}}(M_{i-1},L_i,S_i)\]
The convolutional morphism ${\cal{C}}_N$ is the isomorphism from ${\cal{P}}_{1,N}$ to ${\cal{P}}_{N+1,N}$ defined by:
\[{\cal{C}}_N = {\cal{C}}^N_N \circ {\cal{C}}^{N-1}_N \circ ... \circ {\cal{C}}^1_N\]
\end{definition}

It is useful to separate the physical part and the memory part of an error output by a convolutional morphism:

\begin{definition}
Let $N \in \mathbb{N}^*$. $\pi_N$ and $\mu_N$ are the maps from ${\cal{P}}_{1,N}$ to respectively ${({\cal{P}}^n)}^{\otimes N}$ and ${\cal{P}}^m$, such that ${\cal{C}}_N = (\pi_N, \mu_N)$.
\end{definition}

We also define a convolutional encoder $C_\infty$ and truncated decoder $\bar{C}_{\infty}$ for inputs of infinite size. They are both a particular case of the following convolutional morphism.

\begin{definition}
The convolutional morphism ${\cal{C}}_\infty$ is the morphism from ${\cal{P}}^m \times {\left({\cal{P}}^k \times{\cal{P}}^s \right)}^{\otimes \mathbb{N}}$ to ${({\cal{P}}^n)}^{\otimes \mathbb{N}}$, such that for any infinite input sequence $E = (M,L_1,S_1,L_2,S_2,...)$, and for all $i \in \mathbb{N}^*$, the sequence of the first $i$ errors of $C_\infty (E)$ is equal to $\pi_i(M,L_1,S_1,...,L_i,S_i)$.\\
\end{definition}

The following lemma is a direct consequence of the convolutional construction:\\

\begin{lemma}
For all $1 \le i \le N$, and for all input sequences $(M,L_1,S_1,...,L_N,S_N)$:
\begin{align*}
C_N (M,L_1,S_1,...,L_N,S_N) = & \pi_i(M,L_1,S_1,...,L_i,S_i).\\
& C_{N-i}(\mu(M,L_1,S_1,...,L_i,S_i), L_{i+1},S_{i+1},...,L_N,S_N)
\end{align*}
\end{lemma}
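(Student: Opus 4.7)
The plan is to unfold $C_N$ as $C^N_N \circ C^{N-1}_N \circ \cdots \circ C^1_N$ and to split this composition after the $i$th step. Setting $F_i := C^i_N \circ \cdots \circ C^1_N$ and $G_i := C^N_N \circ \cdots \circ C^{i+1}_N$, associativity of composition gives $C_N = G_i \circ F_i$, so it suffices to evaluate $F_i$ and then to apply $G_i$ to the result.

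For $F_i$, a short induction on $i$ shows that
\[
F_i(M,L_1,S_1,\ldots,L_N,S_N) = (P_1,\ldots,P_i,M_i,L_{i+1},S_{i+1},\ldots,L_N,S_N),
\]
where $(P_j,M_j) = C(M_{j-1},L_j,S_j)$ with the convention $M_0 := M$. This is immediate from the definition of $C^j_N$, which leaves every block at a position different from $j$ untouched and applies the seed encoder $C$ to the current memory slot together with the $j$th information/stabilizer pair. Running the same induction for the size-$i$ convolutional morphism $C_i = C^i_i \circ \cdots \circ C^1_i$ and reading off its two projections gives the identifications $(P_1,\ldots,P_i) = \pi_i(M,L_1,S_1,\ldots,L_i,S_i)$ and $M_i = \mu_i(M,L_1,S_1,\ldots,L_i,S_i)$.

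For $G_i$, the key observation is a shift-invariance: for $j>i$ the map $C^j_N$ acts as the identity on the first $i$ physical coordinates $P_1,\ldots,P_i$ already produced, and under the reindexing $j \mapsto j-i$ the family $(C^{i+1}_N,\ldots,C^N_N)$ coincides with the family $(C^1_{N-i},\ldots,C^{N-i}_{N-i})$ that defines the convolutional encoder $C_{N-i}$, now acting on the memory $M_i$ together with the remaining inputs $(L_{i+1},S_{i+1},\ldots,L_N,S_N)$. Composing therefore yields
\[
G_i(P_1,\ldots,P_i,M_i,L_{i+1},S_{i+1},\ldots,L_N,S_N) = \bigl(P_1,\ldots,P_i,\, C_{N-i}(M_i,L_{i+1},S_{i+1},\ldots,L_N,S_N)\bigr),
\]
and substituting the expressions for $(P_1,\ldots,P_i)$ and $M_i$ from the previous paragraph gives precisely the factorization claimed in the lemma; the edge case $i = N$ reduces to the definition of $(\pi_N,\mu_N)$ if one reads $C_0$ as the identity on ${\cal{P}}^m$.

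The main (and essentially only) obstacle is the bookkeeping needed to justify rigorously the informal description ``$C^j_N$ applies $C$ at position $j$ and is the identity elsewhere'', both inside the size-$N$ convolutional morphism and inside the size-$(N-i)$ one after the shift. Once this is extracted from the definition, the entire lemma reduces to associativity of composition together with the two reindexings used above.
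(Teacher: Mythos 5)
Your proof is correct: the paper states this lemma without proof, calling it "a direct consequence of the convolutional construction," and your factorization $C_N = G_i \circ F_i$ followed by the two inductions (tracking the memory state through $F_i$, and the shift-reindexing of $(C^{i+1}_N,\ldots,C^N_N)$ onto $(C^1_{N-i},\ldots,C^{N-i}_{N-i})$) is exactly the bookkeeping that justifies the assertion. Note that the $\mu$ in the paper's statement should be read as $\mu_i$, which you correctly did.
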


\subsection{Recursive, systematic encoders, and totally recursive decoders}
We define here the notion of a \textit{recursive} encoder and a \textit{systematic} encoder, which are an extension of the already existing notions known for classical convolutional encoders. We also present the new notion of a \textit{total recursive} decoder. The word \textit{total} is used to emphasize the fact that for a decoder to be \textit{totally recursive}, no condition is required about the stabilizer weight of the output sequence of the decoder.

\begin{definition}
An $[\![n,k,s,m]\!]$ morphism ${\cal{C}}$ is recursive if:
\[\forall E \in {\cal{P}}^m \times {\left({\cal{P}}^k \times{\cal{P}}^s \right)}^{\otimes \mathbb{N}}, |E|_M = 0, |E|_L = 1\ \text{and}\ |E|_X = 0 \Rightarrow |{\cal{C}}_{\infty}(E)| = \infty\]
An $[\![n,k,m]\!]$ encoder $C$ is \textit{recursive}, or a truncated decoder $\bar{C}$ is \textit{totally recursive}, if the $[\![n,k,s,m]\!]$ morphism they correspond to is recursive. In other words, an $[\![n,k,m]\!]$ encoder $C$ is recursive if:
\[\forall E \in {\cal{P}}^m \times {\left({\cal{P}}^k \times{\cal{P}}^{n-k}\right)}^{\otimes \mathbb{N}}, |E|_M = 0, |E|_L = 1\ \text{and}\ |E|_X = 0 \Rightarrow |C_{\infty}(E)| = \infty\]
An $[\![n,k,m]\!]$ truncated decoder $\bar{C}$ is totally recursive if:
\[\forall E \in {\cal{P}}^m \times {({\cal{P}}^n)}^{\otimes \mathbb{N}}, |E|_M = 0\ \text{and}\ |E|_P = 1 \Rightarrow |\bar{C}_{\infty}(E)| = \infty\]

An $[\![n,k,m]\!]$ encoder $C$ is systematic if:
\[\forall N \in \mathbb{N}, \forall E \in {\cal{P}}_{1,N}, |C_N(E)| \ge |E|_L\]

\end{definition}

An equivalent characterization of recursiveness will come after defining the sets $\mathbb{M}_0$ and $\mathbb{M}_1$.


\subsection{Memory errors, speed, and characterisation of recursive encoders}

In this subsection, let ${\cal{C}}$ be an $[\![n,k,s,m]\!]$ morphism. Memory errors have two kinds of behaviour. Consider an infinite input sequence $E$ composed by concatenating a memory error $M$ and infinitely many couples of one information error and one stabilizer error, where all the information errors are equal to $I$, and all the stabilizer errors belong to ${\cal{Z}}^{s}$. Then depending on $M$, the corresponding output will either have an infinite weight independently of the sequence of stabilizer errors, or there will exist a stabilizer sequence for which the output weight is finite. This is what we intend to describe in the definition of these two sets:

\begin{definition}
\begin{eqnarray*}
\mathbb{M}_0 & = & \{ M \in {\cal{P}}^m / \exists (S_1,S_2,...) \in {{\cal{Z}}^{s}}^{\otimes \mathbb{N}} / |{\cal{C}}_{\infty}(M,I,S_1,I,S_2,...)| < \infty \}\\
\mathbb{M}_1 & = & \{ M \in {\cal{P}}^m / \forall (S_1,S_2,...) \in {{\cal{Z}}^{s}}^{\otimes \mathbb{N}}, |{\cal{C}}_{\infty}(M,I,S_1,I,S_2,...)| = \infty \}
\end{eqnarray*}
\end{definition}

Let us also define a set useful for the characterization of a recursive morphism, the set of memory errors $\mathbb{I}$, accessible by starting with the $I$ memory error and applying ${\cal{C}}$ a finite number of times on an input of information weight $0$ and detected syndrome weight $0$:
\begin{definition}
$\mathbb{I} = \{\ \mu_N (I,I,S_1,...,I_,S_N), N \in \mathbb{N}, (S_1,...,S_N) \in {({\cal{Z}}^{s})}^N \ \}$
\end{definition}

The goal of the next two lemmas is to prove that if a memory error belongs to $\mathbb{M}_1$, then the weight of a finite output ${\cal{C}}_N (I,I,S_1,...,I_,S_N)$ will be proportional to its length $N$. The constant of proportionality is in the form $1/\eta$ where the integer $\eta$ depends on ${\cal{C}}$.

\begin{lemma}
There exists a positive integer $\eta$ such that:
\[\forall M \in \mathbb{M}_1, \forall (S_1,...,S_{\eta}) \in {({\cal{Z}}^{s})}^{\otimes \eta}, |\pi_{\eta}(M,I,S_1,...,I,S_{\eta})| \geq 1\]
The smallest such $\eta$ is called the speed of the morphism ${\cal{C}}$.
\end{lemma}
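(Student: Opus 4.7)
The plan is to prove the lemma by a König's lemma (compactness) argument, exploiting the fact that $\mathcal{P}$, and hence $\mathcal{P}^m$, is finite.

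First I would fix $M \in \mathbb{M}_1$ and define $\eta(M)$ to be the smallest integer $n \geq 1$ such that $|\pi_n(M, I, S_1, \ldots, I, S_n)| \geq 1$ for every choice of $(S_1, \ldots, S_n) \in (\mathcal{Z}^s)^{\otimes n}$, with $\eta(M) = \infty$ if no such $n$ exists. Note that once this property holds at length $n$, it holds at every larger length: the first $n$ physical outputs of an input of length $n' > n$ coincide with the outputs of its length-$n$ prefix (this is exactly the content of the convolutional structure, explicit in the preceding lemma), so adding further terms cannot erase a physical letter that was already non-trivial. Hence, if each $\eta(M)$ is finite, the uniform bound $\eta := \max_{M \in \mathbb{M}_1} \eta(M)$ is a valid integer because $\mathbb{M}_1 \subseteq \mathcal{P}^m$ is finite.

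The core of the argument is therefore to show $\eta(M) < \infty$ for every $M \in \mathbb{M}_1$. I would argue by contradiction: suppose $\eta(M) = \infty$, and consider the set
\[T_M = \{(S_1,\ldots,S_n) \in (\mathcal{Z}^s)^{\otimes n} : n \in \mathbb{N},\ |\pi_n(M,I,S_1,\ldots,I,S_n)| = 0\}.\]
The same consequence of the convolutional structure as above shows $T_M$ is prefix-closed: if $(S_1,\ldots,S_n) \in T_M$, then the first $n-1$ physical outputs vanish, so $(S_1,\ldots,S_{n-1}) \in T_M$. Thus $T_M$ is an infinite tree (the assumption $\eta(M) = \infty$ guarantees that arbitrarily long silent sequences exist), and it has finite branching because each node has at most $|\mathcal{Z}|^s$ successors.

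By König's lemma, $T_M$ contains an infinite branch, yielding an infinite sequence $(S_1, S_2, \ldots) \in (\mathcal{Z}^s)^{\otimes \mathbb{N}}$ all of whose finite prefixes are silent. By the definition of $\mathcal{C}_\infty$ in terms of the $\pi_i$, this means $\mathcal{C}_\infty(M, I, S_1, I, S_2, \ldots)$ has weight $0$, which is finite, contradicting $M \in \mathbb{M}_1$. Hence $\eta(M) < \infty$ and the lemma follows. The only subtle point is the prefix-closure of silent sequences, but this is really just the observation that the convolutional morphism processes its input left-to-right, so truncating the input truncates the physical output accordingly; once this is in place, the König's lemma step is routine.
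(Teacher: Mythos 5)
Your proof is correct and takes essentially the same approach as the paper: the paper's diagonal extraction of a constant $M$ followed by a coherent infinite sequence of stabilizer errors is precisely the compactness argument that König's lemma packages, and both proofs rest on the same two finiteness facts (finiteness of $\mathbb{M}_1$ and finite branching in $\mathcal{Z}^s$). The only organizational difference is that you isolate the per-$M$ compactness step first and then take the maximum over the finite set $\mathbb{M}_1$, whereas the paper applies the pigeonhole principle on $\mathbb{M}_1$ inside a single contradiction argument.
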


\begin{proof}
We prove the lemma by contradiction by supposing that:
\[\forall \eta \in \mathbb{N}^*, \exists M_{\eta} \in \mathbb{M}_1, \exists (S_1^{\eta},...,S_{\eta}^{\eta}) \in {({\cal{Z}}^{s})}^{\otimes \eta} / \ |\pi_{\eta}(M,I,S_1^{\eta},...,I,S_{\eta}^{\eta})| = 0\]

Since the sequence $(M_{\eta})_{\eta \in \mathbb{N}}$ has its values in the finite set $\mathbb{M}_1$, there exists $M \in \mathbb{M}_1$ such that $|\{\eta \in \mathbb{N}^* / \ M_{\eta} = M \}| = \infty$.\\

Now let us define a sequence of stabilizer errors $(S_i)_{i \in \mathbb{N}^*} \in {({\cal{Z}}^{s})}^{\mathbb{N}}$ such that:
\[ \forall i \in \mathbb{N}^*, |\{ \eta \ge i : (S_1^{\eta},...,S_i^{\eta}) = (S_1,...,S_i) \}| = \infty\]

\begin{itemize}
\item The sequence $(S_1^{\eta})_{\eta \in \mathbb{N}}$ has its values in the finite set ${\cal{Z}}^{s}$. Thus, define $S_1$ as an element of ${\cal{Z}}^{s}$ such that:
\[|\{ \eta \ge 1 : S_1^{\eta} = S_1 \}| = \infty\]
\item Suppose that for a given $i \in \mathbb{N}^*$, we have defined $i$ stabilizer errors $S_1,...,S_i$, such that for all $j \le i$:
\[|\{ \eta \ge j : (S_1^{\eta},...,S_j^{\eta}) = (S_1,...,S_j) \}| = \infty\]
Let $\mathbb{N}_i = \{ \eta \ge i+1 : (S_1^{\eta},...,S_i^{\eta}) = (S_1,...,S_i) \}$, which by hypothesis is infinite. Since the sequence $(S_{i+1}^{\eta})_{\eta \in \mathbb{N}_i}$ has its values in the finite set ${\cal{Z}}^{s}$, we can define $S_{i+1}$ as an element of ${\cal{Z}}^{s}$ such that:
\[|\{ \eta \in \mathbb{N}_i: S_{i+1}^{\eta} = S_{i+1} \}| = \infty\]
which implies:
\[|\{ \eta \ge i+1 : (S_1^{\eta},...,S_{i+1}^{\eta}) = (S_1,...,S_{i+1}) \}| = \infty\]
\end{itemize}

Now let us use this sequence $(S_i)_{i \in \mathbb{N}^*} \in {({\cal{Z}}^{s})}^{\mathbb{N}}$ to produce a contradiction. Let $i \in \mathbb{N}^*$, and let $\eta \ge i$ such that $(S_1^{\eta},...,S_i^{\eta}) = (S_1,...,S_i)$. Then:
\[ |\pi_i(M,I,S_1,...,S_i)| = |\pi_i(M,I,S_1^{\eta},...,S_i^{\eta})| = 0\]

Since this is true for all $i \in \mathbb{N}^*$, $|{\cal{C}}_{\infty}(M,I,S_1,I,S_2,...)| = 0$, which contradicts the fact that $M \in \mathbb{M}_1$.
\end{proof}

\begin{lemma}
For all $N > 0$, and for all $(S_1,...,S_N) \in {\left({\cal{Z}}^{s}\right)}^{\otimes \mathbb{N}}$ the function:
\begin{eqnarray*}
{\cal{P}}^m & \rightarrow & {\cal{P}}^m\\
M & \mapsto & \mu_N(M,I,S_1,...,I,S_N)
\end{eqnarray*}
stabilizes the set $\mathbb{M}_1$.
\end{lemma}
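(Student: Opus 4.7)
The plan is a direct proof by contradiction using the "prefix/suffix" decomposition of ${\cal{C}}_\infty$ inherited from the stated convolutional lemma. Let $M \in \mathbb{M}_1$, and denote the alleged image by $M' = \mu_N(M,I,S_1,\ldots,I,S_N)$. The goal is to show $M' \in \mathbb{M}_1$, so I assume for contradiction that $M' \notin \mathbb{M}_1$. Then there is a sequence $(S'_1, S'_2, \ldots) \in ({\cal{Z}}^s)^{\mathbb{N}}$ with
\[
\bigl|{\cal{C}}_\infty(M', I, S'_1, I, S'_2, \ldots)\bigr| < \infty.
\]

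Next, I would concatenate the two inputs to build the infinite input $E = (M, I, S_1, \ldots, I, S_N, I, S'_1, I, S'_2, \ldots)$. The key step is to observe that the lemma stated just before (for finite $N$) extends immediately to ${\cal{C}}_\infty$: by definition, ${\cal{C}}_\infty(E)$ agrees on its first $N$ physical coordinates with $\pi_N(M,I,S_1,\ldots,I,S_N)$, and the memory state after the first $N$ applications of ${\cal{C}}$ is precisely $M' = \mu_N(M,I,S_1,\ldots,I,S_N)$. Consequently the "tail" of ${\cal{C}}_\infty(E)$ (all physical errors at positions $> N$) coincides with ${\cal{C}}_\infty(M', I, S'_1, I, S'_2, \ldots)$, and therefore
\[
\bigl|{\cal{C}}_\infty(E)\bigr| = \bigl|\pi_N(M, I, S_1, \ldots, I, S_N)\bigr| + \bigl|{\cal{C}}_\infty(M', I, S'_1, I, S'_2, \ldots)\bigr|.
\]

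Both terms on the right are finite: the first is a finite sum of weights of $N$ physical errors (hence at most $Nn$), and the second is finite by the contradiction hypothesis. Thus $|{\cal{C}}_\infty(E)| < \infty$, but $E$ is precisely an input of the form $(M, I, S''_1, I, S''_2, \ldots)$ with $(S''_i)_{i\ge 1} = (S_1, \ldots, S_N, S'_1, S'_2, \ldots) \in ({\cal{Z}}^s)^{\mathbb{N}}$. This contradicts $M \in \mathbb{M}_1$, so $M' \in \mathbb{M}_1$ after all.

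The only non-routine step is the identification of the tail of ${\cal{C}}_\infty(E)$ with ${\cal{C}}_\infty(M', I, S'_1, \ldots)$; this is essentially a restatement of the convolutional lemma for infinite inputs, and follows immediately from the definition of ${\cal{C}}_\infty$ via prefixes of length $i$ combined with an application of the finite lemma with the splitting index equal to $N$. No quantitative bound on the speed $\eta$ is needed; the argument uses only the semigroup-like structure of the convolutional encoding together with the defining property of $\mathbb{M}_1$.
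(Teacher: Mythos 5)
Your proof is correct and relies on the same key decomposition as the paper's, namely $\mathcal{C}_\infty(M,I,S_1,I,S_2,\ldots) = \bigl(\pi_N(M,I,S_1,\ldots,I,S_N),\ \mathcal{C}_\infty(M',I,S_{N+1},I,S_{N+2},\ldots)\bigr)$ with $M' = \mu_N(\cdot)$; the only difference is that you phrase the argument by contradiction (assume $M' \in \mathbb{M}_0$, produce a finite-weight output from $M$) whereas the paper argues directly (the finite prefix weight forces the tail to have infinite weight for every stabilizer continuation), which is merely a logical rephrasing of the same idea.
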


\begin{proof}
Let $ N > 0$ and $(S_1,...,S_N) \in {({\cal{Z}}^{s})}^{\otimes N}$. Let $M \in \mathbb{M}_1$, and let $(P,M') = C_N(M,I,S_1,...,I,S_N)$.\\
For all $(S_{N+1},S_{N+2},...) \in {({\cal{Z}}^{s})}^{\otimes \mathbb{N}}$:
\[{\cal{C}}_{\infty}(M,I,S_1,I,S_2,...) = (P,\ {\cal{C}}_{\infty}(M',I,S_{N+1},I,S_{N+2},...))\]

And since $M \in \mathbb{M}_1$, $|{\cal{C}}_{\infty}(M',I,S_{N+1},I,S_{N+2},...)| = \infty$.\\

Thus $M' \in \mathbb{M}_1$.
\end{proof}

As a consequence:

\begin{lemma}
For all $N > 0$, $M \in \mathbb{M}_1$, and $(S_1,...,S_N) \in {({\cal{Z})}^{s}}^{\otimes N}$:
\[|\pi_N(M,I,S_1,...,I,S_N)| \ge \lfloor N/ \eta \rfloor\]
\end{lemma}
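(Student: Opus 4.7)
The plan is to prove this by induction on the number of complete length-$\eta$ blocks, namely on $\lfloor N/\eta \rfloor$, using the two preceding lemmas and the convolutional decomposition property. The base case $\lfloor N/\eta \rfloor = 0$ (i.e.\ $N < \eta$) is trivial since the right-hand side is $0$, so the real content is in the inductive step.

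For the inductive step, assume $N \ge \eta$ and write $M' = \mu_\eta(M, I, S_1, \ldots, I, S_\eta)$. By the convolutional decomposition lemma (applied with $L_1 = \cdots = L_N = I$ and $i = \eta$), the physical part splits as
\[
\pi_N(M, I, S_1, \ldots, I, S_N) \;=\; \pi_\eta(M, I, S_1, \ldots, I, S_\eta)\,.\,\pi_{N-\eta}(M', I, S_{\eta+1}, \ldots, I, S_N),
\]
so weights add. The first factor has weight at least $1$ by the speed lemma (since $M \in \mathbb{M}_1$ and we feed $\eta$ zero-information, undetected-syndrome errors). The stabilization lemma, applied with $N$ replaced by $\eta$ and the same $S_i$'s, guarantees $M' \in \mathbb{M}_1$, so the induction hypothesis applies to the second factor and yields
\[
|\pi_{N-\eta}(M', I, S_{\eta+1}, \ldots, I, S_N)| \;\ge\; \lfloor (N-\eta)/\eta \rfloor \;=\; \lfloor N/\eta \rfloor - 1.
\]
Adding the two lower bounds gives $|\pi_N(M, I, S_1, \ldots, I, S_N)| \ge \lfloor N/\eta \rfloor$, completing the induction.

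There is no real obstacle: everything needed has been prepared by the two prior lemmas. The only thing to check carefully is that the convolutional decomposition lemma is stated for $C_N$ (the encoder) but the same identity holds for an arbitrary convolutional morphism ${\cal{C}}_N$, since the proof uses only the iterated definition ${\cal{C}}_N = {\cal{C}}_N^N \circ \cdots \circ {\cal{C}}_N^1$; projecting onto the physical coordinates then gives the required factorization of $\pi_N$.
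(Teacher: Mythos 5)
Your proof is correct and takes essentially the same approach as the paper: induction on $p = \lfloor N/\eta\rfloor$, splitting off the first $\eta$ steps via the convolutional decomposition, then invoking the speed lemma for the first factor and the stabilization lemma plus the inductive hypothesis for the second. Your closing remark (that the decomposition lemma stated for $C_N$ extends verbatim to any convolutional morphism ${\cal{C}}_N$) is a careful observation the paper uses implicitly without spelling out.
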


\begin{proof}
Let $M \in \mathbb{M}_1$. We prove the lemma by recursion on $p = \lfloor N/ \eta \rfloor$.\\

For $p = 0$, the property is clearly true.\\

Now suppose the property is true for a given $p-1$, where $p \ge 1$.\\

Let $N > 0$ such that $\lfloor N/ \eta \rfloor = p$, and let $(S_1,...,S_N) \in {({\cal{Z})}^{s}}^{\otimes N}$. The following relation holds:
\[\pi_N(M,I,S_1,...,I,S_N) = \pi_{\eta}(M,I,S_1,...,I,S_{\eta}).\ \pi_{N - \eta}(\mu_{\eta}(M,I,S_1,...,I,S_{\eta}),I,S_{\eta + 1},...,I,S_N)\]
Since $M \in \mathbb{M}_1$, $\mu_{\eta}(M,I,S_1,...,I,S_{\eta}) \in \mathbb{M}_1$, which implies by the recursion hypothesis:
\[|\pi_{N - \eta}(\mu_{\eta}(M,I,S_1,...,I,S_{\eta}),I,S_{\eta + 1},...,I,S_N)| \ge p - 1\]
And since $ |\pi_{\eta}(M,I,S_1,...,I,S_{\eta})| \geq 1$, the property is also true for $p$.
\end{proof}

Another consequence is the following characterization of recursive morphisms:

\begin{lemma}
${\cal{C}}$ is recursive if and only if for all $(M,L,S) \in \mathbb{I} * {\cal{P}}^k * {\cal{Z}}^{s}$ with $|L| = 1$, $\mu(M,L,S) \in \mathbb{M}_1$.
\end{lemma}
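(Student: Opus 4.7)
My plan is to prove the two implications separately, in each case exploiting the convolutional decomposition (the lemma preceding the subsection) to split an infinite input at the critical position and reduce an infinite-weight output statement to one about an all-stabilizer tail starting from a prescribed memory.

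For the forward direction (${\cal{C}}$ recursive $\Rightarrow$ the stated condition), I would fix an arbitrary triple $(M,L,S) \in \mathbb{I} \times {\cal{P}}^k \times {\cal{Z}}^s$ with $|L|=1$ and an arbitrary stabilizer sequence $(S_1,S_2,\ldots) \in ({\cal{Z}}^s)^{\otimes \mathbb{N}}$. Using the definition of $\mathbb{I}$, I would pick $N_0 \in \mathbb{N}$ and $(T_1,\ldots,T_{N_0}) \in ({\cal{Z}}^s)^{N_0}$ with $M = \mu_{N_0}(I,I,T_1,\ldots,I,T_{N_0})$, and then form the spliced input
\[ E \;=\; (I,\; I,T_1,\ldots,I,T_{N_0},\; L,S,\; I,S_1,I,S_2,\ldots). \]
By construction $|E|_M = 0$, $|E|_L = 1$ and $|E|_X = 0$, so recursiveness of ${\cal{C}}$ forces $|{\cal{C}}_\infty(E)| = \infty$. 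Iterating the decomposition lemma through step $N_0+1$, I would factor ${\cal{C}}_\infty(E)$ as a physical prefix of length $N_0+1$ concatenated with ${\cal{C}}_\infty(\mu(M,L,S),\,I,S_1,I,S_2,\ldots)$; the prefix is finite (hence of finite weight), so the tail must have infinite weight. Since the stabilizer sequence was arbitrary, this is exactly the statement $\mu(M,L,S) \in \mathbb{M}_1$.

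For the reverse direction, I would take any infinite input $E = (M_0,L_1,S_1,L_2,S_2,\ldots)$ with $|E|_M = 0$, $|E|_L = 1$, $|E|_X = 0$. The three conditions force $M_0 = I$, every $S_i \in {\cal{Z}}^s$, and a unique index $j$ with $L_j \ne I$, $|L_j|=1$ (and $L_i = I$ for $i \ne j$). Unrolling the convolutional construction, the memory after step $j-1$ equals $M_{j-1} = \mu_{j-1}(I,I,S_1,\ldots,I,S_{j-1})$, which lies in $\mathbb{I}$ by definition (with the empty-sequence convention $M_0 = I \in \mathbb{I}$ covering the boundary case $j=1$). The hypothesis applied to $(M_{j-1},L_j,S_j)$ then gives $M_j := \mu(M_{j-1},L_j,S_j) \in \mathbb{M}_1$. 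A last application of the decomposition at step $j$ writes ${\cal{C}}_\infty(E)$ as a finite prefix followed by ${\cal{C}}_\infty(M_j,\,I,S_{j+1},I,S_{j+2},\ldots)$, and the latter has infinite weight by definition of $\mathbb{M}_1$, so $|{\cal{C}}_\infty(E)| = \infty$, proving recursiveness.

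The bulk of the proof is just this splicing/unsplicing, and I do not anticipate any serious obstacle. The only delicate points are bookkeeping: making sure the spliced $E$ in the forward direction satisfies $|E|_L = 1$ exactly (it does, since the $T_i$ carry no information letters and $|L|=1$), and handling the boundary $j=1$ in the reverse direction via $I \in \mathbb{I}$. Both directions are then essentially direct translations between the defining conditions of recursiveness and of $\mathbb{M}_1$, bridged by the reachable-memory set $\mathbb{I}$.
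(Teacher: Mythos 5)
Your proof is correct and follows essentially the same route as the paper: both factor ${\cal{C}}_\infty(E)$ via the convolutional decomposition at the unique non-identity information letter, reducing the infinite-weight condition to membership of the resulting memory $\mu(M,L,S)$ in $\mathbb{M}_1$, with the set $\mathbb{I}$ encoding exactly the reachable memories. The only stylistic difference is that you write the two implications separately while the paper runs a single bidirectional chain.
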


\begin{proof}
Let $E$ be an infinite input sequence of memory weight $0$, information weight $1$, and such that all its stabilizer letters are in ${\cal{Z}}^{s}$. Let $K$ be the position of the only information error $L$ of weight $1$. $E$ is in the form:
\[E = (I,I,S_1,...,I,S_{i-1},L,S_i,I,S_{i+1},I,S_{i+2},...)\]

Let $(P,M) = {\cal{C}}_{i-1}(I,I,S_1,...,I,S_{i-1})$. Then:
\begin{eqnarray*}
{\cal{C}}_{\infty}(E) & = & P.\ {\cal{C}}_{\infty}(M,L,S_i,I,S_{i+1},I,S_{i+2},...) \\
& = & P.\ \pi(M,L,S_i).\ {\cal{C}}_{\infty}(\mu(M,L,S_i),I,S_{i+1}I,S_{i+2},...)\\
\end{eqnarray*}

Thus, $|{\cal{C}}_{\infty}(E)| = \infty$ for all $E$ which verify the assumptions above, if and only if $\mu(M,L,S_i) \in \mathbb{M}_1$ for all $i \ge 1$, $L$ of weight $1$, $S_i$ in ${\cal{Z}}^s$, and $M$ in the form $\mu_{i-1}(I,I,S_1,...,I,S_{i-1})$ where $(S_1,...,S_{i-1}) \in {({\cal{Z}}^s)}^{i-1}$, or in other words for all $M \in \mathbb{I}$.

\end{proof}

\section{Weight distribution of the inner encoder: upper bounds in the recursive and the totally recursive cases} 

In this section, we will establish two upper bounds to the weight distribution of a convolutional encoder. What is meant precisely by the weight distribution is the number of possible (memory and information) subsequences $(M,L_1,...,L_N)$ of an input sequence $E = (M,L_1,S_1,...,L_N,S_N)$, for a given weight $w$ of the input subsequence and a given weight $d$ of the output sequence. The first upper bound comes provided that the seed encoder $C$ is recursive, and the second one comes provided that the seed truncated decoder $\bar{C}$ is totally recursive. We will establish an upper bound for a recursive morphism, and derive the two desired results as a corollary.

\subsection{Trace and detours}

Before going to the point of the upper bounds, we first need to exhibit a characteristic behaviour of recursive morphisms; mainly, we will first show that the \textit{trace} of an input sequence of a convolutional morphism is a concatenation of \textit{detours}. By introducing the concept of \textit{trace}, this phenomenon of \textit{detours} is a generalisation of the phenomenon described by \cite{KU98a} in the case of a classical convolutional encoder based on a recursive seed encoder. Consider a recursive $[\![n,k,s,m]\!]$ morphism ${\cal{C}}$. Let $N$ be a positive integer, and consider $E = (M,L_1,S_1,...,L_N,S_N) = E_1\ ...\ E_{\cal{N}}$ an input sequence to the convolutional morphism ${\cal{C}}_N$, where ${\cal{N}} = m + N (k+s)$. Thus we have $M \in {\cal{P}}^m$, and for all $i \in [\![1,n]\!], L_i \in {\cal{P}}^k$ and $S_i \in {\cal{Z}}^s$. Let $(P_1,...,P_N,M') = {\cal{C}}_N(E)$ be the output sequence.\\

Let $w_M = |E|_M = |M|$ and $w_L = |E|_L = |(L_1,...,L_N)|$. By looking at the information part $(L_1,...,L_N)$ as a sequence of $k N$ letters, label by $1 \le p_1 < ... < p_{w_L} \le {k N}$ the positions of the $w_L$ non identity letters, and for each $i \in [\![1;w_L]\!]$, let $N_i = \lceil p_i / k \rceil$ so that the $i$th non identity letter is part of the error $L_{N_i}$.

Define the $i$th truncature of $E$ as the sequence:
\[E_{\backslash i} = (M,L_1,S_1,...,L_{N_{i-1}}, S_{N_{i-1}}, L'_{N_i}, S_{N_i})\]
where $(L_1,...,L_{N_{i-1}},L'_{N_i})$ is obtained by replacing all the letters in $(L_1,...,L_{N_i})$ after the position $p_i$ by an $I$. Let us also set $p_0 = 0$, $N_0 = 0$ and $E_{\backslash 0} = M$. Define $M_i$ as the memory error of the output when running ${\cal{C}}_{N_i}$ on $E_{\backslash i}$:
\[M_i = \mu_{N_i} (E_{\backslash i})\]
Set also $M_0 = M$. This enables to define the trace of $E$:
\begin{definition}
The trace of $E$ is the sequence $(b_0,...,b_{w_L})$ of elements of $\{0,1\}$ such that for all $i \in [\![0;w_L]\!]$, $M_i \in \mathbb{M}_{b_i}$.
\end{definition}

The following result holds as a direct consequence of this definition:
\begin{lemma}
$b_i = 1$ if and only if for all sequences $(\tilde{S}_K)_{K \in \mathbb{N^*}}$ of stabilizer errors in ${\cal{Z}}^s$:
\[|C_{\infty}(E_{\backslash i}.\ (I,\tilde{S}_1,I,\tilde{S}_2,...))| = \infty\]
\end{lemma}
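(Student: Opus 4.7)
The statement is essentially an unwinding of definitions combined with the splitting property of convolutional morphisms stated earlier in the excerpt. My plan is as follows.

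First I would apply the splitting lemma (the one immediately following the definition of ${\cal{C}}_\infty$) at position $N_i$. The concatenation $E_{\backslash i}.(I,\tilde{S}_1,I,\tilde{S}_2,\ldots)$ is an infinite input sequence to ${\cal{C}}_\infty$ whose first $N_i$ blocks coincide with $E_{\backslash i}$. The (straightforward infinite-length analogue of the) splitting lemma then gives
\[
{\cal{C}}_\infty\!\bigl(E_{\backslash i}.(I,\tilde{S}_1,I,\tilde{S}_2,\ldots)\bigr)
= \pi_{N_i}(E_{\backslash i}).\;{\cal{C}}_\infty\!\bigl(\mu_{N_i}(E_{\backslash i}),I,\tilde{S}_1,I,\tilde{S}_2,\ldots\bigr),
\]
and by the very definition of $M_i$ we have $\mu_{N_i}(E_{\backslash i}) = M_i$, so the right-hand side is $\pi_{N_i}(E_{\backslash i}).\,{\cal{C}}_\infty(M_i,I,\tilde{S}_1,I,\tilde{S}_2,\ldots)$.

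The second step is to observe that Hamming weight is additive under concatenation, and that $|\pi_{N_i}(E_{\backslash i})|$ is a fixed finite integer (bounded by $n N_i$). Consequently
\[
\bigl|{\cal{C}}_\infty\!\bigl(E_{\backslash i}.(I,\tilde{S}_1,I,\tilde{S}_2,\ldots)\bigr)\bigr| = \infty
\quad\Longleftrightarrow\quad
\bigl|{\cal{C}}_\infty(M_i,I,\tilde{S}_1,I,\tilde{S}_2,\ldots)\bigr| = \infty,
\]
and this equivalence is uniform in the sequence $(\tilde{S}_K)_{K\in\mathbb{N}^*}$.

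Finally I would compare with the definition of $\mathbb{M}_1$: the right-hand condition holds for every sequence of stabilizers in ${\cal{Z}}^s$ if and only if $M_i \in \mathbb{M}_1$, and by definition of the trace this is equivalent to $b_i = 1$. Combining the two equivalences yields the lemma. There is no real obstacle; the only point requiring minor care is to state explicitly the infinite analogue of the splitting lemma so that the splitting at position $N_i$ between the finite prefix $E_{\backslash i}$ and the infinite tail $(I,\tilde{S}_1,I,\tilde{S}_2,\ldots)$ is justified, but this follows directly from the definition of ${\cal{C}}_\infty$ through finite truncations.
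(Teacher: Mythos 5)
Your proposal is correct and follows essentially the same route as the paper's proof: split at position $N_i$ to write ${\cal{C}}_\infty(E_{\backslash i}.(I,\tilde{S}_1,\ldots)) = \pi_{N_i}(E_{\backslash i}).{\cal{C}}_\infty(M_i,I,\tilde{S}_1,\ldots)$, note the prefix has finite weight, and invoke the definitions of $\mathbb{M}_1$ and the trace. The only cosmetic difference is that the paper treats the case $i=0$ (where $E_{\backslash 0}=M_0=M$ and there is no $\pi_{N_0}$ prefix) as a separate displayed identity, whereas you would have to interpret $\pi_0$ and $\mu_0$ appropriately for your single formula to cover it.
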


\begin{proof}
$b_i = 1$ if and only if $M_i \in \mathbb{M}_1$, thus if and only if for all sequences $(\tilde{S}_K)_{K \in \mathbb{N^*}}$ of stabilizer errors in ${\cal{Z}}^s$:
\[|C_{\infty}(M_i,I,\tilde{S}_1,I,\tilde{S}_2,...)| = \infty\]
The relations:
\[C_{\infty}(E_{\backslash i}.\ (I,\tilde{S}_1,I,\tilde{S}_2,...)) = \pi_{N_i}(E_{\backslash i}).\ C_{\infty}(M_i,I,\tilde{S}_1,I,\tilde{S}_2,...)\]
if $i \ge 1$, and if $i = 0$:
\[C_{\infty}(E_{\backslash 0}.\ (I,\tilde{S}_1,I,\tilde{S}_2,...)) = C_{\infty}(M_0,I,\tilde{S}_1,I,\tilde{S}_2,...)\]
show that this happens if and only if $|C_{\infty}(E_{\backslash i}.\ (I,\tilde{S}_1,I,\tilde{S}_2,...))| = \infty$.
\end{proof}

This implies the following property about the trace of $E$:\\

\begin{lemma}
For all $i$ in $[\![0, w_L-1]\!]$, if $b_i = 0$  then $b_{i+1} = 1$.
\end{lemma}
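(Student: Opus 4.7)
The plan is a proof by contradiction. I would suppose $b_i=0$ and $b_{i+1}=0$, so that by definition of $\mathbb{M}_0$ there exist stabilizer sequences $(T_K)_{K\geq 1}$ and $(T'_K)_{K\geq 1}$ in ${({\cal{Z}}^s)}^{\mathbb{N}^*}$ satisfying
\[|{\cal{C}}_\infty(M_i,I,T_1,I,T_2,\ldots)|<\infty\quad\text{and}\quad|{\cal{C}}_\infty(M_{i+1},I,T'_1,I,T'_2,\ldots)|<\infty.\]
I then set $\tilde E_i = E_{\backslash i}.(I,T_1,I,T_2,\ldots)$ and $\tilde E_{i+1}=E_{\backslash i+1}.(I,T'_1,I,T'_2,\ldots)$; the convolutional factorization ${\cal{C}}_\infty(\tilde E_i)=\pi_{N_i}(E_{\backslash i}).{\cal{C}}_\infty(M_i,I,T_1,\ldots)$ (and the analogous identity for $\tilde E_{i+1}$) shows that ${\cal{C}}_\infty(\tilde E_i)$ and ${\cal{C}}_\infty(\tilde E_{i+1})$ both have finite weight.

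Next I would form the coordinate-wise group difference $F=\tilde E_{i+1}\cdot\tilde E_i^{-1}$ and verify the three conditions $|F|_M=0$, $|F|_L=1$, $|F|_X=0$ needed to invoke recursiveness. The memory weight vanishes since both sequences begin with the same $M$. For the detected-syndrome weight, every ancilla coordinate of $\tilde E_i$ and of $\tilde E_{i+1}$ lies in ${\cal{Z}}^s$ --- the $S_K$'s from $E$ by the standing assumption of the section, and the $T_K,T'_K$'s by construction --- and ${\cal{Z}}$ being a subgroup, the coordinate-wise combination stays in ${\cal{Z}}^s$. For the information weight, the non-identity information letters of $\tilde E_{i+1}$ appear at $p_1,\ldots,p_{i+1}$ and those of $\tilde E_i$ at $p_1,\ldots,p_i$, both carrying the original values from $E$ at the common positions; all other information coordinates in either sequence are $I$ --- in particular position $p_{i+1}$ is $I$ in $\tilde E_i$, whether it lies in the truncated last block of $E_{\backslash i}$ or in the $(I,T_K)$ extension. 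Hence $F$ has exactly one non-identity information letter, located at $p_{i+1}$.

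Applying the recursiveness hypothesis to $F$ then forces $|{\cal{C}}_\infty(F)|=\infty$. On the other hand, ${\cal{C}}_\infty$ is a group morphism, since it is built by composition from the isomorphisms ${\cal{C}}^j_N$, so
\[{\cal{C}}_\infty(F)={\cal{C}}_\infty(\tilde E_{i+1})\cdot{\cal{C}}_\infty(\tilde E_i)^{-1},\]
and the coordinate-wise triangle inequality $|a\cdot b^{-1}|\leq|a|+|b|$ yields $|{\cal{C}}_\infty(F)|<\infty$, contradicting the previous statement.

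The main obstacle is the book-keeping in the information-weight check of $F$: one must treat separately the sub-cases $N_i=N_{i+1}$ (both $p_i$ and $p_{i+1}$ lie in the same block) and $N_i<N_{i+1}$ (in which the blocks $N_i+1,\ldots,N_{i+1}-1$ come from the original $E$ inside $\tilde E_{i+1}$ but from the $(I,T_K)$ extension inside $\tilde E_i$), and verify in each case that only the position $p_{i+1}$ contributes, relying on the fact that by definition no non-identity information letter of $E$ sits strictly between positions $p_i$ and $p_{i+1}$.
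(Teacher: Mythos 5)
Your proof is correct and uses the same core mechanism as the paper: form the coordinate-wise group difference of two infinitely extended truncatures, observe that this difference $F$ (the paper calls it $\Delta E$) has $|F|_M=0$, $|F|_L=1$, $|F|_X=0$, apply recursiveness to get $|{\cal{C}}_\infty(F)|=\infty$, then use the morphism property of ${\cal{C}}_\infty$ together with a weight triangle inequality. The only cosmetic difference is that you argue by contradiction (assume $b_i=b_{i+1}=0$ and use two specific finite-weight witnesses $T,T'$), whereas the paper argues directly (fix only the witness for $b_i=0$, take $\tilde S$ arbitrary on the $E_{\backslash i+1}$ side, and conclude the output weight is infinite for every $\tilde S$, which is exactly $b_{i+1}=1$) --- logically equivalent and equally rigorous.
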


\begin{proof}
Let us suppose that $b_i = 0$. Then there exists a sequence $(S'_K)_{K \in \mathbb{N^*}}$ of stabilizer errors in ${\cal{Z}}^s$ such that:
\[|C_{\infty}(E_{\backslash i}.\ (I,S'_1,I,S'_2,...))| < \infty\]
Let $(\tilde{S}_K)_{K \in \mathbb{N^*}}$ be a sequence of stabilizer errors in ${\cal{Z}}^s$. Then the two sequences
\[E_{\backslash i}.\ (I,S'_1,I,S'_2,...)\]
and
\[E_{\backslash i+1}.\ (I,\tilde{S}_1,I,\tilde{S}_2,...)\]
do not differ in their memory part, differ in their information parts by only the $p_{i+1}$th letter of $(L_1,...,L_N)$, and differ in their stabilizer parts by a sequence of errors in ${\cal{Z}}^s$. Thus one can write:
\[E_{\backslash i+1}.\ (I,\tilde{S}_1,I,\tilde{S}_2,...) = E_{\backslash i}.\ (I,S'_1,I,S'_2,...) * \Delta E\]
where the sequence $\Delta E$ verifies $|\Delta E|_M = 0$, $|\Delta E|_L = 1$ and $|\Delta E|_X = 0$. Since ${\cal{C}}$ is recursive:
\[|{\cal{C}}_{\infty}(\Delta E)| = \infty\]
And since ${\cal{C}}_{\infty}$ is a morphism:
\[{\cal{C}}_{\infty}(E_{\backslash i+1}.\ (I,\tilde{S}_1,I,\tilde{S}_2,...)) = {\cal{C}}_{\infty}(E_{\backslash i}.\ (I,S'_1,I,S'_2,...)) * {\cal{C}}_{\infty}(\Delta E)\]
We deduce that:
\[|{\cal{C}}_{\infty}(E_{\backslash i+1}.\ (I,\tilde{S}_1,I,\tilde{S}_2,...))| = \infty \]
And since this is true for all sequences $(\tilde{S}_K)_{K \in \mathbb{N^*}}$ of stabilizer errors in ${\cal{Z}}^s$, $b_{i+1} = 1$.
\end{proof}

\begin{definition}
A detour is a finite and non empty sequence of $1$, potentially completed with a $0$ in which case it is called a terminating detour.
\end{definition}

If $w_L \ge 1$, according to the previous lemma, either $b_0$ or $b_1$ is equal to $1$. Starting from that first $1$, the rest of the trace of $E$ is a concatenation of detours, which are all terminating except maybe the last detour:
\begin{lemma}
If $w_L \ge 1$, there exist $c \ge 1$ and $c$ integers $v_1 < ... < v_c \le w_L$ such that:\\
- $v_1 = 0$  or  $1$\\
- for all i in $[\![0;c-1]\!]$, $(b_{v_i},...,b_{v_{i+1}-1})$ is a terminating detour\\
- $(b_{v_c},..., b_{w_L})$ is a detour
\end{lemma}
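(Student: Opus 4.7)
The plan is an iterative construction that reads off the $v_i$ directly from the trace, using the previous lemma (no two consecutive zeros) as the only structural input.

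First I would locate $v_1$. By the previous lemma applied at $i=0$, we cannot have $b_0 = 0$ and $b_1 = 0$ simultaneously (note $w_L \ge 1$ ensures that index $1$ exists in the trace). So either $b_0 = 1$, in which case set $v_1 := 0$, or $b_0 = 0$ and $b_1 = 1$, in which case set $v_1 := 1$. In both cases $b_{v_1} = 1$.

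Next, I would build the remaining $v_i$ by recursion. Assume $v_i \le w_L$ has been defined with $b_{v_i} = 1$. Consider the set $Z_i = \{ j \in [\![v_i, w_L]\!] : b_j = 0\}$. If $Z_i = \emptyset$, then $(b_{v_i}, \ldots, b_{w_L})$ is a non-empty sequence of $1$s, hence a (non-terminating) detour; set $c := i$ and stop. Otherwise let $j_i := \min Z_i$. By minimality $b_{v_i} = b_{v_i+1} = \cdots = b_{j_i - 1} = 1$ and $b_{j_i} = 0$, so $(b_{v_i}, \ldots, b_{j_i})$ is a terminating detour. If $j_i = w_L$ this terminating detour reaches the end, so set $c := i$ and stop (a terminating detour is in particular a detour, as required for the final block). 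If $j_i < w_L$, then by the previous lemma $b_{j_i + 1} = 1$; set $v_{i+1} := j_i + 1$ and continue.

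Termination is immediate since $v_{i+1} \ge v_i + 2 > v_i$ at each non-final step and all indices are bounded by $w_L$. The construction yields $v_1 < v_2 < \cdots < v_c \le w_L$ with $v_1 \in \{0,1\}$, terminating detours $(b_{v_i}, \ldots, b_{v_{i+1}-1})$ for $i \in [\![1, c-1]\!]$, and a final detour $(b_{v_c}, \ldots, b_{w_L})$, exactly as claimed. There is no real obstacle here; the only point worth being careful about is the edge case $j_i = w_L$, where the last block is simultaneously a terminating detour and the allowed final (possibly non-terminating) detour, so the statement is still satisfied.
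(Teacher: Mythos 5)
Your proof is correct and fills in, in the obvious greedy way, a step that the paper itself does not prove but simply asserts as a direct consequence of the preceding lemma (no two consecutive zeros in the trace). You correctly identify the starting index $v_1 \in \{0,1\}$ using the previous lemma at $i = 0$, and the iterative peeling-off of maximal blocks of $1$s followed by a single $0$ is exactly the decomposition the paper has in mind; your careful handling of the edge case where the final block is itself a terminating detour (which, by the definition, is still a detour) is the one point that deserves the attention you gave it. One minor notational remark: the paper's statement writes $i \in [\![0;c-1]\!]$, but $v_0$ is not defined at this point; this is evidently a typo for $[\![1;c-1]\!]$, which is the range you implicitly use and which matches the summation over $i \in [\![1;c]\!]$ in the subsequent lemma of Appendix A.
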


As a convention, let $v_{c+1} = w_L + 1$, so that $(b_{v_i},...,b_{v_{i+1}-1})$ is also the expression of a detour if $i = c$.\\

The number $c$ of detours is less or equal to $\lfloor \frac{w_L}{2}+1 \rfloor$. Indeed for each $i \in [\![1;c-1]\!]$, the $i$th detour contains at least a $1$ and a $0$, whereas the last detour contains at least a $1$; this implies that $w_L+1 \ge 2 (c-1) + 1$, in other words $c \le w_L/2 + 1$.

\subsection{The upper bounds}

We are still with an $[\![n,k,s,m]\!]$ recursive morphism ${\cal{C}}$, and we note $\eta$ its speed. Let $a_N(w,\le d)$ (for $d \in \mathbb{R}$) and $a_N(w,d)$ (for $d \in \mathbb{N}$) be the numbers of sequences $(M,L_1,...,L_N) \in {\cal{P}}^m * {({\cal{P}}^k)}^N$ of weight $w$, which are part of an undetected input sequence:
\[E = (M,L_1,S_1,...,L_N,S_N) \in {\cal{P}}^m * {({\cal{P}}^k * {\cal{Z}}^s)}^N\]
such that, respectively, $|C_N(E)| \le d$ and $|C_N(E)| = d$.
The above characterisation of detours enables to prove the result (the proof is given in appendix):

\begin{theorem}\label{Inner}
\[a_N(w,d) \le a_N(w,\le d) \le 2^m 2^w (|{\cal{P}}|-1)^w \binom{k N + 1}{\lfloor \frac{w}{2} \rfloor + 1} \binom{\eta k(w+d)+1}{\lceil \frac{w}{2} \rceil}\]
\end{theorem}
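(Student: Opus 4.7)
The plan is to reconstruct each pair $(M,L_1,\dots,L_N)$ contributing to $a_N(w,\le d)$ from a limited amount of combinatorial data---the support and values of $M$, the trace of the padded input, the positions of the non-identity information letters, and their values---and to bound the number of choices at each step. First, I fix such a pair, set $w_M=|M|$, $w_L=|(L_1,\dots,L_N)|$, so $w_M+w_L=w$, and choose a stabilizer sequence $(S_1,\dots,S_N)\in({\cal Z}^s)^N$ witnessing $|{\cal C}_N(M,L_1,S_1,\dots,L_N,S_N)|\le d$.

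Next, I would apply the trace decomposition of the previous subsection: the trace $(b_0,\dots,b_{w_L})$ of the input splits into $c$ detours with $c\le\lfloor w_L/2\rfloor+1\le\lfloor w/2\rfloor+1$, whose starts I label $v_1,\dots,v_c$, and I write $p_1<\dots<p_{w_L}$ for the information positions of the non-identity letters. I would then combine the stabilization property of $\mathbb{M}_1$ with the speed lemma to control the total \emph{active span} of the detours: between the $j$-th and $(j+1)$-th non-identity letters lying inside a common detour, the memory remains in $\mathbb{M}_1$, hence the physical output between them has weight at least $\lfloor(t_{j+1}-t_j)/\eta\rfloor$, where $t_j$ is the time step containing $p_j$. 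Summing these lower bounds against the budget $d$ yields that the union of the detour intervals is contained in a set $A\subset\{1,\dots,kN\}$ of size at most $\eta k(w+d)+1$.

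The counting itself then proceeds by choosing, successively: the support of $M$ (bounded by $2^m$) and its non-identity values, contributing $(|{\cal P}|-1)^{w_M}$; the trace pattern (at most $2^{w_L+1}$ patterns, which together with the sum over $w_M$ is absorbed into a global factor $2^w$); the $c$ detour starts $p_{v_1}<\dots<p_{v_c}$ in $\{1,\dots,kN\}$, contributing $\binom{kN+1}{c}\le\binom{kN+1}{\lfloor w/2\rfloor+1}$; the remaining $w_L-c$ non-identity information positions inside $A$, contributing $\binom{|A|}{w_L-c}\le\binom{\eta k(w+d)+1}{\lceil w/2\rceil}$; and finally the $w_L$ values of the non-identity information letters, contributing $(|{\cal P}|-1)^{w_L}$. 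Multiplying the factors reproduces the announced bound.

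The hard part will be the second step: converting the pointwise speed lemma, which only controls pure-identity segments starting in $\mathbb{M}_1$, into a uniform bound on the total detour span, while treating terminating and non-terminating detours on the same footing and accounting for the fact that when $k>1$ several non-identity letters may share a single time step, so that information positions and time indices must be translated carefully. A closely related technicality is the binomial comparison that yields the factor $\lceil w/2\rceil$ in the second binomial coefficient: this requires staying in the increasing range of $\binom{A}{\cdot}$, which relies on a careful use of the detour bound $c\le\lfloor w_L/2\rfloor+1$ and may need a case split (or a Vandermonde-type summation over the admissible values of $c$) when the detour structure is extremal, e.g.\ a single very long detour with $c=1$.
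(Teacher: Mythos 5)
Your outline follows the same skeleton as the paper's proof — trace, detour decomposition, speed lemma to bound active span, then a combinatorial count over memory support/values, trace pattern, detour starts, remaining positions, and letter values — so the approach is essentially right. But two steps in the counting are not actually carried out, and one of them, as you state it, is simply false.

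First, the claim that ``the union of the detour intervals is contained in a set $A\subset\{1,\dots,kN\}$ of size at most $\eta k(w+d)+1$'' is not directly usable as written. The detour intervals are scattered across $[\![1,kN]\!]$; what is bounded is the \emph{sum of their lengths}, not their span. Since $A$ depends on the (unknown) input, you cannot count ``positions inside $A$'' until $A$ itself has been enumerated, which would reintroduce the full $\binom{kN}{\cdot}$. The paper resolves this by constructing an explicit reversible shift that, conditionally on the $c$ detour starts $p_{v_1},\dots,p_{v_c}$, maps the remaining positions injectively into the \emph{fixed} interval $[\![1,\sum_i\delta p^{(i)}]\!]\subset[\![1,\eta k(w_L+d)]\!]$. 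That injection is what legitimizes the factor $\binom{\eta k(w+d)+1}{\cdot}$.

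Second, and more seriously, your maximization over $c$ replaces each binomial by its own worst case: $\binom{kN+1}{c}\le\binom{kN+1}{\lfloor w/2\rfloor+1}$ and $\binom{|A|}{w_L-c}\le\binom{\eta k(w+d)+1}{\lceil w/2\rceil}$. The second inequality fails for small $c$: at $c=1$ the index $w_L-1$ typically far exceeds $\lceil w/2\rceil$, and since $|A|\approx\eta k(w+d)\gg w_L$ both indices sit in the increasing range of $\binom{|A|}{\cdot}$, so $\binom{|A|}{w_L-1}>\binom{|A|}{\lceil w/2\rceil}$. You cannot bound the factors independently. The paper bounds the \emph{product} $\binom{kN+1}{c}\binom{\mathcal{B}}{w_L-c+1}$ directly, showing by a ratio argument (using $\mathcal{B}\le kN+1$ and $c\le w_L-c$ for $c\le c_{\max}-1$) that the product is nondecreasing in $c$, hence maximized at $c_{\max}=\lfloor w_L/2\rfloor+1$; the drop in the first factor as $c$ shrinks more than compensates for the gain in the second. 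The Vandermonde alternative you allude to would collapse the two binomials into one, which destroys the separable $N^{w/2}(w+d)^{w/2}/w^w$ structure that the corollary and the partial-sum estimates rely on. You flag this as a technicality, but it is in fact the crux, and your proposal does not supply the argument that closes it.
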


Notice that $a_N(d, w)$ in the case where ${\cal{C}}$ is the truncated decoder $\bar{C}$ corresponds to the number of possible sequences $(P_1,...,P_N,M')$ of weight $d$ such that the weight of $(M,L_1,...,L_N) = \bar{C}_N (P_1,...,P_N,M')$ has weight $w$. This corresponds exactly to the value of $a_N(w,d)$ in the case where ${\cal{C}}$ is the encoder $C$. This is what will enable us to obtain the two upper bounds on $a_N(w,d)$ concerning the encoder $C$.
By using the following bound on binomials, where $v \le u$:
\[\binom{u}{v} \le \left(\frac{u.e}{v}\right)^v\]
the two upper bounds can be written in the form of this corollary:

\begin{corollary}
Let $C$ be an encoder, and let $a_N(w,\le d)$ and $a_N(w, d)$ be the number of possible sequences $(M,L_1,...,L_N)$ of weight $w$, which are part of an input sequence $E = (M,L_1,S_1,...,L_N,S_N)$ where all the $S_i$ belong to ${\cal{Z}}^s$, and such that $C_N(E)$ has weight respectively less or equal to $d$, and equal to $d$. Then:\\
{\bf Bound 1I:} If $C$ is recursive:
\[ a_N(w,\le d) \le O(1)^w \frac{N^{\frac{w}{2}} (w+d)^{\frac{w}{2}}}{w^w}\]
{\bf Bound 2I:} If $\bar{C}$ is totally recursive:
\[ a_N(w, d) \le O(1)^d \frac{N^{\frac{d}{2}} (w+d)^{\frac{d}{2}}}{d^d}\]
\end{corollary}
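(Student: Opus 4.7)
The plan is to derive both bounds by feeding the result of Theorem \ref{Inner} through the stated inequality $\binom{u}{v}\le (ue/v)^{v}$ and absorbing every factor independent of the exponent into a single $O(1)^{w}$ (respectively $O(1)^{d}$) term.

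For Bound 1I, since $C$ is recursive, Theorem \ref{Inner} applies directly to the encoder. I would bound the two binomials separately: the first by $\bigl(e(kN+1)/(\lfloor w/2\rfloor+1)\bigr)^{\lfloor w/2\rfloor+1}$, which after using $\lfloor w/2\rfloor+1\ge w/2$ and absorbing the unit offset in the exponent yields $O(1)^{w}\cdot N^{w/2}/w^{w/2}$; the second by $\bigl(e(\eta k(w+d)+1)/\lceil w/2\rceil\bigr)^{\lceil w/2\rceil}$, giving $O(1)^{w}\cdot (w+d)^{w/2}/w^{w/2}$. Multiplying these together with the prefactor $2^{m}2^{w}(|{\cal{P}}|-1)^{w}$ and collecting all $w$-independent quantities (the fixed integers $m$, $k$, $\eta$, $|{\cal{P}}|$, together with the base $e$) into the $O(1)^{w}$ factor delivers the stated bound $O(1)^{w}N^{w/2}(w+d)^{w/2}/w^{w}$.

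For Bound 2I, I would follow the remark made in the paragraph immediately preceding the corollary and apply Theorem \ref{Inner} not to $C$ but to the truncated decoder $\bar{C}$. Since $\bar{C}$ is totally recursive, its underlying morphism is recursive, so the theorem is applicable. Viewing $\bar{C}$ as an $[\![k,n,0,m]\!]$ morphism, its ``input'' plays the role of $(M',P_{1},\ldots,P_{N})$ and its ``output'' the role of $(L_{1},\ldots,L_{N},M)$; using the equality $a_{N}^{C}(w,d)=a_{N}^{\bar{C}}(d,w)$ spelled out in the text, Theorem \ref{Inner} yields a bound of exactly the same shape but with $w$ and $d$ exchanged. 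The binomial simplification is then identical to the one used for Bound 1I, with $d$ playing the role of $w$, and produces $O(1)^{d}\cdot N^{d/2}(w+d)^{d/2}/d^{d}$.

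No genuine obstacle is expected: both bounds are mechanical consequences of Theorem \ref{Inner} and the binomial inequality. The only point that deserves a moment of attention is the correspondence $a_{N}^{C}(w,d)=a_{N}^{\bar{C}}(d,w)$ and the accompanying parameter swap (exchange of $n$ and $k$, and $s=0$ in the parameters of $\bar{C}$), all of which gets absorbed into the $O(1)^{d}$ constant and therefore leaves the final expression in exactly the stated form.
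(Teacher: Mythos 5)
Your proposal takes exactly the route the paper indicates: apply Theorem \ref{Inner} — directly to $C$ for Bound 1I, and to $\bar C$ (as a recursive $[\![k,n,0,m]\!]$ morphism) for Bound 2I via the identification $a_N^{C}(w,d)\leftrightarrow a_N^{\bar C}(d,w)$ noted just before the corollary — and then simplify both binomials with $\binom{u}{v}\le(ue/v)^{v}$. You have correctly singled out the only nontrivial conceptual step, namely that total recursiveness of $\bar C$ means the underlying morphism is recursive so Theorem \ref{Inner} applies, and that the parameter swap $n\leftrightarrow k$, $s=0$ is harmless once absorbed into constants. (One small caveat: the map $(M,L_1,\dots,L_N)\mapsto (M',P_1,\dots,P_N)$ is only an injection, not a bijection, since a weight-$d$ physical sequence need not decode to stabilizer letters lying in ${\cal Z}^s$; but the resulting inequality $a_N^{C}(w,d)\le a_N^{\bar C}(d,w)$ is all that is needed, and the paper glosses over this in the same way.)

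Where the argument is too quick is the phrase ``absorbing the unit offset in the exponent.'' In Theorem \ref{Inner} the two lower binomial arguments are $\lfloor w/2\rfloor+1$ and $\lceil w/2\rceil$, whose sum is $w+1$, not $w$. Since $w\le kN+m$, the base $e(kN+1)/(\lfloor w/2\rfloor+1)$ is always $\ge 1$, so raising it to the extra power of (roughly) $1$ leaves a leftover factor of order $(kN+1)/w$, which grows with $N$ for fixed $w$ and cannot be hidden in an $O(1)^w$. Concretely, at $w=2$ Theorem \ref{Inner} only gives $a_N(2,\le d)=O(N^2(d+2))$, whereas Bound~1I asserts $O(N(d+2))$. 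This mismatch is not something you introduced: the appendix lemma behind Theorem \ref{Inner} charges $\binom{kN}{c}$ for the anchor positions $p_{v_1},\dots,p_{v_c}$ even when $v_1=0$, although in that case $p_{v_1}=0$ is forced and only $c-1$ of them are free; a tight count gives two binomials whose lower arguments sum to $w_L$ rather than $w_L+1$, after which your derivation does deliver the stated $O(1)^w$ form. So the plan is the intended one, but as written it leans on a slightly tighter version of Theorem \ref{Inner} than the one the text actually states, and a careful proof should either sharpen that theorem or justify the extra factor directly.
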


\section{The outer encoder}

The outer encoder consists in an $[\![n,k]\!]$ encoder $C$ repeated blockwisely $N$ times ($C$ and $N$ are not the same as the inner encoder).

\begin{definition}
Let $C$ be a $[\![n,k]\!]$ encoder and let $N \in \mathbb{N}^*$. The blockwise encoder $C^{\otimes N}$ is an $[\![Nn,Nk]\!]$ encoder defined by:
\[C^{\otimes N}(L_1,...,L_N,S_1,...,S_N) = C(L_1,S_1).\ ...\ .C(L_N,S_N)\]
where for all $i \le N$, $L_i \in {\cal{P}}^k$ and $S_i \in {\cal{P}}^{n-k}$.\\
\end{definition}

\begin{lemma}{\bf Bound 1E:}
Let $C$ be an encoder of distance $d_c$ and degenerate distance $d_q \ge 2$. For all $d > 0$ and $N > 0$, let $a^{\otimes N}(d)$ be the number of input sequences $E \in {\cal{P}}^{Nk} \times {\cal{Z}}^{N(n-k)}$ such that $|E|_L > 0$ and $|C^{\otimes N}(E)|=d$. Then:
\begin{align*}
a^{\otimes N}(d) & \le O(1)^d \left(\frac{N}{d}\right)^{\frac{d-d_c}{d_q}+1} & \mathrm{if}\ d \ge d_c \\
& = 0 & \mathrm{if} d < d_c
\end{align*}
\end{lemma}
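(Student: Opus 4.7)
The plan is to classify the input blocks of $E$, bound the number of non-trivial blocks using the weight budget $d$, and count configurations by choosing positions, types, and individual block inputs.

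Since $S_i \in {\cal{Z}}^{n-k}$ by hypothesis, each block $(L_i, S_i)$ falls into one of three categories: trivial ($L_i = I^k$ and $S_i = I^{n-k}$), harmless ($L_i = I^k$ but $S_i \neq I^{n-k}$), or harmful ($L_i \neq I^k$). Let $t_L$ and $t_0$ denote the numbers of harmful and harmless blocks, and set $t = t_L + t_0$. Each harmful block contributes to $C^{\otimes N}(E)$ a harmful error of $C$, hence of weight $\geq d_c$, and each non-trivial harmless block contributes a non-trivial undetected error of $C$, hence of weight $\geq d_q$. Therefore $d \geq t_L d_c + t_0 d_q$, and combining with $t_L \geq 1$ (forced by $|E|_L > 0$) gives $t \leq T$, where $T = 1 + (d-d_c)/d_q$, and in particular $d \geq d_c$; this already settles the second case of the lemma.

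For each non-trivial block there are at most $A := |{\cal{P}}|^k |{\cal{Z}}|^{n-k}$ input choices, a constant depending only on $C$. The number of compositions $d = w_1 + \ldots + w_t$ in which $t_L$ prescribed parts are $\geq d_c$ and the remaining $t - t_L$ parts are $\geq d_q$ is bounded by $\binom{d - t_L d_c - (t-t_L) d_q + t - 1}{t - 1}$ (zero if the numerator is negative). Choosing the $t$ non-trivial positions among $N$ and which $t_L$ of them are harmful,
\[
a^{\otimes N}(d) \;\leq\; \sum_{t=1}^{T} \binom{N}{t}\, A^t \sum_{t_L=1}^{t} \binom{t}{t_L} \binom{d - t_L d_c - (t-t_L) d_q + t - 1}{t - 1}.
\]

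Finally, dominate by the $t = T$ term. For large $N$, the factor $\binom{N}{t}$ is increasing on $[1, T]$ and grows by $\sim N/t$ at each step, overwhelming the polynomial-in-$d$ growth of the composition binomial, which for $t = T$, $t_L = 1$ equals $\binom{T + d_c - 1}{T - 1}$, a polynomial of degree $d_c$ in $T$, hence $O(1)^d$; the factors $A^t$ and $\sum_{t_L} \binom{t}{t_L} \leq 2^t$ are similarly $O(1)^d$. Applying $\binom{N}{T} \leq (eN/T)^T$ with $T \asymp d/d_q$, the spurious factor $d_q^T$ is absorbed by $O(1)^d$, yielding
\[
a^{\otimes N}(d) \;\leq\; O(1)^d \left(\frac{N}{d}\right)^{T} \;=\; O(1)^d \left(\frac{N}{d}\right)^{\frac{d-d_c}{d_q}+1}.
\]
The main delicacy is verifying that sub-dominant terms $t < T$ do not swamp the leading one: this reduces to checking that the summand is increasing in $t$ on $[1, T]$, which holds because the geometric growth of $\binom{N}{t}$ dominates any fixed-degree polynomial behavior of the composition binomial for $N$ large compared to $d$.
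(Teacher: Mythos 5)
Your proof is correct and follows essentially the same strategy as the paper: bound the number $j$ (your $t$) of non-trivial blocks by $\lfloor (d-d_c)/d_q \rfloor + 1$ using the weight budget (one harmful block costs $\ge d_c$, the others $\ge d_q$), then count $\binom{N}{j}$ position choices times a per-block constant, dominate the sum by its top term, and finish with $\binom{u}{v} \le (eu/v)^v$. The extra bookkeeping you introduce (the harmless/harmful split $t_L,t_0$ and the stars-and-bars composition binomial) is harmless but redundant: once you have bounded each non-trivial block's input by a constant $A$, the partition of $d$ into block weights is already determined by the input, so the composition factor only loosens the bound; the paper simply bounds each non-trivial block by $|{\cal P}|^n$ and skips this. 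Two small blemishes worth noting: your claim that the composition binomial at $t=T$, $t_L=1$ equals $\binom{T+d_c-1}{T-1}$ is an arithmetic slip (it is in fact $\binom{T-1}{T-1}=1$), and your "the summand is increasing in $t$ on $[1,T]$" framing is unnecessarily delicate — since every factor other than $\binom{N}{t}$ is $O(1)^d$ and $T \le d/2 \le N/2$ (hence $\binom{N}{t}\le\binom{N}{T}$ on the range), one just bounds each of the at most $T$ terms by $O(1)^d\binom{N}{T}$ without any monotonicity of the full summand. Neither issue affects the conclusion.
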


\begin{proof}
Of course if $d > N$ then the bound is true since $a^{\otimes N}(d) = 0$, so we will consider that $d \le N$. Let $E = (L_1,...,L_N,S_1,...,S_N)$ where $|E|_L >0$, $(S_1,...,S_N) \in {({\cal{Z}}^{n-k})}^N$, and such that $|C^{\otimes N}(E)|=d$. Let $j > 0$ be the number of couples $(L_i,S_i)$ different from $(I,I)$. In other words all these $j$ couples are undetected errors for $C$, thus we have $|C(L_i,S_i)| \ge d_q$. Also, since $|E|_L > 0$, at least one of these couples is such that $L_i \ne I$. This couple is a harmful error for $C$ which implies $|C(L_i,S_i)| \ge d_c$. This first shows that necessarily $d\ge d_c$, and:
\[ d \ge (j-1)d_q + d_c\]
and thus:
\[j \le \lfloor\frac{d-d_c}{d_q}\rfloor+1\]
For each possible value for $j$, there are $\binom{N}{j}$ ways to chose the positions of the non zero sequences $(L_i,S_i)$. Each of these sequences can take less than $|{\cal{P}}|^n$ values whereas the remaining sequences are all fixed to $(I,I)$. This leads to the bound:
\[ a^{\otimes N}(d) \le \displaystyle\sum_{j=1}^{\lfloor\frac{d-d_c}{d_q}\rfloor+1} |{\cal{P}}|^{nj} \binom{N}{j}\]
Since $\lfloor\frac{d-d_c}{d_q}\rfloor+1 \le \frac{d}{2} \le \frac{N}{2}$, each term of the sum can be majored by the term where $j = \lfloor\frac{d-d_c}{d_q}\rfloor+1$. Thus:
\[ a^{\otimes N}(d) \le (\lfloor\frac{d-d_c}{d_q}\rfloor+1) |{\cal{P}}|^{n (\lfloor\frac{d-d_c}{d_q}\rfloor+1)} \binom{N}{\lfloor\frac{d-d_c}{d_q}\rfloor+1}\]
Finally the lemma is proved using the bound on binomials, where $v \le u$:
\[\binom{u}{v} \le {\left(\frac{u.e}{v}\right)}^v\]
\end{proof}

\begin{lemma}{\bf Bound 2E:}
Suppose that $|{\cal{P}}| > 2$. Let $C$ be an encoder of degenerate distance $d_q \ge 2$. There exists a constant $c \in ]0,1[$ such that for all $d$ and $N$:
\[a^{\otimes N}(d) \le c^d (|{\cal{P}}| - 1)^d \binom{N n}{d}\]
\end{lemma}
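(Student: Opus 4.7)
The plan is to upper-bound $a^{\otimes N}(d)$ through the weight enumerator of the block-level undetected-error subgroup, and to exploit $d_q\ge 2$ together with $|\mathcal{P}|>2$ to obtain a strict coefficient-wise inequality that can be raised to the $N$th power. Let $U := C(\mathcal{P}^k \times \mathcal{Z}^{n-k}) \leq \mathcal{P}^n$ be the single-block subgroup of undetected errors. Then $a^{\otimes N}(d)$ is at most the number $B_d$ of weight-$d$ elements of $U^{\otimes N} \leq \mathcal{P}^{Nn}$, since dropping the constraint $|E|_L>0$ can only enlarge the count; and if $f(z) := \sum_{w=0}^{n} U_w z^w$ denotes the weight enumerator of $U$, independence across blocks yields $\sum_d B_d z^d = f(z)^N$.

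I would then establish a coefficient-wise domination $f(z) \leq (1+c(|\mathcal{P}|-1)z)^n$, i.e.\ $U_w \leq c^w \binom{n}{w}(|\mathcal{P}|-1)^w$ for every $w$, for some $c\in(0,1)$ depending only on the seed encoder $C$. Such inequalities between power series with nonnegative coefficients are stable under taking $N$th powers, so this would give
\[ B_d \;\leq\; \bigl[z^d\bigr]\bigl(1+c(|\mathcal{P}|-1)z\bigr)^{Nn} \;=\; c^d (|\mathcal{P}|-1)^d \binom{Nn}{d}, \]
which is exactly the stated bound.

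The technical core is the strict per-coefficient inequality $c_w := U_w/\binom{n}{w}(|\mathcal{P}|-1)^w < 1$ for every $w \geq d_q$ (the cases $w=0$ and $1\leq w<d_q$ being automatic since $U_0=1$ and $U_w=0$). I will argue at the support level: fix a support $T \subseteq [n]$ of size $w$ and suppose for contradiction that $U$ contained every one of the $(|\mathcal{P}|-1)^w$ weight-$w$ errors with support exactly $T$. Because $|\mathcal{P}|>2$ there exist two distinct non-$I$ values $a,b\in\mathcal{P}$, so for some chosen $i\in T$ I could pick $F,G\in U$, both supported on $T$, agreeing on every coordinate of $T\setminus\{i\}$ and satisfying $F_i=a$, $G_i=b$. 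Then $F*G^{-1}\in U$ has support exactly $\{i\}$ and weight $1$, contradicting $d_q\geq 2$. Hence each support contributes at most $(|\mathcal{P}|-1)^w-1$ elements, and summing over the $\binom{n}{w}$ supports gives $c_w<1$. Setting $c := \max_{d_q\leq w\leq n} c_w^{1/w}$—a maximum over finitely many values in $[0,1)$—produces the required $c\in(0,1)$, with $c_w\leq c^w$ holding by construction.

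The main obstacle is the strict inequality $c_w<1$, which crucially relies on $|\mathcal{P}|>2$: without two distinct non-$I$ values per coordinate, the weight-$1$-difference construction collapses, and indeed Bound~2E fails in general in the classical setting—consistent with the authors' earlier remark that this bound is only useful in the quantum case.
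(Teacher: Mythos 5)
Your proposal is correct and follows essentially the same route as the paper: bound $a^{\otimes N}(d)$ by dropping the constraint $|E|_L>0$, reduce to a per-block weight-distribution bound $U_w \le c^w\binom{n}{w}(|\mathcal{P}|-1)^w$ with $c<1$, and multiply over the $N$ blocks (your generating-function phrasing is just a cleaner packaging of the paper's explicit sum over $(d_1,\dots,d_N)$ with $\sum d_i = d$). The one genuine variation is your argument for the strict inequality $c_w<1$: you take two distinct non-identity letters $a,b\in\mathcal{P}$ (guaranteed by $|\mathcal{P}|>2$) and construct $F,G$ supported on the same set $T$ differing at one coordinate, so $F*G^{-1}$ has weight $1$. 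The paper instead picks $X\in\mathcal{P}\setminus\mathcal{Z}$ and some $Z\in\mathcal{Z}$, and compares the two specific sequences $X^i.I^{n-i}$ and $Z.X^{i-1}.I^{n-i}$; this works but implicitly requires $\mathcal{Z}$ to contain a non-identity element, which is true in the quantum setting but is not formally forced by the hypothesis $|\mathcal{P}|>2$. Your per-support construction avoids that implicit dependence, and also shows at least one element missing from \emph{every} support of size $w$ rather than one element missing globally — a slightly stronger fact, though both suffice to conclude $c_w<1$ and produce $c\in(0,1)$.
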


\begin{proof}
Let $X \in {\cal{P}} \backslash {\cal{Z}}$ (a non empty set since $|{\cal{P}}| > 2$), and let $Y = X*Z^{-1}$. For all $i \in [\![0,n]\!]$, let:
\[{\cal{C}}_i = \{ E \in C({\cal{P}}^k * {\cal{Z}}^{n-k}), |E| = i\} \]
And let:
\[c_i = \frac{|{\cal{C}}_i|}{(|{\cal{P}}| - 1)^i\binom{n}{i}}\]

In particular, $c_0 = 1$. Let us show that for $i >0$, $c_i < 1$, or in other words, that $C({\cal{P}}^k * {\cal{Z}}^{n-k})$ does not contain all the sequences of weight $i$. Of course, $c_1 = 0$ since $d_q \ge 2$. Let us consider the case $i > 1$. If $C({\cal{P}}^k * {\cal{Z}}^{n-k})$ contains the two sequences of weight $i$:
\begin{align*}
E_1 = & X^i.I^{n-i} \mathrm{, and}\\
E_2 = & Z.X^{i-1}.I^{n-i}
\end{align*}
then since it is a subgroup of ${\cal{P}}^n$, it contains the sequence of weight $1$:
\[E_1 * E_2^{-1} = Y.I^{n-1}\]
This contradicts the fact that $d_q \ge 2$. Thus, $E_1$ and $E_2$ are not both in ${\cal{C}}_i$, and $c_i < 1$. Let $c = \max\{c_i^{1/i}, 1 \le i \le n\}$, so that $c \in ]0,1[$ and for all $i \in [\![0,n]\!]$, $c_i \le c^i$. Now consider an input sequence $E \in {\cal{P}}^{Nk} \times {\cal{Z}}^{Nn}$, where $|E|_L > 0$ and $|C^{\otimes N}(E)| = d$. The output $C^{\otimes N}(E)$ is the concatenation of $N$ sequences in $C({\cal{P}}^k * {\cal{Z}}^{n-k})$ such that the sum of their weights is equal to $d$. Thus:
\[a^{\otimes N}(d) \le \sum_{\substack{ (d_1,...,d_N) \\ \sum d_i = d \\ 0 \le d_i \le n }} \displaystyle\prod_{i=1}^{N} |{\cal{C}}_{d_i}| \le \sum_{\substack{ (d_1,...,d_N) \\ \sum d_i = d \\ 0 \le d_i \le n }} \displaystyle\prod_{i=1}^{N} c^{d_i} (|{\cal{P}}| - 1)^{d_i}\binom{n}{d_i}\]
The product of the terms $c^{d_i}$ is always equal to $c^d$. The product of the remaining terms, summed over $(d_1,...,d_N)$, counts exactly the number of sequences of size $N n$ and weight $d$. This proves the lemma.

\end{proof}

\section{Upper bounds on the distance of a turbo-encoder}

\subsection{Formal definition of a turbo-encoder}

A turbo-encoder $T_N$ of size $N$ is a concatenation of three operations. It requires a $[\![n_{out},k_{out}]\!]$ encoder $C_{out}$, an interleaver $\Pi$, and an $[\![n_{in},k_{in},m_{in}]\!]$ encoder $C_{in}$. An interleaver is the following operation:
\begin{definition}
An interleaver $\Pi$ of size $N$ is an automorphism of ${\cal{P}}^N$ composed of two operations:
\begin{itemize}
\item First, a permutation $\pi$ of the $N$ positions of the letters of the sequence
\item Second, a sequence $(\pi_1,...,\pi_N)$ of $N$ automorphisms of ${\cal{P}}$, applied at each one of the letters
\end{itemize}
Thus an interleaver $\Pi$ transforms a sequence $E = E_1.\ ...\ .E_N$ into: 
\[ \pi_1(E_{\pi(1)}).\ ...\ .\pi_N(E_{\pi(N)})\]

The set of interleavers of size $N$ is noted $\mathbb{P}_N$.
\end{definition}

\begin{definition}
Consider:
\begin{itemize}
 \item $C_{out}$ an $[\![n_{out},k_{out}]\!]$ encoder and $C_{in}$ an $[\![n,k,m]\!]$ encoder
 \item $N = N_{out} \in \mathbb{N^*}$ and $N_{in} \in \mathbb{N^*}$ such that $N_{out} n_{out} = N_{in} k_{in} + m_{in}$
 \item $\Pi$ an interleaver of size $N_{out} n_{out}$.
\end{itemize}
The turbo-encoder $T_N$ based on $C_{out}$, $\Pi$, and $C_{in}$, is an $[\![N_{in} n_{in} + m_{in}, N_{out} k_{out}]\!]$ encoder which does the following transformation. Consider a sequence $E \in {\cal{P}}^{N_{out} k_{out}} \times {\cal{P}}^{N_{in} (n_{in} - k_{in})}$ written in the form:
\[E = (L_1,...,L_{N_{out}},S_1,...,S_{N_{out}},S'_1,...,S'_{N_{in}})\]
Then $T_N(E)$ is obtained by these three steps:
\begin{itemize}
 \item First, apply $C_{out}^{\otimes N}$ at the first $2 N$ errors of $E$:
\[E' = C_{out}^{\otimes N}(L_1,...,L_N,S_1,...,S_N)\]
\item Then, apply $\Pi$ at $E'$:
\[E'_{perm} = \Pi(E')\]
Since $N_{out} n_{out} = N_{in} k_{in} + m_{in}$, $E'_{perm}$ can be written in the form:
\[E'_{perm} = (M',L'_1,...,L'_{N_{in}})\]
where $M'$ is of size $m_{in}$ and all the other errors $L_i$ are of size $k_{in}$. Note that the first $m_{in}$ letters of $E'_{perm}$ are specialized into memory letters for the next step.
\item Finally, insert $(S'_1,...,S'_{N_{in}})$ into $E'_{perm}$ by putting each $S'_i$ after $L'_i$, and apply ${C_{in}}_{N_{in}}$:
\[T_N(E) = {C_{in}}_{N_{in}}(M',L'_1,S'_1,...,L'_{N_{in}},S'_{N_{in}})\]
Note that even the memory part of this last output counts as part of the physical output with respect to the global protocol.
\end{itemize}
A random turbo-encoder based on $C_{out}$ and $C_{in}$ is a turbo-encoder based on $C_{out}$, an interleaver $\Pi$ chosen randomly with a uniform distribution over $\mathbb{P}_{N n_{out}}$, and $C_{in}$.
$N$ is called the length of the turbo-encoder, and an integer $N$ is said to be eligible for $C_{out}$ and $C_{in}$ if there exists an integer $N_{in}$ such that $N n_{out} = N_{in} k_{in} + m_{in}$.
\end{definition}

If we go back to describing the real protocol lying behind a turbo-encoder, a state of information of size $N_{out} k_{out}$ is encoded into a state of size $N_{in} n_{in} + m_{in}$, by first encoding it into a state of size $N_{out} n_{out} = N_{in} k_{in} + m_{in}$, then interleaving the positions of the obtained state, then encoding it again into a state of size $N_{in} n_{in} + m_{in}$. At each of the first and the last steps of the encoding, ancillary positions are added before the encoding is done. This is what corresponds, in the formal protocol, to the insertion of the stabilizer errors $S_i$ and $S'_i$.

From now on, for a given turbo-encoder $T_N$ and a given input sequence $E$ of the turbo-encoder, we will systematically write $E'$ and $E'_{perm}$ to refer to the intermediate sequences obtained after applying $C_{out}^{\otimes N}$ and after applying $\Pi$.

\subsection{Sketch of the counting argument}

We will sew the results obtained on the weight distributions for the inner encoder $a_{N_{in}}(w,d)$, $a_{N_{in}}(w,\le d)$, and the outer encoder $a^{\otimes N}(d)$, $a^{\otimes N}(\le d)$, to obtain the desired upper bounds on the distance of the turbo-encoder. For this purpose, the tools will be the two following lemmas. Let $C_{out}$ be a $[\![n_{out},k_{out}]\!]$ encoder and $C_{in}$ be an $[\![n_{in},k_{in},m_{in}]\!]$ encoder. Let $T_N$ be a random turbo-encoder based on $C_{out}$ and $C_{in}$ where $N$ is an eligible integer for $C_{out}$ and $C_{in}$. Consider the probabilities related to the following events:
\begin{itemize}
 \item $p_N(d)$ and $p_N(\le d)$: there exists a harmful input sequence $E$ such that, respectively, $|T_N(E)| = d$ and $|T_N(E)| \le d$. Note that $p_N(\le d)$ is the probability that $d_c(T_N) \le d$.
 \item $p_N(w,d)$ and $p_N(w,\le d)$: there exists a harmful input sequence $E$ such that $|E'|=w$ and, respectively, $|T_N(E)| = d$ and $|T_N(E)| \le d$.
\end{itemize}

Since the interleaver $\Pi$ is chosen at random, these probabilities are only functions of $C_{out}$, $C_{in}$ and $N$. These probabilities are defined for all integers $w$ and $d$, and $p_N(\le d)$ and $p_N(w,\le d)$ are also defined when $d$ is real.

\begin{lemma}
\begin{align*}
\forall (w,d) \in \mathbb{N} \times \mathbb{N}, & \ p_N(w,d) \le \frac{a^{\otimes N}(w)\ a_{N_{in}}(w,d)}{{(|{\cal{P}}| - 1)}^w \binom{N n_{out}}{w}}\\
\forall (w,d) \in \mathbb{N} \times \mathbb{R}, & \ p_N(w,\le d) \le \frac{a^{\otimes N}(w)\ a_{N_{in}}(w,\le d)}{{(|{\cal{P}}| - 1)}^w \binom{N n_{out}}{w}}
\end{align*}
\end{lemma}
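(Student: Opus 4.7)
The plan is to prove the first inequality; the second is obtained by replacing ``$=d$'' with ``$\le d$'' throughout. The core step is the following uniformity claim: for any \emph{fixed} sequence $E' \in {\cal{P}}^{N n_{out}}$ of weight $w$, the image $\Pi(E')$ under a uniformly random interleaver $\Pi$ is itself uniformly distributed over the $(|{\cal{P}}|-1)^w \binom{N n_{out}}{w}$ sequences of weight $w$. I would verify this directly from the definition of an interleaver: the uniform random permutation $\pi$ places the $w$ non-identity positions of $E'$ at a uniformly random $w$-subset of $[\![1,N n_{out}]\!]$, while at each resulting non-identity position the independent uniform letter-wise automorphism $\pi_i$ sends the incoming non-identity letter to a uniformly random element of ${\cal{P}} \setminus \{I\}$. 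The latter statement is vacuous for ${\cal{P}} = \mathbb{F}_2$ and follows for the quantum case from $\mathrm{Aut}(\{I,X,Y,Z\}) \simeq \mathrm{GL}_2(\mathbb{F}_2) \simeq S_3$ acting transitively with equal-size stabilizers on the three non-identity letters.

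Once uniformity is in hand, the rest is a union bound. Let $A(w)$ denote the set of sequences $E' \in {\cal{P}}^{N n_{out}}$ of weight $w$ of the form $E' = C_{out}^{\otimes N}(L_1,\dots,L_N,S_1,\dots,S_N)$ with $(S_1,\dots,S_N) \in {\cal{Z}}^{N(n_{out}-k_{out})}$ and at least one $L_i \ne I$; since $C_{out}^{\otimes N}$ is an isomorphism on its domain, $|A(w)| = a^{\otimes N}(w)$. Let $B(w,d)$ denote the set of $F \in {\cal{P}}^{N n_{out}}$ of weight $w$ such that, writing $F = (M',L'_1,\dots,L'_{N_{in}})$, there exists $(S'_1,\dots,S'_{N_{in}}) \in {\cal{Z}}^{N_{in}(n_{in}-k_{in})}$ with $|{C_{in}}_{N_{in}}(M',L'_1,S'_1,\dots,L'_{N_{in}},S'_{N_{in}})| = d$; by the definition of the inner weight distribution, $|B(w,d)| = a_{N_{in}}(w,d)$. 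The event whose probability is $p_N(w,d)$ is contained in $\bigcup_{E' \in A(w)} \{\Pi(E') \in B(w,d)\}$, and a union bound together with the uniformity of $\Pi(E')$ yields
\[ p_N(w,d) \le \sum_{E' \in A(w)} \frac{|B(w,d)|}{(|{\cal{P}}|-1)^w \binom{N n_{out}}{w}} = \frac{a^{\otimes N}(w)\ a_{N_{in}}(w,d)}{(|{\cal{P}}|-1)^w \binom{N n_{out}}{w}}. \]

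The main obstacle is the uniformity claim; everything else reduces to the definitions of $a^{\otimes N}(w)$ and $a_{N_{in}}(w,d)$. The second inequality is obtained by the identical argument, replacing $B(w,d)$ by its ``weight $\le d$'' analogue $B(w,\le d)$ of cardinality $a_{N_{in}}(w,\le d)$.
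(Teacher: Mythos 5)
Your proof is correct and takes essentially the same approach as the paper: fix the first $2N$ errors of the turbo-input so that $E'$ has weight $w$, observe that $\Pi(E')$ is uniformly distributed over the $(|{\cal{P}}|-1)^w\binom{N n_{out}}{w}$ sequences of weight $w$, and union-bound over the $a^{\otimes N}(w)$ possible outer inputs. The only addition is that you explicitly justify the uniformity claim (random permutation for the support, transitivity of $\mathrm{Aut}({\cal{P}})$ on ${\cal{P}}\setminus\{I\}$ for the letters), whereas the paper simply asserts it.
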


\begin{proof}
Let us prove the first inequality, the second one can be obtained with a similar reasoning. $p_N(w,d)$ is the probability that there exists an input sequence $E$ for the turbo-encoder $T_N$ such that $|E|_L > 0$, $|E'| = w$ and $|T_N(E)| = d$. In order to obtain this, a necessary condition is that the first $2 N$ errors of $E$ constitute one of the $a^{\otimes N}(w)$ harmful input sequences of $C_{out}^{\otimes N}$ such that the output (by $C_{out}^{\otimes N}$) has weight $w$. Take such a sequence of $2 N$ errors. $E'$ is now uniquely defined. Since $E'_{perm}$ is the image of $E'$ under the action of a random interleaver $\Pi$, it is uniformly distributed over the set of sequences of size $N n_{out}$ and weight $w$. Having $|T_N(E)| = d$ implies that $E'_{perm}$ is one of the $a_{N_{in}}(w,d)$ sequences of weight $w$ which are part of an undetected input sequence of ${C_{in}}_{N_{in}}$ such that the output by ${C_{in}}_{N_{in}}$ has weight $d$. This has probability:
\begin{eqnarray*}
\frac{a_{N_{in}}(w,d)}{{(|{\cal{P}}| - 1)}^w \binom{N n_{out}}{w}}
\end{eqnarray*}
The inequality of the lemma is proved using the union bound, by summing this probability over all the $a^{\otimes N}(w)$ possible values of the first $2 N$ errors.
\end{proof}

The second lemma is straightforward by using the union bound:

\begin{lemma}
For any real number $D$, and for all $x \in ]0,n_{in}[$:
\begin{align*}
p_N(\le D)
& \le \displaystyle\sum_{w=0}^{\lfloor D \rfloor} p_N(w,\le d) + \displaystyle\sum_{d=0}^{\lfloor D \rfloor}\displaystyle\sum_{w=\lfloor D \rfloor + 1}^{\lfloor x N \rfloor - 1} p_N(w,d) + \displaystyle\sum_{d = 0}^{\lfloor D \rfloor}\displaystyle\sum_{w= \lfloor x N \rfloor}^{N n_{out}} p_N(w,d)
\end{align*}
We call these three terms first, second and third partial sum.
\end{lemma}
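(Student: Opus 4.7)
The plan is to apply the union bound directly to the event whose probability is $p_N(\le D)$, namely the existence of a harmful input sequence $E$ of $T_N$ with $|T_N(E)| \le D$. For any such witness $E$, read off the pair of integers $w = |E'|$ and $d = |T_N(E)|$; by definition $d \in \{0, 1, \ldots, \lfloor D \rfloor\}$, while $w$ ranges a priori over $\{0, 1, \ldots, N n_{out}\}$.

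First I would partition the range of $w$ into three disjoint intervals, split at the thresholds $\lfloor D \rfloor$ and $\lfloor x N \rfloor$: the small regime $w \le \lfloor D \rfloor$, the intermediate regime $\lfloor D \rfloor + 1 \le w \le \lfloor x N \rfloor - 1$, and the large regime $\lfloor x N \rfloor \le w \le N n_{out}$. If $\lfloor x N \rfloor \le \lfloor D \rfloor + 1$ the middle range is empty, which only makes the bound easier, so we may assume $N$ is large enough that the three intervals are non-degenerate. By the union bound, $p_N(\le D)$ is at most the sum of the probabilities that a witness $E$ with $w$ in each regime exists.

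In the small regime we fix $w \le \lfloor D \rfloor$ and use the raw constraint $|T_N(E)| \le D$, giving by definition a contribution of at most $p_N(w, \le D)$ per value of $w$; this is what the first partial sum's $p_N(w, \le d)$ must be understood to mean, since $d$ is otherwise unbound in that expression. In the intermediate and large regimes, we further union-bound over the $\lfloor D \rfloor + 1$ possible values $d = 0, 1, \ldots, \lfloor D \rfloor$ of $|T_N(E)|$, bounding the corresponding event by $p_N(w, d)$. Summing yields the second and third partial sums exactly as stated.

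No step here is delicate; the only bookkeeping points are that the three $w$-intervals are disjoint and exhaustive, and that within each $w$-slice the events $\{|T_N(E)| = d\}$ for $d = 0, \ldots, \lfloor D \rfloor$ cover the event $\{|T_N(E)| \le D\}$ (we only need coverage for the union bound, not disjointness). The genuine interest of splitting at $x N$ does not appear in this lemma: it matters only downstream, when each partial sum is majorised with a different combination of the Bound 1E/2E and Bound 1I/2I estimates from Sections 4 and 5 in the asymptotic regimes proper to each of them, as outlined in the paper's overview of the counting argument.
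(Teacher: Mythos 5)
Your proof is correct and follows essentially the same route as the paper's: a union bound over $w$, a split at $\lfloor D \rfloor$ and $\lfloor x N \rfloor$, and, for the two regimes with $w > \lfloor D \rfloor$, a further union bound over the at most $\lfloor D \rfloor + 1$ possible values of $d$. Your remark that the $p_N(w,\le d)$ in the first partial sum must be read as $p_N(w,\le D)$ (the $d$ being otherwise unbound) is apt and is exactly what the paper's own proof writes.
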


\begin{proof}
\begin{align*}
p_N(\le D) & \le \displaystyle\sum_{w=0}^{N n_{out}} p_N(w,\le D)\\
& \le \displaystyle\sum_{w=0}^{\lfloor D \rfloor} p_N(w,\le D) + \displaystyle\sum_{w= \lfloor D \rfloor + 1}^{N n_{out}} p_N(w,\le D)\\
& \le \displaystyle\sum_{w=0}^{\lfloor D \rfloor} p_N(w,\le D) + \displaystyle\sum_{d=0}^{\lfloor D \rfloor}\displaystyle\sum_{w=\lfloor D \rfloor + 1}^{N n_{out}} p_N(w,d)\\
& \le \displaystyle\sum_{w=0}^{\lfloor D \rfloor} p_N(w,\le D) + \displaystyle\sum_{d=0}^{\lfloor D \rfloor}\displaystyle\sum_{w=\lfloor D \rfloor + 1}^{\lfloor x N \rfloor - 1} p_N(w,d) + \displaystyle\sum_{d = 0}^{\lfloor D \rfloor}\displaystyle\sum_{w= \lfloor x N \rfloor}^{N n_{out}} p_N(w,d)
\end{align*}
\end{proof}

\subsection{The polynomial bounds}

We will start by proving case $1$ of Theorem \ref{Th1}, then Theorem \ref{Th2} since these results are close to each other, then we will prove case $2$ of Theorem \ref{Th1}. The first two results are obtained using the three partial sums when $D = N^{\alpha}$. With the three following lemmas proved in Appendix $B.1$, we show successively that under the corresponding conditions, each of these three partial sums tends to $0$ as $N \rightarrow \infty$. We write $d_c$ and $d_q$ respectively for the distance and the degenerate distance of $C_{out}$.

\begin{lemma}\label{First poly} {\bf First partial sum, poly case}
If $d_q \ge 2$ and $C_{in}$ is recursive, for all $\alpha < \frac{d_q-2}{d_q}$:
\[\lim_{N \rightarrow \infty} \displaystyle\sum_{w=0}^{\lfloor N^{\alpha} \rfloor} p_N(w,\le N^{\alpha}) = 0\]
\end{lemma}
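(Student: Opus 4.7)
The plan is to combine the first sewing lemma with Bound 1E and Bound 1I, and then show that the resulting bound on the summand is geometric in $w$ with ratio tending to zero when $\alpha<(d_q-2)/d_q$. I would first dispose of the degenerate case $d_q=2$: the hypothesis then forces $\alpha<0$, so $\lfloor N^{\alpha}\rfloor=0$ for $N$ large, leaving only the term $w=0$, which vanishes because $C_{out}^{\otimes N}$ being an isomorphism forces $|E'|=0\Rightarrow E=I$, contradicting the harmfulness hypothesis $|E|_L>0$. From now on I assume $d_q\ge 3$ and $\alpha>0$.

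By the first sewing lemma,
\[p_N(w,\le N^{\alpha})\;\le\;\frac{a^{\otimes N}(w)\,a_{N_{in}}(w,\le N^{\alpha})}{(|{\cal P}|-1)^w\binom{N n_{out}}{w}}.\]
I would plug Bound 1E (applicable since $d_q\ge 2$) into the outer factor, which forces the summand to be zero for $w<d_c$ and gives $O(1)^w(N/w)^{(w-d_c)/d_q+1}$ for $w\ge d_c$; Bound 1I (applicable since $C_{in}$ is recursive) into the inner factor; and the elementary estimate $\binom{N n_{out}}{w}\ge(N n_{out}/w)^w$ into the denominator. The two factors $w^w$---one from Bound 1I and one from the lower bound on the binomial---then cancel exactly. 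After absorbing the constants $n_{out}, n_{in}, |{\cal P}|$ and the ratio $N_{in}/N=\Theta(1)$ into a single $O(1)^w$ prefactor, and using $w+N^{\alpha}\le 2N^{\alpha}$ valid whenever $w\le N^{\alpha}$, the calculation collapses to
\[p_N(w,\le N^{\alpha})\;\le\;O(1)^w\Bigl(\frac{N}{w}\Bigr)^{(w-d_c)/d_q+1}N^{(\alpha-1)w/2}\;\le\;O(1)^w\,N^{(w-d_c)/d_q+1+(\alpha-1)w/2}\]
for $d_c\le w\le N^{\alpha}$.

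Rearranging, the exponent of $N$ becomes $1-d_c/d_q-\beta w$ with $\beta=(d_q-2-\alpha d_q)/(2d_q)$, and the hypothesis $\alpha<(d_q-2)/d_q$ is precisely what makes $\beta>0$. Hence the common ratio $O(1)\,N^{-\beta}$ tends to $0$, the sum is a convergent geometric series dominated by its first nonzero term at $w=d_c$, and
\[\sum_{w=0}^{\lfloor N^{\alpha}\rfloor}p_N(w,\le N^{\alpha})\;=\;O\!\bigl(N^{1-d_c/d_q-\beta d_c}\bigr)\;\longrightarrow\;0,\]
because $d_c\ge d_q$ implies $1-d_c/d_q\le 0$ while $\beta d_c>0$.

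The only delicate step is pinpointing the slope $-\beta$ in the exponent of $N$ as a linear function of $w$ and seeing that the calibration of $\alpha$ in the hypothesis is exactly what makes this slope negative; everything else---the $w^w$ cancellation, the absorption of multiplicative constants, the convergent geometric sum, and the comparison $d_c\ge d_q$---is routine bookkeeping.
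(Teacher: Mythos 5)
Your proposal is correct and takes essentially the same route as the paper: sewing lemma plus Bound 1E, Bound 1I, and the lower bound $\binom{Nn_{out}}{w}\ge(N/w)^w$, then observing that the resulting bound is geometric in $w$ with ratio $O(1)N^{-\beta}$ where $\beta>0$ exactly when $\alpha<(d_q-2)/d_q$, and summing from $w=d_c$. The only cosmetic differences are that you keep the exact exponent $(w-d_c)/d_q+1$ from Bound 1E where the paper first loosens it to $w/d_q$ via $d_c\ge d_q$, and you handle the degenerate case $\alpha\le 0$ (e.g.\ $d_q=2$) explicitly, which the paper leaves implicit since the summation range is then empty past $w<d_c$.
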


\begin{lemma}\label{Second poly} {\bf Second partial sum, poly case}
If $d_q > 2$ and $\bar{C_{in}}$ is totally recursive, then for all $\alpha < \frac{d_q-2}{d_q}$, there exists $x > 0$ such that :
\[\lim_{N \rightarrow \infty} \displaystyle\sum_{d=0}^{\lfloor N^{\alpha} \rfloor}\displaystyle\sum_{w= \lfloor N^{\alpha} \rfloor + 1}^{\lfloor x N \rfloor - 1} p_N(w,d) = 0\]
\end{lemma}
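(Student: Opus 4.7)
The plan is to bound $p_N(w,d)$ uniformly on the rectangle
\[R_N = \{(w,d) : \lfloor N^{\alpha}\rfloor + 1 \le w \le \lfloor xN \rfloor - 1,\ 0 \le d \le \lfloor N^{\alpha}\rfloor\},\]
to show that the uniform bound is super-polynomially small in $N$, and to conclude that even after multiplying by $|R_N|=O(N^{1+\alpha})$ the sum tends to $0$. Starting from the probability inequality stated at the end of Section~6.2, I substitute Bound~1E for $a^{\otimes N}(w)$ (note that $w\ge N^\alpha \gg d_c$ for $N$ large, so we are in the nontrivial case) and Bound~2I for $a_{N_{in}}(w,d)$ (applicable because $\bar{C_{in}}$ is totally recursive), and use $\binom{Nn_{out}}{w} \ge (Nn_{out}/w)^w$. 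Because $w>d$ on $R_N$, the factor $(w+d)^{d/2}$ coming from Bound~2I is dominated by $O(1)^d w^{d/2}$. Collecting all $N$-polynomial prefactors into $N^{O(1)}$ and all multiplicative constants into a single base $K>0$, and writing $\beta=(d_q-1)/d_q$, I obtain
\[
 p_N(w,d) \;\le\; N^{O(1)}\, K^{w+d}\, (w/N)^{w\beta}\, (Nw/d^2)^{d/2}.
\]
The two strict inequalities $\beta > 1/2$ (from $d_q>2$) and $(1-\alpha)(1/2-\beta) < 0$ (from $\alpha<(d_q-2)/d_q<1$) will drive the rest of the argument.

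I then locate the maximum of this bound on $R_N$ by differentiating its logarithm (using the convention $0^0=1$ to handle $d=0$). The partial derivative with respect to $d$ is $(1/2)\log(Nw/d^2) + O(1)$, which is positive as long as $d \ll \sqrt{Nw}$; since $\sqrt{Nw}\ge N^{(1+\alpha)/2}\gg N^\alpha$, the bound is increasing in $d$ on $[0,N^\alpha]$, so on each vertical slice the maximum occurs at $d=\lfloor N^\alpha\rfloor$. Along that slice the partial in $w$ equals $\beta\log(w/N) + O(1) + O(N^\alpha/w)$, and for $w\ge N^\alpha$ the last term is $o(1)$; choosing $x$ small enough that $\beta\log x + O(1) < 0$, the partial is strictly negative throughout $[N^\alpha, xN]$. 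The maximum of $p_N$ on $R_N$ therefore occurs at the lower-left corner $w\sim d\sim N^\alpha$, and a direct evaluation gives
\[
 \max_{R_N} p_N(w,d) \;\le\; N^{O(1)}\, K^{2N^\alpha}\, N^{N^\alpha (1-\alpha)(1/2-\beta)} \;=\; \exp\bigl(-c\, N^\alpha \log N + O(N^\alpha)\bigr)
\]
with $c = -(1-\alpha)(1/2-\beta) > 0$ depending only on $d_q$ and $\alpha$. Multiplying by $|R_N|=O(N^{1+\alpha})$ still yields a quantity of order $\exp(-c\, N^\alpha \log N + O(N^\alpha) + O(\log N))$, which tends to $0$ as $N \to \infty$.

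The main obstacle I anticipate is the derivative-chasing step: one has to check carefully that the $O(1)$ constants absorbed into $K$ and the $O(\log N)$ slack hidden in $N^{O(1)}$ do not perturb the sign of the leading $-c\, N^\alpha \log N$ term at the corner, and in particular that the choice of $x$ (which must be strictly smaller than some threshold depending on $K$, $\beta$ and the other bound constants) is compatible with all assumptions. A secondary but routine point is that the two simplifications $w>d$ (justifying $(w+d)^{d/2}\le(2w)^{d/2}$ in Bound~2I) and $w\ge d_c$ (needed for the nontrivial case of Bound~1E) really do hold throughout $R_N$, which is automatic once $N$ is large enough since $w\ge N^\alpha + 1 > N^\alpha \ge d$ and $d_c$ is a fixed constant.
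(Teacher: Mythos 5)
Your proof is correct and follows essentially the same path as the paper's: combine the probability inequality with Bound 1E, Bound 2I, and the binomial lower bound, then locate the maximum of the resulting bound over the rectangle by sign analysis of the partial derivatives, finding the maximum at the corner $w\sim d\sim N^\alpha$ where the bound is of order $\exp\bigl(-c\,N^\alpha\log N\bigr)$ with $c>0$ coming from $\beta-\tfrac12 = \tfrac12-\tfrac{1}{d_q}>0$. The only cosmetic difference is the order of the two partial-derivative steps (you optimize in $d$ first, the paper in $w$ first), which does not affect the argument since the relevant sign conditions hold uniformly on the rectangle.
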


\begin{lemma}\label{Third poly} {\bf Third partial sum, poly case}
If $|{\cal{P}}| > 2$, $d_q \ge 2$ and $\bar{C_{in}}$ is totally recursive, then for all $x > 0$:
\[\lim_{N \rightarrow \infty} \displaystyle\sum_{d = 0}^{\lfloor N^{\alpha} \rfloor}\displaystyle\sum_{w= \lfloor x N \rfloor}^{N n_{out}} p_N(w,d) = 0\]
\end{lemma}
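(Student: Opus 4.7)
The plan is to combine the probability bound from the first lemma of Section 6.2,
\[ p_N(w,d) \le \frac{a^{\otimes N}(w)\, a_{N_{in}}(w,d)}{(|{\cal{P}}|-1)^w \binom{N n_{out}}{w}}, \]
with Bound 2E and Bound 2I, which are precisely the estimates designed for the linear-$w$ regime. Since $|{\cal{P}}| > 2$ and $d_q \ge 2$, Bound 2E supplies a constant $c \in (0,1)$ with $a^{\otimes N}(w) \le c^w (|{\cal{P}}|-1)^w \binom{N n_{out}}{w}$; substituting and cancelling the combinatorial factors with the denominator will leave just $p_N(w,d) \le c^w a_{N_{in}}(w,d)$. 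Since $\bar{C}_{in}$ is totally recursive, Bound 2I then gives $a_{N_{in}}(w,d) \le O(1)^d N_{in}^{d/2}(w+d)^{d/2}/d^d$. Using the eligibility relation $N n_{out} = N_{in} k_{in} + m_{in}$ to write $N_{in} = O(N)$, and noting that in the range of summation $w+d \le N n_{out} + N^{\alpha} = O(N)$, I will absorb all constants to obtain, for some $K>0$,
\[ p_N(w,d) \le c^w \left(\frac{K N}{d}\right)^{d} \qquad (d \ge 1), \]
with $p_N(w,0)=0$ for $w>0$ since a harmful input cannot produce the trivial output.

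Next I will estimate the double sum by factorisation. The $w$-sum is a geometric tail, $\sum_{w \ge \lfloor xN \rfloor} c^w \le c^{xN}/(1-c)$, which is exponentially small in $N$. For the $d$-sum, the map $d \mapsto (KN/d)^d$ is increasing on $d \le KN/e$, and since $\alpha < 1$ the entire range $1 \le d \le \lfloor N^{\alpha} \rfloor$ lies inside this increasing region for $N$ large enough. The $d$-sum is therefore bounded by $\lfloor N^{\alpha} \rfloor$ copies of its largest term, namely $(KN/N^{\alpha})^{N^{\alpha}} = \exp(O(N^{\alpha} \log N))$. Multiplying the two estimates, the whole double sum is at most $\exp(-\Omega(N) + O(N^{\alpha} \log N))$, which tends to $0$ because $\alpha < 1$ makes $N^{\alpha} \log N = o(N)$ while $\log c < 0$.

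The only delicate point is the linear-$w$ regime itself, where one cannot rely on any polynomial slack in $w$ to beat the combinatorics. The decisive ingredient is the factor $c^w$ with $c<1$ coming from Bound 2E, whose proof uses $|{\cal{P}}| > 2$ in an essential way; without it, the ratio $a_{N_{in}}(w,d)/\binom{N n_{out}}{w}$ cannot be balanced against the growth of $a^{\otimes N}(w)$ when $w = \Theta(N)$. The role of total recursiveness of $\bar{C}_{in}$ is complementary: it keeps the $w$-dependence of Bound 2I mild, entering only through the factor $(w+d)^{d/2}$, so that the $d$-sum can be controlled uniformly in $w$ and is absorbed by the geometric decay of the $w$-sum.
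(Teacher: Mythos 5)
Your proposal is correct and follows essentially the same route as the paper: it combines the probability bound with Bound 2E (to extract the crucial geometric factor $c^w$ from $|{\cal{P}}|>2$ and $d_q\ge 2$) and Bound 2I (valid since $\bar{C}_{in}$ is totally recursive), then uses $\alpha<1$ to show that the $\exp(O(N^{\alpha}\log N))$ contribution of the $d$-sum is dominated by the exponential decay $c^{\Theta(N)}$ in $w$. The only cosmetic difference is that you keep $(KN/d)^d$ and sum the two indices separately (handling $d=0$ by noting $p_N(w,0)=0$), whereas the paper crudely bounds $a_{N_{in}}(w,d)\le O(1)^d N^d$ and then bounds the double sum by the number of terms times the maximum — the same computation in a slightly different dress.
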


The proof of Theorem \ref{Th1} case $1$ is now the following.
\begin{proof}
The hypothesis of the theorem are that $|{\cal{P}}| > 2$, $d_q > 2$, $C_{in}$ is recursive and $\bar{C_{in}}$ is totally recursive. The conditions are met so that the results from the lemmas above apply. For all $\alpha < \frac{d_q-2}{d_q}$, let $x$ such that the second partial sum goes to $0$. Let us write:
\begin{align*}
p_N(\le N^{\alpha})
& \le \displaystyle\sum_{w=0}^{\lfloor N^{\alpha} \rfloor} p_N(w,\le d) + \displaystyle\sum_{d=0}^{\lfloor N^{\alpha} \rfloor}\displaystyle\sum_{w=\lfloor N^{\alpha} \rfloor + 1}^{\lfloor x N \rfloor - 1} p_N(w,d) + \displaystyle\sum_{d = 0}^{\lfloor N^{\alpha} \rfloor}\displaystyle\sum_{w= \lfloor x N \rfloor}^{N n_{out}} p_N(w,d)
\end{align*}
Since the three partial sums go to $0$, this proves that $p_N(\le N^{\alpha})$ goes to $0$ as $N \rightarrow \infty$.
\end{proof}

The proof of Theorem \ref{Th2} is the following:
\begin{proof}
The hypothesis of the theorem are that $d_q > 2$, and $C_{in}$ is recursive and systematic. Being systematic implies that if $w > d$, $p_N(w,d) = 0$. Thus the second and the third partial sums are equal to $0$. The conditions of Lemma \ref{First poly} are met and thus again, $p_N(\le N^{\alpha})$ goes to $0$ as $N \rightarrow \infty$.
\end{proof}

\subsection{The sublogarithmic bound}

Now let us focus on Theorem \ref{Th1} case $2$. For this purpose we consider the three partial sums when $D = \log \log N$. The following proposition justifies the sublogarithmic expression of the bound. This property is needed in order to upper bound the first partial sum.
\begin{proposition}\label{llog^llog}:
 Let us write, for $N > 1$, $\mathrm{llog}\ N = \log N/\log(\log N)$. For all $t > 0$, if $N$ is big enough( precisely, if $N \ge e^{e^t}$):
\[ {(t \ \mathrm{llog}\ N)}^{t \ \mathrm{llog}\ N} \le N^t \]
\end{proposition}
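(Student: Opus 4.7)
The plan is to reduce the claimed inequality to the threshold condition on $N$ by taking logarithms of both sides and simplifying. This is a short, purely monotonicity-based argument; I do not anticipate any conceptual obstacle.

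First I would take $\log$ of both sides of
\[ (t \ \mathrm{llog}\ N)^{t \ \mathrm{llog}\ N} \le N^t,\]
which is an equivalent inequality since both sides are positive for $N > e$. This yields
\[ t \ \mathrm{llog}\ N \cdot \log\!\bigl(t \ \mathrm{llog}\ N\bigr) \le t \log N .\]
Dividing through by the positive quantity $t \ \mathrm{llog}\ N$, and using the definition $\mathrm{llog}\ N = \log N / \log\log N$ (so that $\log N / \mathrm{llog}\ N = \log\log N$), the inequality becomes
\[ \log\!\bigl(t \ \mathrm{llog}\ N\bigr) \le \log\log N .\]

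Next I would exponentiate both sides (again a monotone operation) to obtain the equivalent form $t \ \mathrm{llog}\ N \le \log N$, i.e. $t \log N / \log\log N \le \log N$. Cancelling $\log N$ (positive for $N > 1$) leaves the clean condition $t \le \log\log N$, which is equivalent to $\log N \ge e^t$, i.e. $N \ge e^{e^t}$. Since every step above is reversible, this is precisely the hypothesis of the proposition, and the equivalence chain yields the stated inequality under the stated threshold. The only minor care needed is to verify that $t \ \mathrm{llog}\ N > 0$ so that taking logarithms is legitimate, which is automatic as soon as $N > e$, and in particular under the much stronger assumption $N \ge e^{e^t}$ (for any $t > 0$).
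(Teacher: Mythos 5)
Your proof is correct, and it is essentially the same idea as the paper's: take logarithms and reduce. The difference is one of presentation. The paper expands $\log$ of the left-hand side directly, writes it as $t\log N + t\,\mathrm{llog}\,N \cdot \log\!\left(\tfrac{t}{\log\log N}\right)$, and observes that the correction term is $\le 0$ precisely because the hypothesis $N \ge e^{e^t}$ makes $\tfrac{t}{\log\log N} \le 1$. You instead run a clean chain of reversible reductions: take $\log$, divide by the positive quantity $t\,\mathrm{llog}\,N$ (noting $\log N/\mathrm{llog}\,N = \log\log N$), exponentiate, and cancel $\log N$, arriving at $t \le \log\log N$, which is exactly $N \ge e^{e^t}$. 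What your version buys is that every step is an equivalence, so you actually establish that $N \ge e^{e^t}$ is necessary \emph{and} sufficient for the displayed inequality, whereas the paper only needs (and only proves) sufficiency. This sharpness is consistent with the paper's remark after the proposition that the solution of $x^x = N^t$ is asymptotically $t\,\mathrm{llog}\,N$. Your care in checking that $t\,\mathrm{llog}\,N > 0$ (so that taking logarithms and dividing is legitimate) is the only delicate point, and you handle it correctly by noting it holds once $N > e$, which is implied by $N \ge e^{e^t}$ for $t > 0$.
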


\begin{proof}
If $N \ge e^{e^t}$, $\mathrm{llog}(N)$ is positive and its logarithm is well defined, and the logarithm of the left hand side of the inequality is equal to:
\begin{align*}
(t \ \mathrm{llog}\ N) (\log t + \log (\mathrm{llog}\ N)) & = t \frac{\log N}{\log(\log N)} (\log t + \log(\log N) - \log(\log(\log N)))\\
& = t \log N  + t \frac{\log N}{\log(\log N)} \log \left( \frac{t}{\log \log N} \right) \\
& \le t \log N
\end{align*}
\end{proof}
Actually one can prove that the solution of the equation $x^x = N^t$ is equivalent to $t \log N/\log(\log N)$, and thus the bound proposed is the best asymptotic distance which can be proved with the arguments presented in this paper. The three following lemmas proved in Appendix $B.2$ show under the corresponding conditions that each of the three partial bounds, when $D = \log \log N$, tend to $0$ as $N \rightarrow \infty$.

\begin{lemma}\label{First sublog} {\bf First partial sum, sub-log case}
If $d_c > d_q = 2$ and $C_{in}$ is recursive, for all $\alpha < d_c-2$:
\begin{align*}
\lim_{N \rightarrow \infty} \displaystyle\sum_{w=0}^{\lfloor \alpha\ \mathrm{llog}\ N \rfloor} p_N(w, \le \alpha\ \mathrm{llog}\ N) = 0
\end{align*}
\end{lemma}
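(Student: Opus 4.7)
The plan is to substitute the bounds from Sections 4 and 5 into the general inequality
\[ p_N(w, \le D) \le \frac{a^{\otimes N}(w)\, a_{N_{in}}(w, \le D)}{(|{\cal{P}}| - 1)^w \binom{N n_{out}}{w}} \]
with $D = \alpha\,\mathrm{llog}\,N$, and then estimate the resulting sum. Since $d_q = 2$, Bound 1E gives $a^{\otimes N}(w) \le O(1)^w (N/w)^{(w-d_c)/2 + 1}$ for $w \ge d_c$ (and zero otherwise, so the sum effectively starts at $w = d_c$). Bound 1I, which applies because $C_{in}$ is recursive, gives $a_{N_{in}}(w, \le D) \le O(1)^w N^{w/2}(w+D)^{w/2}/w^w$. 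The standard estimate $\binom{N n_{out}}{w} \ge (N n_{out}/w)^w$, valid up to an $O(1)^w$ factor, handles the denominator. When these three bounds are multiplied together, the $w$-th powers of $N$ combine as $(w-d_c)/2 + 1 + w/2 - w = 1 - d_c/2$, and the factors $w^w$ from $a_{N_{in}}$ cancel against those from the binomial estimate. I obtain
\[ p_N(w, \le D) \le O(1)^w \cdot N^{1 - d_c/2} \cdot \frac{(w+D)^{w/2}}{w^{(w-d_c)/2 + 1}}. \]

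In the regime $w \le D$, I bound $(w+D)^{w/2} \le (2D)^{w/2}$ and drop the denominator using $w^{(w-d_c)/2 + 1} \ge 1$ (valid for $w \ge d_c$, since then the exponent is at least $1$). This reduces each term to $N^{1 - d_c/2} (c\sqrt{D})^w$ for a fixed constant $c > 0$. Because this per-term estimate is increasing in $w$, summing over $w \in [d_c, \lfloor D \rfloor]$ is dominated by the largest term, at $w = D$, yielding a partial sum bounded by $O(1) \cdot N^{1 - d_c/2} \cdot (c\sqrt{D})^D$ (the factor $O(1)$ absorbs both the length $D+1$ of the sum and the geometric ratio).

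The crux is then to control $(c\sqrt{D})^D$ with $D = \alpha\,\mathrm{llog}\,N$. Proposition \ref{llog^llog} applied with $t = \alpha$ gives $D^{D/2} \le N^{\alpha/2}$ once $N$ is large enough, while the leftover factor $c^D = \exp(\alpha \log c \cdot \log N / \log\log N)$ equals $N^{o(1)}$ as $N \to \infty$; the extra $2^{D/2}$ from $(2D)^{D/2}$ is handled identically. Putting everything together, the first partial sum is at most $N^{1 - d_c/2 + \alpha/2 + o(1)}$, which tends to $0$ precisely when $\alpha/2 < d_c/2 - 1$, i.e.\ when $\alpha < d_c - 2$, matching the hypothesis exactly. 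The delicate step is this last application of Proposition \ref{llog^llog}: because $(c\sqrt{D})^w$ grows with $w$, one cannot escape the worst case $w = D$, and the sub-logarithmic scale $\alpha\,\mathrm{llog}\,N$ is exactly what is required so that $D^{D/2}$ remains a controlled power $N^{\alpha/2}$ rather than a super-polynomial quantity.
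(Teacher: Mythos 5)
Your proof is correct and follows essentially the same route as the paper: you substitute Bound~1E (with $d_q=2$) and Bound~1I into the general inequality for $p_N(w,\le D)$, use the standard lower bound on the binomial, collect the powers of $N$ to isolate $N^{1-d_c/2}$, drop the favourable denominator, observe that the per-term estimate is increasing in $w$ so the last term dominates, and finally invoke Proposition~\ref{llog^llog} to control $D^{D/2}$ and bound the leftover $c^D$ by $N^{o(1)}$. The paper's proof performs the same cancellations a bit more tersely (it discards the $w$-dependence as soon as it combines the three bounds rather than keeping the denominator $w^{(w-d_c)/2+1}$ explicit for one extra line), but the arithmetic and the crucial applications of Bound~1I and Proposition~\ref{llog^llog} are identical.
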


\begin{lemma}\label{Second sublog} {\bf Second partial sum, sub-log case}
If $d_c > d_q = 2$ and $\bar{C_{in}}$ is totally recursive, for all $\alpha < d_c-2$, there exists $x$ such that:
\[\lim_{N \rightarrow \infty} \displaystyle\sum_{d=0}^{\lfloor \alpha\ \mathrm{llog}\ N \rfloor}\displaystyle\sum_{w= \lfloor \alpha\ \mathrm{llog}\ N \rfloor + 1}^{\lfloor x N \rfloor - 1} p_N(w,d)=0\]
\end{lemma}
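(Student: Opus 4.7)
My plan is to combine the outer weight-distribution bound (Bound 1E, using $d_q=2$) with the inner bound (Bound 2I, which is available because $\bar{C_{in}}$ is totally recursive), insert them into the union-bound lemma for $p_N(w,d)$, obtain geometric decay in $w$ by choosing $x$ small, and finally control the remaining sum over $d$ using the sublogarithmic size of $D:=\lfloor\alpha\,\mathrm{llog}\,N\rfloor$.

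Concretely, using $\binom{Nn_{out}}{w}\ge(Nn_{out}/w)^w$ in the denominator, and the elementary inequality $(w+d)^{d/2}\le O(1)^dw^{d/2}$ (valid since $w>d$ throughout this partial sum), a short algebraic manipulation of the three bounds yields
\[
p_N(w,d)\ \le\ O(1)^{w+d}\left(\frac{w}{N}\right)^{(w-d+d_c-2)/2}\left(\frac{w}{d}\right)^d
\]
(with the convention $0^0=1$) for $w>d\ge 0$; the bound is zero for $w<d_c$, a regime we can ignore since for $N$ large $D\ge d_c$.

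Next I compute $\partial_w\log p_N(w,d)$ to establish geometric decay. The derivative equals
\[
\log C\ +\ \tfrac{1}{2}\log(w/N)\ +\ \tfrac{1}{2}\ +\ \frac{d+d_c-2}{2w},
\]
and the last term is uniformly bounded since $w\ge d+1$. Hence for $x>0$ small enough that $\log C+3/2+(\log x)/2<-c$ for some $c>0$ (a constant depending only on the Bound 1E/2I constants, not on $N$, $d$, or $\alpha$), we have $p_N(w+1,d)\le e^{-c}\,p_N(w,d)$ uniformly in $d\le D$ and $w\in[D+1,xN]$. Summing the resulting geometric series reduces the problem to bounding $p_N(D+1,d)$ up to an $O(1)$ prefactor.

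A log-derivative calculation shows that the map $d\mapsto(D/N)^{(D-d+d_c-2)/2}(D/d)^d$ is increasing on $[0,D]$ (its critical point lies at $d^{*}=\sqrt{DN}/e\gg D$), so the maximum on $[0,D]$ is attained at $d=D$ and equals $(D/N)^{(d_c-2)/2}$. Therefore
\[
\sum_{d=0}^{D}\sum_{w=D+1}^{xN-1}p_N(w,d)\ \le\ (D+1)\cdot O(1)\cdot O(1)^{2D}\left(\frac{D}{N}\right)^{(d_c-2)/2}.
\]
Since $D=\alpha\,\mathrm{llog}\,N$, both $O(1)^{2D}$ and $D^{(d_c-2)/2}$ are $N^{o(1)}$, so the bound is $N^{o(1)-(d_c-2)/2}\to 0$ as $N\to\infty$, using the hypothesis $d_c>2$.

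The main obstacle is the uniform geometric decay in the second step: one must check that after differentiating $\log p_N$ in $w$, the negative contribution $\tfrac{1}{2}\log(w/N)$ (which is available precisely because $w\le xN$ with $x<1$) really dominates all positive contributions, namely the $\log C$ from the $O(1)^{w+d}$ factor, the $+1/2$ coming from $\partial_w[(w/2)\log(w/N)]$, and the residual $(d+d_c-2)/(2w)$ produced by the combined $(w/N)^{-d/2}$ and $(w/d)^d$ factors. It is this bound that pins down the choice of $x$; once it is secured, the rest of the argument reduces to comparing a sub-polynomial factor $N^{o(1)}$ against the polynomial decay $N^{-(d_c-2)/2}$.
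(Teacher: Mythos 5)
Your proof is correct and follows essentially the same route as the paper: combine Bound~1E (with $d_q=2$) and Bound~2I inside the union-bound estimate for $p_N(w,d)$, force geometric decay in $w$ by making the $w$ log-derivative negative for $w\le xN$ with $x$ small, then show the $d$-marginal is maximized at $d=D$ and conclude with an $N^{o(1)-(d_c-2)/2}\to 0$ bound. Indeed your algebraic consolidation is a touch more careful than the paper's display (which implicitly drops a $w^{(d_c-2)/2}$ factor that your $(d+d_c-2)/(2w)$ term tracks correctly), but this correction is uniformly bounded and does not change the argument.
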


\begin{lemma}\label{Third sublog} {\bf Third partial sum, sub-log case}
If $|{\cal{P}}| > 2$, $d_q \ge 2$ and $\bar{C_{in}}$ is totally recursive, then for all $x > 0$:
\[\lim_{N \rightarrow \infty} \displaystyle\sum_{d=0}^{\lfloor \alpha\ \mathrm{llog}\ N\rfloor}\displaystyle\sum_{w=\lfloor x N \rfloor}^{N n_{out}} p_N(w,d) = 0\]
\end{lemma}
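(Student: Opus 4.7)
The plan is to mirror the strategy used for Lemma \ref{Third poly}: in the regime $w\ge xN$, Bound 2E provides exponential decay in $w$ that easily overwhelms the contribution of the inner encoder, as long as $d$ grows substantially slower than $N$. Since $\alpha\,\mathrm{llog}\,N = o(N^{\beta})$ for every $\beta>0$, the present sum is in fact dominated term-by-term (for $N$ large) by the sum of Lemma \ref{Third poly} with any fixed $\alpha'>0$, so morally Lemma \ref{Third sublog} is already a corollary of its polynomial counterpart. Nevertheless the estimate is short and I would write it out directly.

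First I would apply the union-bound inequality $p_N(w,d) \le a^{\otimes N}(w)\,a_{N_{in}}(w,d)/\bigl((|{\cal{P}}|-1)^w \binom{Nn_{out}}{w}\bigr)$ and then invoke Bound 2E to cancel the denominator. Bound 2E requires precisely the assumptions $|{\cal{P}}|>2$ and $d_q\ge 2$ that appear in the lemma, and produces a constant $c\in(0,1)$ with
\[p_N(w,d) \;\le\; c^w\, a_{N_{in}}(w,d).\]
This is the step that consumes the hypothesis $|{\cal{P}}|>2$.

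Next I would bound $a_{N_{in}}(w,d)$ via Bound 2I, which is where the total recursiveness of $\bar{C_{in}}$ enters. Using $w\le Nn_{out}$, $N_{in}=O(N)$, and $1/d^d\le 1$ for $d\ge 1$ (noting that the $d=0$ terms vanish, since a harmful input cannot map to the trivial output under the isomorphism $T_N$), one obtains a constant $C$ with $a_{N_{in}}(w,d) \le (CN)^d$. For $d\le\alpha\,\mathrm{llog}\,N$ and $N$ large,
\[(CN)^d \;\le\; (CN)^{\alpha\,\mathrm{llog}\,N} \;=\; \exp\!\bigl(O((\log N)^2/\log\log N)\bigr),\]
which is subexponential in $N$.

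Combining the two estimates with $w\ge xN$ gives $p_N(w,d)\le c^{xN}\exp(O((\log N)^2/\log\log N))=e^{-\Omega(N)}$. Since the double sum has at most $Nn_{out}\cdot(\alpha\,\mathrm{llog}\,N+1)=O(N\log N)$ terms, the whole sum is still $e^{-\Omega(N)}$ and so tends to $0$. The only point deserving attention is verifying that the quasi-polynomial factor $(CN)^{\alpha\,\mathrm{llog}\,N}$ is indeed dwarfed by the true exponential $c^{xN}$; this is exactly where the present regime differs from the first and second partial sums, in which the analogous comparison must be balanced against the fine parameters $d_c$ and $d_q$ and is therefore considerably more delicate.
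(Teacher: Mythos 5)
Your proof is correct and follows essentially the same route as the paper, which for this lemma simply refers back to the proof of the polynomial-regime third partial sum (Lemma \ref{Third poly}) with $N^\alpha$ replaced by $\alpha\,\mathrm{llog}\,N$: combine the union bound with Bound~2E (absorbing the denominator into a $c^w$ factor) and Bound~2I (giving $a_{N_{in}}(w,d)=O(1)^d N^d$), then observe that $c^{xN}$ overwhelms $N^{O(\mathrm{llog}\,N)}$ and the $O(N\log N)$ term count. Your explicit note that the argument is less delicate here than in the other partial sums, because the inner-encoder contribution is only quasi-polynomial while Bound~2E gives genuine exponential decay, is an accurate reading of why the paper can dispatch this case by reference.
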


The proof of Theorem \ref{Th1} case $2$ is now the following.
\begin{proof}
The hypothesis of the theorem are that $|{\cal{P}}| > 2$, $d_c > d_q = 2$, $C_{in}$ is recursive and $\bar{C_{in}}$ is totally recursive. The conditions are met so that the results from the lemmas above apply. For all $\alpha < d_c-2$, let $x$ such that the second partial sum goes to $0$. Let us write:
\begin{align*}
p_N(\alpha\ \mathrm{llog}\ N) \le \displaystyle\sum_{w=0}^{\lfloor \alpha\ \mathrm{llog}\ N \rfloor} p_N(w, \le \alpha\ \mathrm{llog}\ N) + \displaystyle\sum_{d=0}^{\lfloor \alpha\ \mathrm{llog}\ N \rfloor}\displaystyle\sum_{w=\lfloor \alpha\ \mathrm{llog}\ N \rfloor + 1}^{\lfloor x N\rfloor - 1} p_N(w,d) + \displaystyle\sum_{d=0}^{\lfloor \alpha\ \mathrm{llog}\ N \rfloor}\displaystyle\sum_{w=\lfloor x N\rfloor}^{N n_{out}} p_N(w,d)
\end{align*}
Since the three partial sums go to $0$, this proves that $p_N(\alpha\ \mathrm{llog}\ N)$ goes to $0$ as $N \rightarrow \infty$.
\end{proof}

\section*{Appendix A: Bounds for the inner convolutional encoder}

We give here the proof of Theorem \ref{Inner}. Consider a sequence $(M,L_1,...,L_N)$ of weight $w$, which is part of $E = (M,L_1,S_1,...,L_N,S_N)$ such that $|{\cal{C}}_N(E)| \le d$, and let $(P_1,...,P_N,M') = {\cal{C}}_N(E)$. To begin with, suppose that the information part is of given weight $|(L_1,...,L_N)| = w_L$, and suppose that the trace of $E$ is made of $c$ detours. We remind the following notations. $1 \le p_1 < ... < p_{w_L} \le {k N}$ are the positions of the $w_L$ non identity letters of $(L_1,...,L_N)$, and for each $i \in [\![1;w_L]\!]$, $N_i = \lceil p_i / k \rceil$ is the index such that the $i$th non identity letter is part of the error $L_{N_i}$. The $i$th truncature of $E$ is the sequence:
\[E_{\backslash i} = (M,L_1,S_1,...,L_{N_{i-1}}, S_{N_{i-1}}, L'_{N_i}, S_{N_i})\]
where $(L_1,...,L_{N_{i-1}},L'_{N_i})$ is obtained by replacing all the letters in $(L_1,...,L_{N_i})$ after the position $p_i$ by an $I$. Also, $p_0 = 0$, $N_0 = 0$ and $E_{\backslash 0} = M$. Moreover:
\[M_i = \mu_{N_i} (E_{\backslash i})\]
and for $i = 0$, $M_0 = M$. The trace of $E$ is the sequence $(b_0,...,b_{w_L})$ such that for $i \in [\![1;w_L]\!]$, $M_i \in \mathbb{M}_{b_i}$. For $i \in [\![1;c]\!]$, $v_i$ is the starting point of the $i$th detour, and $v_{c+1} = w_L + 1$.

We will start by confining the space where lie the positions of the non identity letters of $(L_1,...,L_N)$, by cutting them into packets corresponding to each detour. For each $i \in [\![1;c]\!]$, let $\delta p^{(i)} = p_{v_{i+1}-1}-p_{v_i}$. The sum of the sizes of these intervals is upper bounded by:

\begin{lemma}
If $w_L \ge 1$:
\[\displaystyle\sum_{i=1}^{c} \delta p^{(i)} \leq min(k N, \eta k (w_L + d))\]
where $\delta p^{(i)} = p_{v_{i+1}-1}-p_{v_i}$.
\end{lemma}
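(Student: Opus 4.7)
The bound splits cleanly into its two parts, and I would treat them separately.

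For $\sum_i \delta p^{(i)} \le kN$, the plan is purely combinatorial: since the positions $p_1 < p_2 < \dots < p_{w_L}$ are strictly increasing and $v_1 < v_2 < \dots < v_{c+1} = w_L+1$, the intervals $[p_{v_i}, p_{v_{i+1}-1}]$ (with the convention $p_0 = 0$) are pairwise disjoint sub-intervals of $[0, kN]$: indeed $p_{v_{i+1}-1} < p_{v_{i+1}}$ separates consecutive intervals. Summing disjoint lengths in $[0, kN]$ gives the claim.

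For the nontrivial bound $\sum_i \delta p^{(i)} \le \eta k(w_L + d)$, I would pass from position space to block space. Writing $K_i = N_{v_{i+1}-1} - N_{v_i} + 1$ for the number of blocks that the $i$-th detour spans, the inequality $\delta p^{(i)} < k K_i$ reduces the task to showing $\sum_i K_i \le \eta(w_L + d)$. I split each $K_i$ as (blocks containing a non-identity information letter of the detour) plus (pure-identity blocks lying in the detour's block range). The first total, across all detours, is at most $w_L$ because each non-identity information letter sits in a single block. The main step is bounding the second total via the speed lemma.

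Fix the $i$-th detour and $j \in \{v_i,\dots,v_{i+1}-2\}$. Between the $j$-th and $(j+1)$-th non-identity letters lie $\max(0, N_{j+1} - N_j - 1)$ pure-identity blocks (information letter $I$, stabilizer in ${\cal Z}^s$). By the definition of a detour, $b_j = 1$, so $M_j \in \mathbb{M}_1$; the stabilization lemma ensures the memory stays in $\mathbb{M}_1$ throughout the stretch, and then the speed lemma gives that the output weight contributed on that stretch is at least $\lfloor (N_{j+1}-N_j-1)/\eta \rfloor$. These stretches lie in disjoint block ranges of the global output of weight at most $d$, so $\sum_s \lfloor r_s/\eta \rfloor \le d$ where $r_s$ ranges over the lengths of all such stretches. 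Using $\lfloor r/\eta \rfloor \ge (r-\eta+1)/\eta$, I would deduce $\sum_s r_s \le \eta d + (\eta-1) R$, where $R$ is the total number of stretches. Since the $i$-th detour contributes at most $v_{i+1} - v_i - 1$ stretches, $R \le \sum_i (v_{i+1}-v_i-1) \le w_L$. Assembling, $\sum_i K_i \le w_L + \eta d + (\eta-1)w_L = \eta(d + w_L)$, and hence $\sum_i \delta p^{(i)} \le \eta k(w_L + d)$.

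The main obstacle will be careful bookkeeping of boundary cases: the convention $v_1 = 0$ with $p_0 = 0$ (where the ``left end'' of the first detour is the initial memory rather than a genuine non-identity position, but $M_0 \in \mathbb{M}_1$ still holds since $b_0 = 1$ in that case), and consecutive non-identity letters that happen to lie in the same block (making $N_{j+1}-N_j-1$ formally $-1$, which one must read as an empty stretch contributing $0$). These $O(1)$ slacks per detour are absorbed by the strict inequality $\delta p^{(i)} < k K_i$ and by the room between $R \le w_L$ and the tighter count.
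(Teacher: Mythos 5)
Your argument is correct and is essentially the paper's proof in slightly different bookkeeping: both rest on the same core facts that $M_j\in\mathbb{M}_1$ for indices inside a detour, that the speed lemma then forces output weight $\ge\lfloor(N_{j+1}-N_j-1)/\eta\rfloor$ on each pure-identity stretch, and that the total output weight is at most $d$. The paper sums the per-$j$ sublemma directly in position space to reach $d\ge\sum_i\delta p^{(i)}/(\eta k)-w_L$, whereas you pass to block counts $K_i$ and use $\lfloor r/\eta\rfloor\ge(r-\eta+1)/\eta$ together with $R\le w_L$; these are the same estimates rearranged, and the boundary conventions ($v_1=0$, shared blocks) you flag are precisely the ones the paper also has to absorb.
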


The main idea is to prove that the difference of weight between the outputs at the steps $N_{v_i}$ and $N_{v_{i+1}}$ of the convolutional operation is proportional to $\delta p^{(i)}$. The main argument is that for $j$ from $v_i$ to $v_{i+1}-1$, the memory errors $M_j$ all belong to $\mathbb{M}_1$, which by the recursiveness of ${\cal{C}}$ yields an output weight proportional to the input size during that time. We first apply this idea between the steps $N_j$ and $N_{j+1}-1$ of the encoding for a given $j \in [\![v_0;w_L]\!]$:

\begin{sublemma}
Let $j \in [\![v_0;w_L]\!]$. \[(|P_{N_j+1},...,P_{N_{j+1}-1})| \ge  \frac{N_{j+1} - N_j}{\eta} - 1 \]
\end{sublemma}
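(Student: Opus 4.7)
The plan is to relate the memory state reached by the convolutional encoder when running on the first $N_j$ blocks of $E$ to the auxiliary memory $M_j = \mu_{N_j}(E_{\backslash j})$, and then to invoke the speed bound (the lemma giving $|\pi_N(M,I,S_1,\dots,I,S_N)| \ge \lfloor N/\eta\rfloor$ whenever $M\in\mathbb{M}_1$). First I would dispose of the degenerate case $N_{j+1}=N_j$, which happens when two consecutive non-identity information letters fall inside the same block $L_{N_j}$: then the subsequence $(P_{N_j+1},\dots,P_{N_{j+1}-1})$ is empty, its weight is $0$, and the target bound $(N_{j+1}-N_j)/\eta -1 = -1$ holds trivially.

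Assume from now on $N_{j+1}>N_j$. The key observation is that the first $N_j$ blocks of $E$ coincide exactly with $E_{\backslash j}$. Indeed, $p_j$ and $p_{j+1}$ are consecutive positions of non-identity letters in $(L_1,\dots,L_N)$, so no non-identity letter sits strictly between them; in particular, within block $L_{N_j}$ every letter appearing after position $p_j$ is $I$ in $E$, which is precisely what the truncation defining $E_{\backslash j}$ enforces. Consequently, if $M^E_{N_j}$ denotes the memory obtained after running $\mathcal{C}_{N_j}$ on the first $N_j$ blocks of $E$, then $M^E_{N_j} = M_j$.

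Second, by the same observation, on blocks indexed $N_j+1,\dots,N_{j+1}-1$ of $E$ the information letters are all $I$, and the stabilizer letters lie in $\mathcal{Z}^s$ by the standing assumption on $E$. Using the convolutional decomposition lemma, this yields
\[
(P_{N_j+1},\dots,P_{N_{j+1}-1}) = \pi_{N_{j+1}-N_j-1}\bigl(M_j,\,I,S_{N_j+1},\,I,S_{N_j+2},\dots,I,S_{N_{j+1}-1}\bigr).
\]
Assuming $b_j=1$, i.e.\ $M_j\in\mathbb{M}_1$, the speed lemma applies and gives
\[
|(P_{N_j+1},\dots,P_{N_{j+1}-1})| \ \ge\ \left\lfloor \frac{N_{j+1}-N_j-1}{\eta}\right\rfloor \ \ge\ \frac{N_{j+1}-N_j}{\eta}-1,
\]
since $\lfloor x\rfloor \ge x-1$ and $1/\eta \le 1$.

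The main obstacle I expect is the boundary case $b_j=0$, which can occur at the final index of a terminating detour. In that situation $M_j$ only lies in $\mathbb{M}_0$, so the speed lemma does not apply directly and the output weight on blocks $N_j+1,\dots,N_{j+1}-1$ could genuinely be zero. The intended resolution, given that the enclosing lemma aggregates $\sum_i \delta p^{(i)}$ only over detours, is that the sublemma is applied to indices $j$ inside a detour, where $b_j=1$ holds by the definition of a detour; the few boundary positions at the end of terminating detours are handled by charging the corresponding gap to the weight produced inside the next detour, using the fact that by the preceding lemma $b_{j+1}=1$ whenever $b_j=0$. The rest of the proof is then a clean application of the three-step mechanism above.
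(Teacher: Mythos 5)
Your proof matches the paper's: you derive the same identity $(P_{N_j+1},\dots,P_{N_{j+1}-1}) = \pi_{N_{j+1}-N_j-1}(M_j,I,S_{N_j+1},\dots,I,S_{N_{j+1}-1})$, apply the speed lemma, and you correctly diagnose that the sublemma as stated implicitly requires $b_j=1$ (which holds for the indices $j\in[\![v_i,v_{i+1}-2]\!]$ over which the enclosing lemma actually sums, a point the paper leaves implicit). One small slip: combining $\lfloor x\rfloor\ge x-1$ with $1/\eta\le 1$ gives only $\lfloor(N_{j+1}-N_j-1)/\eta\rfloor\ge (N_{j+1}-N_j)/\eta-1/\eta-1$, which is weaker than needed; the correct deduction uses that $N_{j+1}-N_j-1$ and $\eta$ are integers, so $\lfloor(N_{j+1}-N_j-1)/\eta\rfloor\ge \bigl(N_{j+1}-N_j-1-(\eta-1)\bigr)/\eta=(N_{j+1}-N_j)/\eta-1$.
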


\begin{proof}
If $N_{j+1} = N_j$ or $N_j+1$, this is true since the sequence $(P_{N_j+1},...,P_{N_{j+1}-1})$ is empty, and its weight is non negative.\\

Now suppose that $N_{j+1} > N_j + 1$, and let us first consider the case $j > 0$. Notice that $L'_{N_j} = L_{N_j}$, because the non identity information letter in position $p_{j+1}$ belongs to the error $L_{N_{j+1}}$, which comes strictly after the error $L_{N_j}$. We can write the following concatenation:
\[E_{\backslash j+1} = E_{\backslash j}.(I,S_{N_j + 1},...,I,S_{N_{j+1}-1},L'_{N_{j+1}},S_{N_{j+1}})\]
The physical output $(P_1,...,P_{N_{j+1}-1})$ is produced by applying the encoder at the first $2 N_{j+1}-1$ errors of $E_{\backslash j+1}$. By concatenation, this output can be written:
\begin{eqnarray*}
(P_1,...,P_{N_{j+1}-1}) & = & \pi_{N_{j+1}-1}(E_{\backslash j}.(I,S_{N_j + 1},...,I,S_{N_{j+1}-1}))\\
& = & \pi_{N_j}(E_{\backslash j}).\pi_{N_{j+1} - N_j - 1}(M_j,I,S_{N_j + 1},...,I,S_{N_{j+1}-1})\\
& = & (P_1,...,P_{N_j}).\pi_{N_{j+1} - N_j - 1}(M_j,I,S_{N_j + 1},...,I,S_{N_{j+1}-1})
\end{eqnarray*}
This shows that:
\[(P_{N_j+1},...,P_{N_{j+1}-1}) = \pi_{N_{j+1} - N_j - 1}(M_j,I,S_{N_j + 1},...,I,S_{N_{j+1}-1})\]
In the case where $j = 0$, this equality also holds and has the form:
\[(P_1,...,P_{N_1-1}) = \pi_{N_1 - 1}(M_0,I,S_1,...,I,S_{N_1-1})\]
Since $M_j \in \mathbb{M}_1$ and ${\cal{C}}$ is recursive:
\[|(P_{N_j+1},...,P_{N_{j+1}-1})| \ge \lfloor \frac{N_{j+1} - N_j - 1}{\eta} \rfloor\]
And this implies the desired inequality.
\end{proof}

Now by simply summing over $j$ we obtain the proof of the lemma.

\begin{proof}
Let $i \in [\![1;c]\!]$. The sum of the inequality in the previous lemma over all the values of $j \in [\![v_i;v_{i+1}-2]\!]$ gives the result:
\[|(P_{N_{v_i}},...,P_{N_{v_{i+1}-1}})| \ge \frac{N_{v_{i+1}-1} - N_{v_i}}{\eta} - (v_{i+1} - v_i - 1)\]
Using $N_{v_i} \le p_{v_i} / k + 1$ and $N_{v_{i+1}-1} \ge p_{v_{i+1}-1} / k$ we get:
\[|(P_{N_{v_i}},...,P_{N_{v_{i+1}-1}})| \ge \frac{p_{v_{i+1}-1} - p_{v_i}}{\eta k} - \frac{1}{\eta} - (v_{i+1} - v_i - 1)\]
By summing again this inequality over $i \in [\![1; c]\!]$ we get that:
\[d \ge \displaystyle\sum_{i=1}^{c} \frac{\delta p^{(i)}}{\eta k} - w_L\]
This implies the first bound:
\[\displaystyle\sum_{i=1}^{c} \delta p^{(i)} \leq \eta k (w_L + d)\]
The second bound is obvious:
\[\displaystyle\sum_{i=1}^{c} \delta p^{(i)} = \displaystyle\sum_{i=1}^{c} (p_{v_{i+1}-1}-p_{v_i}) \le p_{w_L} - p_{v_c} + \displaystyle\sum_{i=1}^{c-1} (p_{v_{i+1}} - p_{v_i}) \le k N\]
\end{proof}

This upper bound is useful in that it says that all the non identity letters of $(L_1,...,L_N)$ starting from the letter in position $p_{w_1}$ (ie, from the first non identity letter since ${w_1} = 0$ or $1$) are confined in a bounded region of space. Now we can prove the following bound:

\begin{lemma}
Suppose $w_L \ge 1$, and suppose that $c$ is fixed. The number of possible values for the sequence $(L_1,...,L_N)$ is upper bounded by:
\[ (|{\cal{P}}|-1)^{w_L} \binom{w_L}{c-1} \binom{k N}{c} \binom{min \left(k N, \eta k (w_L + d) \right) + 1}{w_L -c+1}\]
\end{lemma}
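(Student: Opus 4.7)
The plan is to produce the bound via three independent enumeration stages once $w_L$ and $c$ are fixed: the non-identity letter values, the detour trace structure $(v_1,\ldots,v_c)$, and the positions $p_1<\cdots<p_{w_L}$ in $\{1,\ldots,kN\}$. The first stage is immediate: each of the $w_L$ non-identity letters takes any value in ${\cal{P}}\setminus\{I\}$, yielding the factor $(|{\cal{P}}|-1)^{w_L}$. For the trace structure, the indices $v_1<\cdots<v_c\le w_L$ with $v_1\in\{0,1\}$ number at most $2\binom{w_L}{c-1}$, and the constant $2$ is absorbed into the slack of the later binomial factors.

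For the positions, I split $(p_1,\ldots,p_{w_L})$ into the $c$ ``detour starts'' $(p_{v_i})_{i=1}^c$ and the $w_L-c$ ``non-start'' positions $p_j$ with $j\notin\{v_1,\ldots,v_c\}$. The detour starts form a strictly increasing sequence in $\{0,1,\ldots,kN\}$ (with $p_0=0$ allowed when $v_1=0$), contributing at most $\binom{kN+1}{c}$ choices, which I bound by $\binom{kN}{c}$ up to a harmless constant. Each non-start $p_j$ lying in detour $i$ satisfies $p_{v_i}<p_j\le p_{v_i}+\delta p^{(i)}$, and the preceding lemma provides the key span constraint $\sum_i\delta p^{(i)}\le B:=\min\bigl(kN,\,\eta k(w_L+d)\bigr)$.

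The crucial combinatorial step is a compression map $p_j\mapsto \hat p_j := p_j-p_{v_{i(j)}}+\sum_{k<i(j)}\delta p^{(k)}$, where $i(j)$ denotes the detour containing the index $j$. This map concatenates the $c$ disjoint detour intervals into a single interval of length at most $B$. Within each detour the $\hat p_j$ are strictly increasing by construction; between consecutive detours, the last reduced value in detour $i$ equals $\sum_{k\le i}\delta p^{(k)}$, whereas the first reduced value in detour $i+1$ is at least $1+\sum_{k\le i}\delta p^{(k)}$, so the whole sequence $(\hat p_j)$ is strictly increasing in $[1,B]$. Given the fixed structural data $(v_i,p_{v_i})$, the map is injective, since one recovers $\delta p^{(i)}$ inductively as $\hat p_{v_{i+1}-1}-\sum_{k<i}\delta p^{(k)}$. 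Therefore the number of non-start configurations is at most $\binom{B}{w_L-c}\le\binom{B+1}{w_L-c+1}$. Multiplying the three stage counts yields the claimed bound. The main obstacle is pinning down injectivity in the degenerate case where a detour has length $1$ (so $\delta p^{(i)}=0$ and that detour contributes no non-start position): the reduced sequence then lacks an explicit marker at the boundary of that detour, but the fixed structural knowledge of $(v_1,\ldots,v_c)$ still lets one read off the $\delta p^{(k)}$'s unambiguously, so the injection survives.
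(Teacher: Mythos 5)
Your overall decomposition — letter values, detour structure $(v_1,\ldots,v_c)$, detour-start positions, and the compression map for the remaining positions — is the same as the paper's, and your injection argument (including the observation about reading off the $\delta p^{(k)}$ from the fixed $(v_i,p_{v_i})$ even when a detour is degenerate) matches the paper's essentially step for step.

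However the bookkeeping of the $v_1\in\{0,1\}$ dichotomy has a genuine gap that prevents you from landing on the stated bound. First, your split into ``$c$ detour starts and $w_L-c$ non-starts'' is only correct when $v_1=1$. When $v_1=0$, the position $p_{v_1}=p_0=0$ is a convention, not one of the $w_L$ real positions: among $p_1,\ldots,p_{w_L}$ only $p_{v_2},\ldots,p_{v_c}$ are detour starts, so there are $w_L-c+1$ non-starts, not $w_L-c$. Second, the step ``$\binom{kN+1}{c}$, which I bound by $\binom{kN}{c}$ up to a harmless constant'' goes in the wrong direction: $\binom{kN+1}{c}>\binom{kN}{c}$, and the ratio can be as large as about $2$ when $c\sim kN/2$. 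Combined with the explicit extra factor of $2$ you introduce in the trace count, what you actually establish is roughly $4$ times the lemma's right-hand side — adequate downstream for the $O(1)^w$ corollary, but not a proof of the lemma as stated. The paper resolves all of this simultaneously by keeping the two $v_1$ cases separate to the very end: when $v_1=0$ the start-position count is $\binom{kN}{c-1}\le\binom{kN}{c}$ and the non-start count is bounded by $\binom{B}{w_L-c+1}$, while when $v_1=1$ they are $\binom{kN}{c}$ and $\binom{B}{w_L-c}$; then Pascal's identity $\binom{B}{w_L-c}+\binom{B}{w_L-c+1}=\binom{B+1}{w_L-c+1}$ turns the sum over the two cases into exactly the last binomial factor of the lemma, with no spurious factor of $2$ and no wrong-way inequality. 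That Pascal step is the missing ingredient you need.
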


\begin{proof}
There are two possible values for the starting point $w_1$ of the first detour, either $0$ or $1$. In both cases, the number of possible starting points $w_i$, $2 \leq i \leq c$ of the $c-1$ remaining detours is upper bounded by $\binom{w_L}{c-1}$, since the $c-1$ remaining values are in the interval $[\![1,v_c]\!]$. For each choice of these starting points, the number of possible values for the positions $(p_{v_i})_{1 \leq i \leq c}$, is at most $\binom{k N}{c}$.\\

Now, consider that the sequences $(v_i)_{1 \leq i \leq c}$ and $(p_{v_i})_{1 \leq i \leq c}$ are fixed. The number of positions of the non identity letters in $(L_1,...,L_N)$ which are still unfixed is equal to $w_L-c$ if $v_1 = 0$ and $w_L-c+1$ if $v_1 = 1$. This sequence of remaining positions can be written:
\[(p_j)_{v_i < j < v_{i+1},\ 1 \leq i \leq c}\]
where as defined previously, $v_{c+1} - 1 = w_L$. In order to upper bound the number of such possible remaining sequences, we will rely on the fact that $\sum_{i=1}^{c} \delta p^{(i)}$, which we know is upper bounded, is intuitively the space in which the remaining positions are confined. Since this sequence is strictly growing, it is equivalent to upper bound the number of possible corresponding sets. Such a corresponding set can be written as:
\[\displaystyle\bigcup_{i=1}^{c}{\cal{I}}_i\]
where:
\[{\cal{I}}_i = \{p_j, v_i < j < v_{i+1}\}\ .\]
In the following, we exhibit a reversible transformation of this set into a set of equal number of elements and included in the interval $[\![1,\sum_{i=1}^{c} \delta p^{(i)}]\!]$. Define $s_1 = 0$ and for $i \in [\![2,c+1]\!]$, let $s_i = \sum_{i'=1}^{i-1} \delta p^{(i')}$. Consider the transformation which transforms the set of remaining positions into :
\[\displaystyle\bigcup_{i=1}^{c}\bar{\cal{I}}_i\]
where:
\[\bar{\cal{I}}_i = \{\bar{p}_j = p_j - p_{v_i} + s_i, v_i < j < v_{i+1} \}\ .\]
This function relies on the knowledge of $v_i$ and $p_{v_i}$. Let us show that this transformation is injective by showing explicitely how to recover the antecedent of a given image set.\\

First, let us show that in the image set, any two elements $\bar{p}_j$ and $\bar{p}_{j'}$ respect the order of their indexes, ie they verify $\bar{p}_j < \bar{p}_{j'}$ if and only if $j < j'$.\\

For each $i \in [\![1,c]\!]$, and for each couple $(\bar{p}_j,\bar{p}_{j'}) \in {\bar{\cal{I}}_i}^2$ where $j \le j'$:
\begin{align*}
\bar{p}_j & \ge 1 + s_i & \mathrm{because}\ & p_j - p_{v_i} \ge 1\\
\bar{p}_{j'} & \le s_{i+1} & \mathrm{because}\ & p_{j'} - p_{v_i} \le \delta p^{(i)}\\
\bar{p}_j & \le \bar{p}_{j'} & \mathrm{because}\ & p_j \le p_{j'}
\end{align*}
If we compare two elements in a same set $\bar{\cal{I}}_i$, the third inequality proves that they respect the order of their indexes, whereas if $\bar{p}_j \in \bar{\cal{I}}_i$ and $\bar{p}_{j'} \in \bar{\cal{I}}_{i'}$ with $i < i'$, we can use the first two inequalities to get:
\[\bar{p}_j \le s_{i+1} \le 1 + s_{i'} \le 1 + \bar{p}_{j'}\]
This implies that, by looking at the position of an element respectively to the others in the image set, we know necessarily the value of its corresponding index $j$. Next, the $c+1$ values of $s_i$ can be recovered recursively as follows. We start with $s_1 = 0$. Then, for each $i \in [\![1, c ]\!]$, either $\bar{\cal{I}}_i$ is empty, in which case $v_{i+1} = v_i+1$ and $s_{i+1} = s_i$, or it is not, in which case $s_{i+1}$ is equal to the last element of $\bar{\cal{I}}_i$: $\bar{p}_{v_{i+1}-1} = \delta p^{(i)} + s_i = s_{i+1}$. Now that all the values of $p_j$ are known, and since the values of $p_{v_i}$ are also known, all the values of $p_j$ can be recovered. Finally, the fact that all the values $\bar{p}_j$ are confined in the set $[\![1, \sum_{i=1}^{c} \delta p^{(i)} ]\!]$ comes from the first two inequalities.\\

This proves respectively that, when $v_1 = 0$ and when $v_1 = 1$, the number of possible sequences of remaining positions $(p_j)_{v_i < j < v_{i+1},\ 1 \leq i \leq c}$ is upper bounded by the number of subsets of $[\![1, min \left(k N, \eta k (w_L + d) \right) ]\!]$ of respective number of elements $w_L - c$ and $w_L -c+1$. This gives two binomials, the sum of which is:
\[\binom{min \left(k N, \eta k(w_L+d) \right)+1}{w_L-c+1}\]
Finally, the proof is completed by noticing that each of the $w_L$ non identity letters takes at most $|{\cal{P}}|-1$ values.
\end{proof}

Now let us release the constraint on the number of detours $c$:
\begin{lemma}
Suppose that $w_L \ge 0$. The number of possible values for the sequence $(L_1,...,L_N)$ is upper bounded by:
\[2^{w_L}(|{\cal{P}}|-1)^{w_L} \binom{k N + 1}{\lfloor \frac{w_L}{2} \rfloor + 1} \binom{min \left(k N, \eta k(w_L+d)\right)+1}{\lceil \frac{w_L}{2} \rceil}\]
\end{lemma}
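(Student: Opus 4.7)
The plan is to start from the preceding lemma, which bounds the count conditionally on the number of detours $c$ being fixed, and then to sum the resulting bound over $c$ in the range $\{1,\ldots,\lfloor w_L/2\rfloor+1\}$. The upper bound $c\le w_L/2+1$ was established just above: each terminating detour contributes at least two symbols to the trace and the last detour contributes at least one. After pulling the common factor $(|{\cal{P}}|-1)^{w_L}$ out of the sum, the lemma reduces to the purely combinatorial inequality
\[
\sum_{c=1}^{\lfloor w_L/2\rfloor+1} \binom{w_L}{c-1}\binom{kN}{c}\binom{M+1}{w_L-c+1} \;\le\; 2^{w_L}\binom{kN+1}{\lfloor w_L/2\rfloor+1}\binom{M+1}{\lceil w_L/2\rceil},
\]
where $M=\min(kN,\,\eta k(w_L+d))$.

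My plan is to decouple the factor $\binom{w_L}{c-1}$ from the remaining product. The partial sum $\sum_{c=1}^{\lfloor w_L/2\rfloor+1}\binom{w_L}{c-1}=\sum_{c'=0}^{\lfloor w_L/2\rfloor}\binom{w_L}{c'}$ is bounded by $2^{w_L}$, so it suffices to prove the uniform estimate
\[
g(c) \;:=\; \binom{kN}{c}\binom{M+1}{w_L-c+1} \;\le\; \binom{kN+1}{\lfloor w_L/2\rfloor+1}\binom{M+1}{\lceil w_L/2\rceil}
\]
for every $c\in\{1,\ldots,\lfloor w_L/2\rfloor+1\}$, and then combine the two bounds.

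For this uniform estimate I would exploit the log-concavity (and hence unimodality) of $g$. Its successive ratio
\[
\frac{g(c+1)}{g(c)} \;=\; \frac{(kN-c)(w_L-c+1)}{(c+1)(M+c+1-w_L)}
\]
is decreasing in $c$, and the constraint $M\le kN$ implies that the unimodal maximizer $c^{\star}$ of $g$ satisfies $c^{\star}\ge(w_L+1)/2\ge\lceil w_L/2\rceil$. Hence on the range $[1,\lfloor w_L/2\rfloor+1]$ the function $g$ is essentially non-decreasing and its maximum is attained at (or just before) the right endpoint $c=\lfloor w_L/2\rfloor+1$, where it equals $\binom{kN}{\lfloor w_L/2\rfloor+1}\binom{M+1}{\lceil w_L/2\rceil}$ and is trivially dominated by the target using $\binom{kN}{\cdot}\le\binom{kN+1}{\cdot}$.

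The step I expect to be the main obstacle is the regime $w_L \ll M$. There $\binom{M+1}{w_L-c+1}$ can be considerably larger than $\binom{M+1}{\lceil w_L/2\rceil}$ for small $c$, so a term-by-term comparison of the two $\binom{M+1}{\cdot}$ factors alone fails. The compensation must come jointly from $\binom{kN}{c}$ being correspondingly small and from the extra slack provided by Pascal's identity $\binom{kN+1}{\lfloor w_L/2\rfloor+1}=\binom{kN}{\lfloor w_L/2\rfloor+1}+\binom{kN}{\lfloor w_L/2\rfloor}$. Using $M\le kN$ together with a closed-form computation of the ratio $g(c)\big/\bigl[\binom{kN+1}{\lfloor w_L/2\rfloor+1}\binom{M+1}{\lceil w_L/2\rceil}\bigr]$, this should reduce to an elementary inequality that I would verify by induction on the gap $\delta=\lfloor w_L/2\rfloor+1-c$.
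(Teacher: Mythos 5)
Your overall strategy is the same as the paper's: pull out $(|{\cal{P}}|-1)^{w_L}$, bound $\sum_{c}\binom{w_L}{c-1}\le 2^{w_L}$, and then argue that the remaining product of two binomials is controlled by its value near the right endpoint $c_{\max}=\lfloor w_L/2\rfloor+1$. However, there is a genuine gap in your monotonicity step, and the repair you sketch is not the right one.

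Your key assertion --- that the unimodal maximizer $c^{\star}$ of $g(c)=\binom{kN}{c}\binom{M+1}{w_L-c+1}$ satisfies $c^{\star}\ge(w_L+1)/2$ --- is false when $w_L$ is even and $M=kN$ (which can certainly occur, since $M=\min(kN,\eta k(w_L+d))$ and nothing forbids $\eta k(w_L+d)\ge kN$). Writing $w_L=2p$ and $M=kN$, your own ratio formula gives
\[
\frac{g(p+1)}{g(p)}=\frac{(kN-p)(p+1)}{(p+1)(M+1-p)}=\frac{kN-p}{kN+1-p}<1,
\]
so the maximum of $g$ over $\{1,\dots,c_{\max}\}$ is attained at $c=p=\lceil w_L/2\rceil$, which is strictly below $(w_L+1)/2$ and strictly below $c_{\max}$. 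You then only bound $g(c_{\max})$, which in this case is \emph{not} the maximum, so the ``trivially dominated'' step does not close the argument. (It happens that $g(p)\le\binom{kN+1}{p+1}\binom{M+1}{p}$ still holds --- the ratio is $(M+1-p)/(kN+1)\le1$ --- but that is a separate verification you would still need to do.) Your identified ``main obstacle'' in the regime $w_L\ll M$ is actually a red herring: unimodality already handles it, and Pascal's identity plus induction on $\delta$ is not the natural tool here.

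The paper sidesteps the case split by replacing $kN$ with $kN+1$ \emph{before} maximizing, i.e.\ it works with $f(c)=\binom{kN+1}{c}\binom{M+1}{w_L-c+1}\ge g(c)$. Then
\[
\frac{f(c+1)}{f(c)}=\frac{kN+1-c}{M+1-(w_L-c)}\cdot\frac{w_L-c+1}{c+1}
\]
is a product of two factors that are \emph{each} $\ge1$ for $c\le c_{\max}-1$ (the first because $M+1\le kN+1$ and $w_L-c\ge c$, the second because $w_L\ge 2c$). So $f$ is non-decreasing on the whole range with no exception, the max is at $c_{\max}$, and $g(c)\le f(c)\le f(c_{\max})=\binom{kN+1}{\lfloor w_L/2\rfloor+1}\binom{M+1}{\lceil w_L/2\rceil}$ follows immediately. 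The $kN\mapsto kN+1$ substitution is exactly the slack that makes the monotonicity clean; once you make it, the obstacle you flagged evaporates.
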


\begin{proof}
The sequence of $N$ identity errors is the only one such that $w_L = 0$. Thus if $w_L = 0$, the bound is true. Now we suppose that $w_L \ge 1$. For a purpose of lisibility, let ${\cal{B}} = min \left(k N, \eta k (w_L + d) \right) + 1$, and let $c_{max} = \lfloor \frac{w_L}{2} \rfloor + 1$ be the maximal value for the number $c$ of detours. By the previous lemma, he number of possible values for the sequence $(L_1,...,L_N)$ is upper bounded by:
\[(|{\cal{P}}|-1)^{w_L} \displaystyle\sum_{c=1}^{c_{max}} \binom{w_L}{c-1} \displaystyle\max_{1 \le c \le c_{max}} \binom{k N}{c} \binom{{\cal{B}}}{w_L -c+1}\]
Replacing $k N$ by $k N + 1$ in the previous expression keeps the inequality true and makes the analysis easier. Let us maximize over $c$, and under the constraint $c \le c_{max}$, the expression:
\[f(c) = \binom{k N + 1}{c} \binom{{\cal{B}}}{w_L -c+1}\]
For all $c \le c_{max}-1$, the ratio $r(c)$ between two successive values is:
\[r(c) = \frac{f(c+1)}{f(c)} = \frac{k N+1-c}{{\cal{B}} -(w_L-c)}\frac{w_L-c+1}{c+1}\]
When $c \le c_{max}-1$, $c \le w_L/2$, and thus $c \le w_L - c$. Besides, ${\cal{B}} \le k N+1$. Thereby, $r(c) \ge 1$ for all $c \le c_{max}-1$, which proves that the maximum is reached at $c = c_{max}$. When $c$ takes this value, $w_L -c+1$ is equal to $\lceil \frac{w_L}{2} \rceil$; this comes from the fact that $\lfloor \frac{w_L}{2} \rfloor + \lceil \frac{w_L}{2} \rceil = w_L$. Thus:
\[\displaystyle\max_{1 \le c \le c_{max}} \binom{k N}{c} \binom{{\cal{B}}}{w_L -c+1} = \binom{k N + 1}{\lfloor \frac{w_L}{2} \rfloor + 1} \binom{min \left(k N, \eta k(w_L+d)\right)+1}{\lceil \frac{w_L}{2} \rceil}\]
Moreover:
\[\displaystyle\sum_{c=1}^{c_{max}} \binom{w_L}{c-1} \le 2^{w_L}\]
and this proves the lemma.
\end{proof}

Finally comes the proof of Theorem \ref{Inner}:

\begin{proof}
The number of memory errors $M$ of given weight $w_M$ is:
\[(|{\cal{P}}|-1)^{w_M} \binom{m}{w_M}\]
Thus by the previous lemma, the number of possible sequences $(M,L_1,...,L_N)$ verifying the conditions of the theorem and such that $|M| = w_M$ is upper bounded by:
\[(|{\cal{P}}|-1)^{w}\binom{m}{w_M} g(w_L)\]
where $w_L = w-w_M$ and where the function $g$ is given by:
\[g(w_L) = 2^{w_L} \binom{k N + 1}{\lfloor \frac{w_L}{2} \rfloor + 1} \binom{\eta k(w+d)+1}{\lceil \frac{w_L}{2} \rceil}\]
Again, we search for the maximum of $g$ for all values of $w_L$ under the constraint $w_L \le w \le k N$. If $w_L \le w-1$ then these two inequalities:
\[\lfloor \frac{w_L}{2} \rfloor + 1 \le \frac{1}{2}(k N+1)\]
and:
\[\lceil \frac{w_L}{2} \rceil \le  \frac{1}{2}(\eta k(w+d)+1)\]
show that $g(w_L+1) > g(w_L)$. Then, $g$ reaches its maximum at $w_L = w$. The proof is completed by the majoration:
\[\displaystyle\sum_{w_M=0}^{m} \binom{m}{w_M} \le 2^m\]
\end{proof}

\section*{Appendix B: partial sums}

\subsection*{B.1: Polynomial case}

Proof of Lemma \ref{First poly}:
\begin{proof}{\bf First partial sum, poly case}
Let $d = N^{\alpha}$. We use the following property:
\[p_N(w,\le d) \le \frac{a^{\otimes N}(w)\ a_{N_{in}}(w,\le d)}{{(|{\cal{P}}| - 1)}^w \binom{N n_{out}}{w}}\]
with the following bound derived from the bound (1E), valid because $d_q \ge 2$:
\begin{align*}
a^{\otimes N}(w) & \le O(1)^w \left(\frac{N}{w}\right)^{\frac{w}{d_q}} & \mathrm{if}\ w \ge d_c \\
& = 0 & \mathrm{if} w < d_c
\end{align*}
and the bound (1I), valid because $C_{in}$ is recursive:
\[ a_{N_{in}}(w,\le d) \le O(1)^w \frac{N_{in}^{\frac{w}{2}} (w+d)^{\frac{w}{2}}}{w^w}\]
Using the fact that $w \le d$ and $N_{in} = O(N)$, this bound gives :
\[a_{N_{in}}(w,\le d) \le O(1)^w \frac{N^{\frac{w}{2}} d^{\frac{w}{2}}}{w^w}\]
We lower bound the denominator using the following lower bound on binomials:
\[\forall (u,v) \in \mathbb{N}^2 / v \le u, \binom {u}{v} \ge \left(\frac{u}{v}\right)^v\]
which, together with the facts that $|{\cal{P}}| \ge 2$ and $n_{out} \ge 1$, implies :
\[{(|{\cal{P}}| - 1)}^w \binom{N n_{out}}{w} \ge \left(\frac{N}{w}\right)^w\]
As a consequence we obtain, when $w \ge d_c$ :
\begin{align*}
p_N(w,\le d) \le & O(1)^w \left(\frac{N}{w}\right)^{\frac{w}{d_q}} \frac{N^{\frac{w}{2}} d^{\frac{w}{2}}}{w^w} \left(\frac{N}{w}\right)^{-w}\\
\le & O(1)^w N^{\frac{w}{d_q}} N^{\frac{w}{2}} d^{\frac{w}{2}} N^{-w}\\
\le & {\left[O(1) N^{{\frac{1}{d_q}-\frac{1}{2}}} d^{\frac{1}{2}} \right]}^w
\end{align*}
and when $w < d_c$, $p_N(w,\le d) = 0$.
Let us simply replace $d$ by its value $N^{\alpha}$:
\begin{align*}
p_N(w,\le N^{\alpha}) \le {\left[O(1) N^{{\frac{1}{d_q}+\frac{\alpha}{2}-\frac{1}{2}}} \right]}^w
\end{align*}
The exponent verifies:
\[{{\frac{1}{d_q}+\frac{\alpha}{2}-\frac{1}{2}}} = \frac{1}{2}\left(\frac{2-d_q}{d_q}+\alpha\right) < 0\]
And since $p_N(w,\le N^{\alpha}) = 0$ when $w < d_c$, this implies:
\begin{align*}
\displaystyle\sum_{w=0}^{\lfloor N^{\alpha} \rfloor} p_N(w,\le N^{\alpha}) & \le \displaystyle\sum_{w=d_c}^{\infty} {\left(O(1)N^{{\frac{1}{d_q}+\frac{\alpha}{2}-\frac{1}{2}}}\right)}^w\\
& \le O(1) N^{({\frac{1}{d_q}+\frac{\alpha}{2}-\frac{1}{2}}) d_c}
\end{align*}
This proves that the first partial sum tends to $0$ as $N \rightarrow \infty$.
\end{proof}

Proof of Lemma \ref{Second poly}:
\begin{proof}{\bf Second partial sum, poly case}
We use:
\[p_N(w,d) \le \frac{a^{\otimes N}(w)\ a_{N_{in}}(w,d)}{{(|{\cal{P}}| - 1)}^w \binom{N n_{out}}{w}}\]
with the same bound as before derived from (1E), valid because $d_q \ge 2$:
\begin{align*}
a^{\otimes N}(w) & \le O(1)^w \left(\frac{N}{w}\right)^{\frac{w}{d_q}} & \mathrm{if}\ w \ge d_c \\
& = 0 & \mathrm{if} w < d_c
\end{align*}
and with the bound (2I), valid because $\bar{C_{in}}$ is totally recursive:
\[ a_{N_{in}}(w, d) \le O(1)^d \frac{N_{in}^{\frac{d}{2}} (w+d)^{\frac{d}{2}}}{d^d}\]
Since $N_{in} = O(N)$ and $w > d$, this last bound can be written in the following form:
\[a_{N_{in}}(w,d) \le O(1)^w \frac{N^{\frac{d}{2}} w^{\frac{d}{2}}}{d^d}\]
Let us also lower bound the denominator as previously by:
\[\left(\frac{N}{w}\right)^w\]
As a consequence:
\begin{align*}
p_N(w,d) & \le {O(1)}^w \left(\frac{N}{w}\right)^{\frac{w}{d_q}} \frac{N^{\frac{d}{2}} w^{\frac{d}{2}}}{d^d} \left(\frac{N}{w}\right)^{-w} \\
& \le {\left[O(1){\left(\frac{N}{w}\right)}^{\frac{1}{d_q}-1}\right]}^w \frac{N^{\frac{d}{2}} w^{\frac{d}{2}}}{d^d}
\end{align*}
Let us upper bound $O(1)$ by a constant $a > 0$, and take the logarithm of the last inequality:
\[\log(p_N(w,d)) \le f(w,d) = w(\log a + (1-1/{d_q})(\log w - \log N)) + \frac{d}{2}(\log N + \log w - 2 \log d)\]
Let us show that there exists $x$ sufficiently small such that, for $N$ sufficiently large, the maximum of $f(w,d)$ in the domain $(w,d) \in [\![\lfloor N^{\alpha} \rfloor,\lfloor x N \rfloor - 1]\!] \times [\![0,\lfloor N^{\alpha} \rfloor]\!]$ is reached when $w = d = \lfloor N^{\alpha} \rfloor$. First, consider the derivative of $f$ with respect to $w$:
\[\log a + (1-1/{d_q})(\log w - \log N + 1) + \frac{d}{2}\frac{1}{w}\]
Since $\frac{d}{2}\frac{1}{w} \le \frac{1}{2}$ and $w \le x N$, this is upper bounded by:
\[\log a + (1-1/{d_q})(\log x + 1) + \frac{1}{2}\]
For a sufficiently small $x$, this upper bound is negative. Indeed, when $x \rightarrow 0$, the above expression is dominated by the term:
\[(1-1/{d_q})\log x\]
which tends to $-\infty$. This shows that, provided $x$ is small enough, $f(w,d)$ is maximum when $w = \lfloor N^{\alpha} \rfloor$. Now, suppose that $w = \lfloor N^{\alpha} \rfloor$, and consider the derivative of $f$ with respect to $d$:
\[\frac{1}{2}(\log N + \log \lfloor N^{\alpha} \rfloor - 2 \log d - 2)\]
This is a decreasing function of $d$. Thus the minimum of this derivative is reached at $d = \lfloor N^{\alpha} \rfloor$ and is equal to:
\[\frac{1}{2}(\log N - \log \lfloor N^{\alpha} \rfloor - 2)\]
When $N \rightarrow \infty$ this minimum is equivalent to:
\[\frac{1 - \alpha}{2}\log N\]
Since $\alpha < 1$, for a sufficiently large $N$ this minimum is positive. Consequently $f(w,d)$ is maximum when $w = d = \lfloor N^{\alpha} \rfloor$. The maximum of $f$ in the domain is equal to:
\begin{align*}
f(\lfloor N^{\alpha} \rfloor,\lfloor N^{\alpha} \rfloor) = & \lfloor N^{\alpha} \rfloor(\log a + (1-1/{d_q})(\log \lfloor N^{\alpha} \rfloor - \log N)) + \frac{\lfloor N^{\alpha} \rfloor}{2}(\log N - \log \lfloor N^{\alpha} \rfloor)\\
= & \lfloor N^{\alpha} \rfloor(\log a + (1/2-1/{d_q})(\log \lfloor N^{\alpha} \rfloor - \log N))
\end{align*}
When $N \rightarrow \infty$ this maximum is equivalent to:
\begin{align*}
f(\lfloor N^{\alpha} \rfloor,\lfloor N^{\alpha} \rfloor) \sim - (1/2-1/{d_q})(1 - \alpha) N^{\alpha} \log N
\end{align*}
Now in the sum:
\[\displaystyle\sum_{d=0}^{\lfloor N^{\alpha} \rfloor}\displaystyle\sum_{w= \lfloor N^{\alpha} \rfloor + 1}^{\lfloor x N \rfloor - 1} p_N(w,d)\]
the number of terms is upper bounded by $x N^2$. Thus the logarithm of the sum is upper bounded by:
\[\log x + 2 \log N + f(\lfloor N^{\alpha} \rfloor,\lfloor N^{\alpha} \rfloor) \sim - (1/2-1/{d_q})(1 - \alpha) N^{\alpha} \log N\]
This equivalent tends to $- \infty$, which proves that the second partial sum tends to $0$ if we chose a sufficiently small $x$.
\end{proof}

Proof of Lemma \ref{Third poly}:
\begin{proof}{\bf Third partial sum, poly case}
Again we use:
\[p_N(w,d) \le \frac{a^{\otimes N}(w)\ a_{N_{in}}(w,d)}{{(|{\cal{P}}| - 1)}^w \binom{N n_{out}}{w}}\]
According to the bound (2E), valid because $|{\cal{P}}| > 2$ and $d_q \ge 2$, there exists a constant $c \in ]0,1[$ such that:
\[a^{\otimes N}(w) \le c^w (|{\cal{P}}| - 1)^w \binom{N n_{out}}{w}\]
Let us also use the bound (2I), valid because $\bar{C_{in}}$ is totally recursive:
\[ a_{N_{in}}(w, d) \le O(1)^d \frac{N_{in}^{\frac{d}{2}} (w+d)^{\frac{d}{2}}}{d^d}\]
Since $N_{in} = O(N)$, $d \le N$ and $w \le N n_{out}$, this bound gives:
\[a_{N_{in}}(w,d) \le O(1)^d N^d\]
With these two bounds we obtain:
\[p_N(w,d) \le c^w O(1)^d N^d\]
Thus:
\begin{align*}
\log(p'_N(w,d)) & \le w \log c + d (O(1) + \log N)\\
& \le \lfloor x N \rfloor \log c + N^{\alpha} (O(1) + \log N)
\end{align*}
where the last line is obtained using $\log c < 0$, $w \ge \lfloor x N \rfloor$ and $d \le N^{\alpha}$.
Since the third partial sum contains less than $n_{out} N^2$ terms, its logarithm is upper bounded by:
\[\log n_{out} + 2 \log N + \log(p'_N(w,d)) \le \log n_{out} + 2 \log N +  \lfloor x N \rfloor \log c + N^{\alpha} (O(1) + \log N)\]
Since $\alpha < 1$, this upper bound is equivalent to:
\[x N \log c\]
and this proves that the sum tends to $0$ when $N \rightarrow \infty$.
\end{proof}

\subsection*{B.2: Sub-logarithmic case}

Proof of Lemma \ref{First sublog}
\begin{proof}{\bf First partial sum, sub-log case}
Let $d = \alpha \mathrm{llog}\ N$. We use:
\[p_N(w,\le d) \le \frac{a^{\otimes N}(w)\ a_{N_{in}}(w,\le d)}{{(|{\cal{P}}| - 1)}^w \binom{N n_{out}}{w}}\]
Let us use this result from the bound (1E), valid because $d_q \ge 2$:
\begin{align*}
a^{\otimes N}(w) & \le O(1)^w \left(\frac{N}{w}\right)^{\frac{w-d_c}{2}+1}
\end{align*}
and the bound (1I), valid because $C_{in}$ is recursive:
\[ a_{N_{in}}(w,\le d) \le O(1)^w \frac{N_{in}^{\frac{w}{2}} (w+d)^{\frac{w}{2}}}{w^w}\]
Since $w \le d$ and $N_{in} = O(N)$ this bound gives:
\[ a_{N_{in}}(w,\le d) \le O(1)^w \frac{N^{\frac{w}{2}} d^{\frac{w}{2}}}{w^w}\]
Let us also lower bound the denominator:
\[{(|{\cal{P}}| - 1)}^w \binom{N n_{out}}{w} \ge \left(\frac{N}{w}\right)^w\]
By combining these three bounds we get:
\begin{align*}
p_N(w,\le d) \le & O(1)^w \left(\frac{N}{w}\right)^{\frac{w-d_c}{2}+1} \frac{N^{\frac{w}{2}} d^{\frac{w}{2}}}{w^w} \left(\frac{N}{w}\right)^{-w}\\
\le & O(1)^w N^{\frac{w-d_c}{2}+1} N^{\frac{w}{2}} d^{\frac{w}{2}} N^{-w}\\
\le & [O(1)d]^{\frac{w}{2}} N^{\frac{-d_c}{2}+1}
\end{align*}
The last bound is an increasing function of $w$. Applied at $w = \alpha \mathrm{llog}\ N$, it is thus a bound for each term of the first partial sum:
\[\displaystyle\sum_{w=0}^{\lfloor \alpha\ \mathrm{llog}\ N \rfloor} p_N(w, \le \alpha\ \mathrm{llog}\ N) \le (\alpha \mathrm{llog}\ N + 1) {\left[O(1)\alpha\ \mathrm{llog}\ N\right]}^{\frac{\alpha\ \mathrm{llog}\ N}{2}} N^{-\frac{d_c}{2}+1}\]
Now, notice that:
\begin{align*}
{O(1)}^{\frac{\alpha\ \mathrm{llog}\ N}{2}} = \exp \left(O(1)\frac{\alpha\ \log N}{2 \log(\log N)}\right) = N^{\frac{O(1)}{\log(\log N)}}
\end{align*}
And using Proposition  \ref{llog^llog}:
\[{\left[\alpha\ \mathrm{llog}\ N\right]}^{\frac{\alpha\ \mathrm{llog}\ N}{2}} \le N^{\frac{\alpha}{2}}\]
Mutliplying these terms we get:
\[\displaystyle\sum_{w=0}^{\lfloor \alpha\ \mathrm{llog}\ N \rfloor} p_N(w, \le \alpha\ \mathrm{llog}\ N) \le (\alpha \mathrm{llog}\ N + 1) N^{\frac{O(1)}{\log(\log N)}} N^{\frac{\alpha - d_c}{2}+1}\]
Since $\frac{\alpha - d_c}{2}+1 < 0$ the first partial sum tends to $0$ when $N \rightarrow \infty$.
\end{proof}

Proof of Lemma \ref{Second sublog}
\begin{proof} {\bf Second partial sum, sub-log case}
We use:
\[p_N(w,d) \le \frac{a^{\otimes N}(w)\ a_{N_{in}}(w,d)}{{(|{\cal{P}}| - 1)}^w \binom{N n_{out}}{w}}\]
Let us apply this result from the bound (1E), valid because $d_q \ge 2$:
\begin{align*}
a^{\otimes N}(w) & \le O(1)^w \left(\frac{N}{w}\right)^{\frac{w-d_c}{2}+1}
\end{align*}
and the bound (2I), valid because $\bar{C_{in}}$ is totally recursive:
\[a_{N_{in}}(w,d) \le O(1)^d \frac{{N_{in}}^{\frac{d}{2}} (w+d)^{\frac{d}{2}}}{d^d}\]
Using the facts that $N_{in} = O(N)$ and $w > d$, this bound gives:
\[a_{N_{in}}(w,d) \le O(1)^w \frac{N^{\frac{w}{2}} w^{\frac{d}{2}}}{d^d}\]
We also lower bound the denominator:
\[{(|{\cal{P}}| - 1)}^w \binom{N n_{out}}{w} \ge \left(\frac{N}{w}\right)^w\]
Combining these bounds gives:
\begin{align*}
p_N(w,d) \le & O(1)^w \left(\frac{N}{w}\right)^{\frac{w-d_c}{2}+1} \frac{N^{\frac{w}{2}} w^{\frac{d}{2}}}{d^d} \left(\frac{N}{w}\right)^{-w}\\
 \le & N^{-\frac{d_c}{2}+1} {\left[O(1){\left(\frac{N}{w}\right)}^{-\frac{1}{2}}\right]}^w \frac{N^{\frac{d}{2}} w^{\frac{d}{2}}}{d^d}
\end{align*}
Let us upper bound $O(1)$ by a constant $a > 0$; then consider the logarithm $f(w,d)$ of the right hand expression apart from the first factor:
\begin{align*}
f(w,d) & = \log \left( {\left[a{\left(\frac{N}{w}\right)}^{-\frac{1}{2}}\right]}^w \frac{N^{\frac{d}{2}} w^{\frac{d}{2}}}{d^d} \right)\\
& = w(\log a + 1/2(\log w - \log N)) + \frac{d}{2}(\log N + \log w - 2 \log d)
\end{align*}
Let us show that there exists $x$ sufficiently small such that, for $N$ sufficiently large, the maximum of $f(w,d)$ in the domain $(w,d) \in [\![\lfloor \alpha\ \mathrm{llog}\ N \rfloor;\lfloor x N \rfloor - 1]\!] \times [\![0;\lfloor \alpha\ \mathrm{llog}\ N \rfloor]\!]$ is reached when $w = d = \lfloor \alpha\ \mathrm{llog}\ N \rfloor$.
The derivative of $f$ with respect to $w$ is equal to:
\[\log a + 1/2(\log w -\log N + 1) + \frac{d}{2 w} \le \log a + 1/2(\log x + 1) + \frac{1}{2}\]
where the inequality comes from the facts that $w \le x N$ and $d \le w$. For a sufficiently small $x$, the right hand side (and thus the partial derivative) is negative. This shows that the maximum of $f(w,d)$ is reached if $w = \lfloor \alpha\ \mathrm{llog}\ N \rfloor$. Moreover, there exists $r \in ]0;1[$ such that the derivative of $f$ with respect to $w$ is upper bounded by $\log r$. By integrating this inequality from $\lfloor \alpha\ \mathrm{llog}\ N \rfloor$ to $w$ we get the following inequality:
\begin{align}
f(w,d) \le f(\lfloor \alpha\ \mathrm{llog}\ N \rfloor,d) + \log r \left(w - \lfloor \alpha\ \mathrm{llog}\ N \rfloor \right)
\label{pente}
\end{align}
Now suppose that $w = \lfloor \alpha\ \mathrm{llog}\ N \rfloor$, and consider the derivative of $f$ with respect to $d$:
\[ \frac{1}{2}(\log N + \log \lfloor \alpha\ \mathrm{llog}\ N \rfloor - 2 \log d - 2 )\]
This derivative is a decreasing function of $d$; its minimum is reached when $d = \lfloor \alpha\ \mathrm{llog}\ N \rfloor$ where it is equal to:
\[ \frac{1}{2}(\log N - \log \lfloor \alpha\ \mathrm{llog}\ N \rfloor - 2 )\]
which is positive for $N$ large enough. Thus $f(w,d)$ is maximum when $w = d = \lfloor \alpha\ \mathrm{llog}\ N \rfloor$. The maximum of $f$ in the domain is then equal to:
\begin{align*}
& \lfloor \alpha\ \mathrm{llog}\ N \rfloor(\log a + 1/2(\log \lfloor \alpha\ \mathrm{llog}\ N \rfloor - \log N)) + \frac{\lfloor \alpha\ \mathrm{llog}\ N \rfloor}{2}(\log N - \log \lfloor \alpha\ \mathrm{llog}\ N \rfloor)\\
& = \lfloor \alpha\ \mathrm{llog}\ N \rfloor \log a = O(1)\ \mathrm{llog}\ N
\end{align*}
Thus by inequality \ref{pente}, we get for all $(w,d)$ in the domain:
\[f(w,d) \le O(1)\ \mathrm{llog}\ N + \log r \left(w - \lfloor \alpha\ \mathrm{llog}\ N \rfloor \right)\]
This proves in turn that :
\[p_N(w,d) \le N^{-\frac{d_c}{2}+1} N^{\frac{O(1)}{\log(\log N)}} r^{w - \lfloor \alpha\ \mathrm{llog}\ N \rfloor}\]
Thus for each $d \in [\![0;\lfloor \alpha\ \mathrm{llog}\ N \rfloor]\!]$, since $r < 1$:
\begin{align*}
\displaystyle\sum_{w= \lfloor \alpha\ \mathrm{llog}\ N \rfloor + 1}^{\lfloor x N \rfloor - 1} p_N(w,d) \le & \displaystyle\sum_{w= \lfloor \alpha\ \mathrm{llog}\ N \rfloor}^{\infty} N^{-\frac{d_c}{2}+1} N^{\frac{O(1)}{\log(\log N)}} r^{w - \lfloor \alpha\ \mathrm{llog}\ N \rfloor}\\
\le & O(1) N^{-\frac{d_c}{2}+1} N^{\frac{O(1)}{\log(\log N)}}
\end{align*}
Consequently, by summing over $d$, the second partial sum is upper bounded by:
\[O(1)\ (\mathrm{llog}\ N + 1) N^{-\frac{d_c}{2}+1} N^{\frac{O(1)}{\log(\log N)}}\]
which tends to $0$ when $N \rightarrow \infty$ since $d_c > 2$.
\end{proof}

Proof of Lemma \ref{Third sublog} {\bf Third partial sum, sub-log case}
\begin{proof}
The proof is exactly the same as the third partial sum in the polynomial case, by changing every occurrence of $N^{\alpha}$ by $\alpha\ \mathrm{llog}\ N$.
\end{proof}


\begin{thebibliography}{}


\bibitem{KU98a}
{N. Kahale and R. Urbanke}
\newblock On the minimum distance of parallel and serially concatenated codes
\newblock {\em Proc. IEEE Int. Symp. Inf. Theory (ISIT'98)}, p.31, 1998.

\bibitem{PTO09a}
{D. Poulin and J.-P. Tillich and H. Ollivier}
\newblock Quantum serial turbo-codes
\newblock {\em IEEE Trans. Inf. Theory}, 55(6):2776--2798, 2009.

\end{thebibliography}

\end{document}